\documentclass[12pt]{article}
\usepackage{fullpage,amsmath,amsfonts,amssymb,amsthm,epsfig,color, graphicx, natbib, appendix, mathtools, setspace}
\usepackage{hyperref}
\usepackage{xcolor}
\newcommand{\bb}{\mathbb}

\theoremstyle{plain}
\newtheorem{Theorem}{Theorem}
\newtheorem{Cor}{Corollary}
\newtheorem{Prop}{Proposition}
\newtheorem{lemma}{Lemma}
\newtheorem*{Theorem*}{Theorem}

\theoremstyle{definition}
\newtheorem{definition}{Definition}
\newtheorem{example}{Example}
\newtheorem{assumption}{Assumption}

\theoremstyle{remark}

\title{A Theory of Network Games Part I:  Utility Representations\thanks{We are grateful to Krishna Dasaratha, Ben Brooks, Navin Kartik, Elliot Lipnowski, Stephen Morris, Doron Ravid, Peter Wakker and seminar participants at Boston University, UC Berkeley and  Yale University for helpful comments and questions.  Evan Sadler gratefully acknowledges support from the Alfred P. Sloan Foundation.}}
\author{Joseph Root\thanks{University of Chicago -- jroot@uchicago.edu}\, and Evan Sadler\thanks{Boston University -- edsadler@bu.edu}}
\date{\today}

\begin{document}

\maketitle

\begin{abstract}
We provide interpretable axiomatic foundations for utilities used in network games and identify several principled generalizations.  First, we demonstrate that a ubiquitous feature of network games, bilateral strategic interactions, is equivalent to having player utilities that are additively separable across opponents.  Common utilities based on a linear aggregate of opponent actions are strategically equivalent to additively separable utilities.  Moreover, assuming real-valued actions, we show that a constant rate of substitution between opponents implies a utility that is \emph{linear} in opponent actions.  Finally, we identify precise conditions---linear best replies and midpoint indifference---that pin down the classic linear-quadratic utility.


\end{abstract}
\newpage
\section{Introduction}

\hspace{1 pc}
For tractability, network games typically employ simple utility functions.  Most common is the linear-quadratic form that \citet{Ballesteretal2006} popularized:  we have actions $S_i = \bb{R}_+$ for each player $i$, and
\begin{equation}\label{eq:linquad}
u_i(\mathbf{s}) = b_i s_i + s_i \sum_{j \neq i} g_{ij} s_j - \frac{s_i^2}{2}
\end{equation}

\noindent for constants $b_i$ and $\{g_{ij}\}_{j \neq i}$.  Using this, researchers study both strategic complements and substitutes (taking $g_{ij}$ positive or negative, respectively) and evaluate targeted interventions. Analytically, these preferences are especially convenient as best responses are linear in opponents' actions, so equilibrium computation reduces to solving a linear system. Nevertheless, we lack a firm grasp on what substantive restrictions these preferences entail.  What are we assuming, and how might we judge whether real preferences are consistent with \eqref{eq:linquad}?

We provide interpretable axioms under which preferences are represented by utilities of this form, and we identify several other canonical utilities that relax key assumptions.  We decompose the problem into two steps. First, we establish conditions under which preferences are separable over opponent actions, admitting a utility of the form
\begin{equation}\label{eq:utility1}
    u_i(\mathbf{s}) = \sum_{j \neq i} g_{ij}(s_i, s_j).
\end{equation}
The key axioms formalize the idea that strategic interactions are bilateral. Roughly speaking, bilateral strategic interactions means that how $j$'s action influences player $i$ is independent of what some other opponent $k$ does. For this step, we need not assume that actions are real-valued.  In a second step, we assume separability and real-valued actions, and we show that three further axioms pin down \eqref{eq:linquad}. A constant rate of substitution across opponents yields utilities that are linear in opponents' actions:
    \begin{equation}\label{eq:linearopp}
    u_i(\mathbf{s}) = b(s_i) + \gamma(s_i) \sum_{j \neq i} g_{ij} s_j.
    \end{equation}
\noindent Linear best replies and a novel axiom, \emph{midpoint indifference}, then uniquely axiomatize \eqref{eq:linquad}. 

 As a starting point for our analysis, we distinguish two senses in which strategic interactions can be bilateral.  Player $i$'s preferences satisfy \emph{opponent independence} if whenever $u_i(s_i, s_j, s'_{-ij}) \geq u_i(s_i, s_j, s_{-ij})$ for some action $s_{j}$, then also $u_i(s_i, s'_j, s'_{-ij}) \geq u_i(s_i, s'_j, s_{-ij})$ for any other action $s'_{j}$.  Player $i$ has a fixed preference ranking over the actions of her other opponents, no matter what $j$ does.  Her preferences satisfy \emph{strategic independence} if we can never have the following four comparisons with at least one strict:
$$u_i(s_i, s_j, s_{-ij}) \geq u_i(s'_i, s_j, s_{-ij}), \quad u_i(s'_i, s'_j, s_{-ij}) \geq u_i(s_i, s'_j, s_{-ij}),$$
$$u_i(s'_i, s_j, s'_{-ij}) \geq u_i(s_i, s_j, s'_{-ij}), \quad u_i(s_i, s'_j, s'_{-ij}) \geq u_i(s'_i, s'_j, s'_{-ij}).$$

\noindent Intuitively, this means that whenever $j$ changes his action from $s_j$ to $s'_j$, this can only shift $i$'s preferences between $s_i$ and $s'_i$ in one direction.  The first two comparisons say that, if other opponents play $s_{-ij}$, then player $j$ switching from $s_j$ to $s'_j$ causes $i$'s preference to flip in favor of $s'_i$.  Strategic independence implies that we cannot then find some other profile $s'_{-ij}$ at which $j$ switching from $s_j$ to $s'_j$ causes $i$'s preference to flip in favor of $s_i$.

These natural properties correspond to utilities that are separable across opponents.  Under mild technical restrictions, Theorem \ref{theo:main1} establishes that if player $i$'s preferences jointly satisfy opponent and strategic independence, then we can represent these preferences via a separable utility as in (\ref{eq:utility1}).  In this case, we say that player $i$ has \emph{separable preferences}.  Moreover, this representation is \emph{cardinal}:  it is unique up to a positive affine transformation. 

While the linear-quadratic utility \eqref{eq:linquad} is clearly separable, common generalizations satisfy strategic independence but not joint independence.\footnote{See, for instance, \citet{Belhajetal2014}, \citet{PariseOzdaglar2019}, and \citet{ZenouZhou2023} who adopt utility functions that depend on linear aggregates of opponent actions.}  Accordingly, such preferences are not separable. However, if we weaken our requirements so that \eqref{eq:utility1} only represents player $i$'s preferences over her \emph{own actions}, an additively separable utility can describe her strategic incentives.  Under stronger technical restrictions, Theorem \ref{theo:main2} shows that, if player $i$'s preferences satisfy strategic independence alone, then a utility of the form \eqref{eq:utility1} represents her preferences over $S_i$ for each opponent profile $s_{-i}$.  As before, the representation is cardinal, and we say that player $i$ has \emph{strategically separable preferences}.  A corollary highlights that preferences based on aggregates of neighbors' actions are strategically separable.

An alternative way to understand these results is that, whenever player $i$'s preferences preclude an additively separable representation, we can always find a small set of comparisons, no more than four, that by themselves demonstrate that fact.  This view illuminates why we need further conditions in Theorem \ref{theo:main2}.  Strategic independence restricts $i$'s preferences over $S_i$, while saying little about her preferences over all of $S$.  Relative to Theorem \ref{theo:main1}, we have far fewer comparisons we can use to show that $i$'s preferences are inconsistent with \eqref{eq:utility1}.  To compensate, we require that opponent action sets are sufficiently rich.  We discuss our conditions in detail in Section \ref{sec:assumptions}.  Both findings build upon an older literature that studies additively separable utilities in consumer theory.  In our proofs, we first apply a Theorem from \citet{Wakker1989} to obtain an additively separable utility for each fixed $s_i \in S_i$.  Our principal innovation is to engineer, from this family of utilities, a single utility over the entire space of strategy profiles, stitching together the utilities for each fixed $s_i$.  





With these results in hand, we turn to axioms that deliver familiar functional forms.  Given real-valued actions, we say that preferences feature a \emph{constant rate of substitution} between opponents $j$ and $k$ if, for some constant $\delta$, increasing $s_j$ by $1$ always has the same effect on $i$'s preferences over $S_i$ as does increasing $s_k$ by $\delta$.  If $i$'s preferences feature a constant rate of substitution between any two opponents, we say that she has \emph{CRS preferences}.  This class contains the linear-quadratic form \eqref{eq:linquad} as well as generalizations like
\begin{equation}\label{eq:linaggutility}
u_i(\mathbf{s}) = b(s_i) + \gamma(s_i) \psi\left(\sum_{j \neq i} g_{ij} s_j\right).
\end{equation}

\noindent As long as the function $\psi$ in \eqref{eq:linaggutility} is strictly monotone, such preferences are strategically separable, but Theorem \ref{theo:CRS} asserts an even stronger conclusion.  We can always represent strategically separable CRS preferences via a utility of the form
\begin{equation}\label{eq:CRSutility}
\tilde{u}_i(\mathbf{s}) = \tilde{b}(s_i) + \tilde{\gamma}(s_i) \sum_{j \neq i} g_{ij} s_j
\end{equation}

\noindent for some functions $\tilde{b}$ and $\tilde{\gamma}$.

As one example, suppose actions are real-valued and
$$u_i(\mathbf{s}) = b_i s_i - s_i^2 + s_i \ln \left(\sum_{j \in G_i} g_{ij} s_j\right).$$

\noindent Higher opponent actions clearly favor higher $s_i$, and for $s'_i > s_i$, one can verify that player $i$ prefers $s'_i$ exactly when 
$\sum_{j \in G_i} g_{ij} s_j > e^{s'_i + s_i - b_i}$.  If instead we take the utility
$$\tilde{u}_i(\mathbf{s}) = -e^{s_i} - e^{b_i - s_i}\sum_{j \neq i} g_{ij}s_j,$$
\noindent which takes the form \eqref{eq:CRSutility}, we can compute the difference $\tilde{u}_i(s'_i, s_{-i}) - \tilde{u}_i(s_i, s_{-i})$ to find again that player $i$ prefers $s'_i$ exactly when $\sum_{j \in G_i} g_{ij} s_j > e^{s'_i + s_i - b_i}$.  While the utility $\tilde{u}_i$ captures the same strategic incentives as $u_i$, our example also illustrates the limits of this equivalence.  According to $u_i$, player $i$ benefits when opponents take higher actions, but according to $\tilde{u}_i$, player $i$ prefers that opponents take lower actions.  Our result thus justifies restricting attention to utilities \eqref{eq:CRSutility} for finding equilibria and studying comparative statics, but welfare analysis requires more caution.


With two further axioms, one familiar and one novel, we arrive at the classic linear-quadratic utility \eqref{eq:linquad}.  Linear best replies are key to the tractability of linear-quadratic payoffs, but assuming linear best replies is not enough on its own because doing so imposes no restrictions on preferences away from the optimal action.  To get payoffs of the form \eqref{eq:linquad} we additionally need \emph{midpoint indifference}:  if $s_i$ is a best response to $s_{-i}$ and $s'_i$ is a best response to $s'_{-i}$, then player $i$ is indifferent between $s_i$ and $s_i'$ at $\frac{s_{-i}+s_{-i}'}{2}$.  These two conditions, on top of a constant rate of substitution, imply that a player's preferences are strategically equivalent to a linear-quadratic utility \eqref{eq:linquad}.

In a final section, we do away with technical restrictions and characterize precisely when a player's preferences are strategically separable.  We show that preferences are strategically separable if, and only if, they do not admit a \emph{balanced sequence}.  Take any two sequences of strategy profiles $\{r^1, r^2,...,r^m\}$ and $\{t^1, t^2,...,t^m\}$ such that $r^k$ and $t^k$ differ only in player $i$'s action, and for each opponent $j$, each pair of actions $(s_i, s_j)$ appears the same number of times in both sequences.  If we can represent $i$'s preferences with a utility of the form \eqref{eq:utility1}, then we must have
$$\sum_{k=1}^m u_i(r^k) = \sum_{k=1}^m u_i(t^k).$$

\noindent Player $i$ therefore cannot prefer $r^k$ to $t^k$ for each pair in the sequence, unless she is always indifferent.  A balanced sequence is precisely a sequence of profile pairs that violates this requirement.  In the main text, we use a variant of Farkas' lemma to prove this result for finite games.  In an appendix, we use logical compactness to extend this to the general case.  Relating to our earlier findings, note that strategic independence rules out particular types of balanced sequences of length $4$.  We show that in the general case, one may, in fact, need to check arbitrarily long sequences to ensure that we can find a separable utility representation.  This last result thus sheds light on what our technical assumptions buy us: checking short sequences is enough.

Why should an economist care about these findings?  First, network games offer a rare example of a rich family that admits tractable equilibrium analysis, which has made them a popular tool for applied modeling, and our results provide interpretable foundations for utilities used in this literature. Second, there is growing consensus that the linear-quadratic framework is too restrictive in many applications, and our results point towards principled generalizations. Third, we provide revealed-preference tests for whether preferences are consistent with these assumptions. Finally, our findings offer a template for examining other natural preference restrictions. For instance, certain types of symmetry, or assumptions on rates of substitution, can help identify other important classes of utility functions.




\subsection*{Related Work}

\hspace{1 pc}
Most studies of network games, including some of the most recent, build directly on the linear-quadratic framework of \citet{Ballesteretal2006}.  This includes several theoretical analyses of public goods \citep{BramoulleKranton2007, Bramoulleetal2014, Gerkeetal2024, BervoetsMakihara2025}, status concerns \citep{Immorlicaetal2017,Langtry2023}, multiple activities \citep{Chenetal2018}, and targeted incentives \citep{Galeottietal2020, Dasarathaetal2025,ClaveriaMayoletal2025}, as well as empirical studies of academic peer effects \citep{CalvoArmengoletal2009, Conleyetal2024} and crime \citep{Diazetal2021}.\footnote{Beyond these references, there is a very large empirical literature on peer effects based on a linear-in-means specification.  The most natural way to microfound these econometric models is through linear-quadratic preferences.}

At the same time, there is increasing interest in moving beyond this restrictive functional form.  Indeed, recent empirical work suggests that estimates based on linear specifications can lead to significant policy errors when peer effects are non-linear \citep{Boucheretal2024}.  So far, theoretical work has taken a simple, pragmatic approach, making incremental adjustments to popular utilities.  In essentially all of these models, preferences depend on a linear aggregate of neighbors' actions:  we have
\begin{equation*}
u_i(\mathbf{s}) = b_i s_i + s_i \psi\left(\sum_{j \neq i} g_{ij} s_j\right) - \frac{s_i^2}{2},
\end{equation*}

\noindent for an increasing function $\psi$ \citep[See, for instance,][]{Belhajetal2014,PariseOzdaglar2019,PariseOzdaglar2023, ZenouZhou2023}.  With a linear $\psi$, this includes the classic model as a special case.  There is also growing interest in endogenizing the network by letting agents choose both their action and the extent of their bilateral links.\footnote{ See, for instance, \citet{sadler2021games} and \cite{cerreia2023nonlinear}.} If agents can choose both the $g_{ij}$ and the $s_i$, the utility representation from \citet{Ballesteretal2006} remains additively separable in the joint action $(s_i,g_i)$. In this way, additively separable preferences can encode richer strategic information than in the classical setting \eqref{eq:linquad}.

We take a fundamentally different approach to generalizing the utilities used in network games.  We highlight key properties of preferences, opponent and strategic independence, and identify a canonical class of utilities that satisfy them.  This makes clear precisely what we assume when we write down a utility of this form.  Moreover, it ensures results derived using this quasi-functional form apply to any preferences that satisfy our independence properties.  Our exercise is close in spirit to \citet{Sadler2022}, which uses a robust notion of centrality to identify players who take higher equilibrium actions in any (symmetric) network game of strategic complements.  

The vast literature on network games includes several models that do not fit within our framework.  In particular, coordination games with discrete action sets \citep[e.g.][]{Blume1993,Morris2000,Reich2023,JacksonStorms2023} necessarily fail to satisfy our richness conditions and therefore fall outside our scope.  While we do present a version of our main result for games with finite action sets in Section \ref{sec:cyclic}, checking the required conditions on preferences is much more onerous.

While our results are most obviously related to the current literature on network games, our techniques draw more heavily from a much older literature on additively separable utility representations.  In the context of consumer theory, \citet{Debreu1959} shows that, with at least three goods available, a condition analogous to opponent independence implies that we can represent a consumer's preferences over bundles $\mathbf{x}$ as
$$u(\mathbf{x}) = \sum_{i = 1}^n v_i(x_i).$$

\noindent Moreover, this utility representation is cardinal.  Versions of this result, with slightly weaker technical assumptions and more direct proofs, appear in \citet{Fishburn1970} and \citet{Wakker1989}---we rely on the last of these in our own arguments.  One can similarly derive cardinal utility representations from orders on \emph{utility differences} \citep[e.g.][]{Kobberling2006}, a method that appears in our proofs.



\section{Utility Representations in Network Games}

\hspace{1 pc}
Recall a normal form game $\Gamma = (N, \{S_i\}_{i \in N}, \{u_i\}_{i \in N})$ consists of a set of players $N$, a set of actions $S_i$ for each player $i \in N$, and a utility $u_i \, : \, \prod_{i \in N} S_i \to \bb{R}$ mapping each profile of actions to a real number. We write $S = \prod_{i \in N} S_i$ for the set of action profiles and $s$ for a generic element of $S$.  Player $i$'s utility $u_i$ represents an underlying preference order $\succeq_i$ on $S$.  Two utility functions $u_i$ and $v_i$ are \textbf{equivalent} if they represent the same order $\succeq_i$.  For each profile $s_{-i}$ of opponent actions, we can restrict $\succeq_i$ to the space $S_i \times \{s_{-i}\}$, to obtain an order $\succeq_{i, s_{-i}}$ that represents $i$'s preferences over $S_i$ when opponents play $s_{-i}$.  The family of orders $\{\succeq_{i, s_{-i}}\}_{s_{-i} \in S_{-i}}$ fully captures $i$'s preferences over her own action choice and are therefore sufficient to describe the strategic properties of $\Gamma$.\footnote{One might object that this family of orders is insufficient to capture preferences over \emph{mixtures} of actions, but at this point, action sets are completely general.  To consider mixed strategies, one could take $S_i$ to be a set of probability distributions.  However, in general, the utility we derive is different from the Von Neumann-Morgenstern utility.  In Appendix \ref{appendix: mixed strategies}, we identify conditions under which they are the same.} Two utility functions $u_i$ and $v_i$ are \textbf{strategically equivalent} if they induce the same family of orders over $S_i$.\footnote{Equivalence is stronger than strategic equivalence.  A utility function $u_i(\mathbf{s})$ is strategically equivalent to $u_i(\mathbf{s}) + f(s_{-i})$ for any function $f(s_{-i})$, but our choice of $f$ can easily change the corresponding preference order on $S$.  Two utilities $u_i$ and $v_i$ are equivalent if and only if $u_i(\mathbf{s}) = f(v_i(\mathbf{s}))$ for some strictly increasing function $f$.  They are strategically equivalent if and only if $u_i(\mathbf{s}) = f(v_i(\mathbf{s}), s_{-i})$ for some function $f$ that is strictly increasing in its first argument.}

We are interested in conditions under which a player's utility $u_i$ is equivalent, or strategically equivalent, to utilities $v_i$ in the following special classes:

\begin{enumerate}
    \item \textbf{Additively separable across opponents:}
    \begin{equation}\label{eq:addseputility}
    v_i(\mathbf{s}) = \sum_{j \neq i} g_{ij}(s_i, s_j)
    \end{equation}

    for some functions $\{g_{ij}\}_{j \neq i}$.

    \item \textbf{Linear in opponent actions}, with real-valued actions:
    \begin{equation}\label{eq:linearopp}
    v_i(\mathbf{s}) = b(s_i) + \gamma(s_i) \sum_{j \neq i} g_{ij} s_j
    \end{equation}

    for some real constants $\{g_{ij}\}_{j \neq i}$ and functions $b(s_i)$ and $\gamma(s_i)$.

    \item \textbf{Linear-quadratic}, with real-valued actions:
    \begin{equation}\label{eq:linearquad}
    v_i(\mathbf{s}) = b s_i + s_i \sum_{j \neq i} g_{ij} s_j - \frac{s_i^2}{2}
    \end{equation}

    for some real constants $\{g_{ij}\}_{j \neq i}$ and $b$.
    
\end{enumerate}

\noindent We say that player $i$'s preferences are \textbf{separable} if $u_i$ is equivalent to some $v_i$ of the form \eqref{eq:addseputility}.  Player $i$'s preferences are \textbf{strategically separable} if $u_i$ is strategically equivalent to $v_i$ of the form \eqref{eq:addseputility}.\footnote{When preferences are separable, the corresponding utility functions are unique up to positive affine transformations, but the specific terms $g_{ij}$ in the sum \eqref{eq:addseputility} are not.  Given a utility of this form, and any function $h(s_i)$, we can pick any two opponents $j$ and $k$ and substitute $\tilde{g}_{ij}(s_i, s_j) = g_{ij}(s_i, s_j) + h(s_i)$ and $\tilde{g}_{ik}(s_i, s_k) = g_{ik}(s_i, s_k) - h(s_i)$, giving the same utility function.}  Utilities that are linear in opponent actions are clearly separable and have further structure.  Linear-quadratic utilities impose even stronger assumptions on preferences.  Our main results give precise conditions under which a player's preferences are equivalent, or strategically equivalent, to utilities of these forms.

Taking separability or strategic separability as given, we can readily state the additional conditions needed for a utility that is linear in opponent actions or linear-quadratic.  Assuming actions are real-valued, the key condition required for the representation \eqref{eq:linearopp} is that preferences exhibit a \textbf{constant rate of substitution}.

\begin{definition}\label{def:CRS}
With real-valued actions, player $i$'s preferences exhibit a \textbf{constant rate of substitution} between opponents $j$ and $k$ if there is a constant $\delta$ such that $\succeq_{i, s_{-i}} = \succeq_{i, s'_{-i}}$ whenever $s_\ell = s'_\ell$ for $\ell \notin \{j,k\}$ and $\delta(s'_j - s_j) = s_k - s'_k$.  If player $i$'s preferences feature a constant rate of substitution between any two opponents whose actions can change $i$'s preference ranking, we say that player $i$ has \textbf{CRS preferences}.
\end{definition}

A constant rate of substitution between $j$ and $k$ means that increasing $j$'s action by $1$ has the same effect on $i$'s preferences as increasing $k$'s action by $\delta$.  Payoffs that depend on a linear aggregate $\sum_{j \neq i} g_{ij}s_j$ clearly exhibit CRS preferences:  whenever $g_{ik} \neq 0$, we can take $\delta = \frac{g_{ij}}{g_{ik}}$ in Definition \ref{def:CRS}.  If CRS preferences are strategically separable, then the corresponding additively separable utility is linear in opponent actions, taking the form \eqref{eq:linearopp}.\footnote{Our later results on additively separable representations imply that the representation \eqref{eq:linearopp} is unique up to positive affine transformations, so if all players in a network game have strategically separable CRS preferences, this means that the corresponding weighted graph $G = \{g_{ij}\}_{i,j \in N}$ is unique up to left multiplication by a positive diagonal matrix.}

\begin{Theorem}\label{theo:CRS}
Suppose actions are real-valued, and player $i$ has continuous and strategically separable CRS preferences.  Player $i$'s preferences are strategically equivalent to a utility
$$u_i(\mathbf{s}) = b(s_i) + \gamma(s_i) \sum_{j \neq i} g_{ij} s_j.$$

\end{Theorem}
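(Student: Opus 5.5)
We start from the strategically separable representation $v_i(\mathbf{s}) = \sum_{j \neq i} g_{ij}(s_i, s_j)$ and work with utility differences. For a fixed pair $s_i, s'_i$, define $D_j(s_j) = g_{ij}(s'_i, s_j) - g_{ij}(s_i, s_j)$. Then $s'_i \succeq_{i,s_{-i}} s_i$ exactly when $\sum_j D_j(s_j) \geq 0$. The representation is cardinal, unique up to positive affine transformations, and this will be used repeatedly below. Since we only need strategic equivalence, we can assume that every opponent who matters is one whose action can change $i$'s ranking; the others' terms depend only on $s_i$ and can be absorbed into $b(s_i)$.

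\textbf{Step 1: each $D_j$ is affine.} Take two relevant opponents $j,k$ with CRS constant $\delta$. The CRS condition says that the profile $(s_j+t, s_k - \delta t, s_{-jk})$ induces the same order over $S_i$ as $(s_j, s_k, s_{-jk})$. I would argue via uniqueness of the separable representation that the translated representation
\[
g_{ij}(s_i, s_j + t) + g_{ik}(s_i, s_k - \delta t) + \sum_{\ell \neq j,k} g_{i\ell}
\]
represents the same family of orders. Uniqueness up to positive affine maps with a scale that cannot depend on $s_i$ then gives a Cauchy-type equation
\[
D_j(s_j + t) + D_k(s_k - \delta t) = D_j(s_j) + D_k(s_k) + c(t),
\]
possibly with an additional multiplicative factor. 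Fixing $s_k$ and varying $s_j$ shows that $D_j(s_j+t) - D_j(s_j)$ is independent of $s_j$. Continuity of preferences gives continuity of the $g_{ij}$ (I would take this from the representation theorems). The Cauchy equation then yields $D_j(s_j) = \alpha_j(s_i,s'_i)\, s_j + \beta_j(s_i,s'_i)$.

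\textbf{Step 2: the ratios are common across pairs.} The CRS relation also forces $\alpha_j(s_i,s'_i) = \delta_{jk}\, \alpha_k(s_i,s'_i)$ with $\delta_{jk}$ independent of $(s_i, s'_i)$. Fix a reference opponent $k$ and a reference action $\bar s_i$, and set $g_{ij} = \delta_{jk}$, normalized so that $g_{ik}=1$. Define
\[
\gamma(s_i) = \alpha_k(\bar s_i, s_i), \qquad b(s_i) = \sum_j \beta_j(\bar s_i, s_i).
\]
The additivity $D(s_i,s''_i) = D(s_i,s'_i) + D(s'_i,s''_i)$ makes $\alpha$ and $\beta$ additive cocycles. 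Hence
\[
\sum_j D_j(s_j) = \bigl[b(s'_i) - b(s_i)\bigr] + \bigl[\gamma(s'_i) - \gamma(s_i)\bigr] \sum_j g_{ij} s_j,
\]
and therefore $b(s_i) + \gamma(s_i)\sum_j g_{ij}s_j$ induces the same order over $S_i$ at every $s_{-i}$.

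\textbf{Main obstacle.} The hard step is Step 1: turning "same order on $S_i$" into a functional equation. Equal orders only give equality of the representation up to a positive affine map that may depend on the opponent profile. The argument must rule out a profile-dependent scale, or at least show that the scale factor is constant along CRS translations. My first attempt would use the cardinality of the representation from the earlier separability theorems. Translating by $t$ produces another separable representation of the same family of orders, so it equals $\lambda(t) v_i + \mu(t, s_{-i})$. Composing translations makes $\lambda$ multiplicative, and continuity gives $\lambda(t)=e^{ct}$. If $c \neq 0$, I would need to show this still fits the target form or yields a contradiction, for example by a boundedness argument on compact $s_j$-intervals. Failing that, I would impose the richness assumptions of the separability theorem so that uniqueness applies directly.
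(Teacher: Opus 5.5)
You follow the paper's route. You translate along the CRS direction, use uniqueness of the separable representation to get a Cauchy equation, conclude that each difference term is affine, read off $\alpha_j=\delta\alpha_k$, and assemble $b$ and $\gamma$; your cocycle bookkeeping is a cleaner version of the paper's $b_k,f_k,\gamma_{ik}$ decomposition. The paper handles your ``main obstacle'' in one sentence: it asserts that uniqueness up to scaling gives \emph{exact} equality of the differences, with no scale factor. So you have found the one step the paper leaves implicit, but your proposal leaves it open, and Step 1 is not yet proved. The $\lambda(t)=e^{ct}$ and boundedness route is not how to close it.

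\textbf{The missing idea.} The CRS translation does not touch the other opponents. Fix $(s_i,s'_i)$ and write $D=\sum_\ell D_\ell$. Let $e$ shift $s_j$ by $t$ and $s_k$ by $-\delta t$. By CRS, $D(\cdot+te)$ is an additive function with the same sign pattern as $D$.
\begin{itemize}
\item Under the richness of Lemma \ref{lem:pairwise} (three completely essential opponents), that lemma's uniqueness applies to this function. Complete essentialness makes any additive function with this sign pattern represent the order built there.
\item Hence $D(\cdot+te)=\lambda D$ for a single constant $\lambda>0$. There is no additive term, because both sides vanish on the nonempty indifference set; your $c(t)$ is zero.
\item Now pick $\ell\notin\{j,k\}$ that is essential for this pair. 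Varying $s_\ell$ alone changes the left side by $D_\ell(s'_\ell)-D_\ell(s_\ell)\neq 0$ and the right side by $\lambda$ times that amount. So $\lambda=1$.
\end{itemize}
With $\lambda=1$, your Steps 1--2 go through.

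\textbf{Why richness is needed.} Without such a third opponent, your $c\neq 0$ branch actually occurs. The theorem as literally stated then fails, so your fallback of imposing the assumptions of Theorem \ref{theo:main2} is necessary; the paper's proof tacitly needs them too. Take two opponents, $S_0=(0,\infty)$, $S_j=S_k=\mathbb{R}$, and $u=s_0e^{s_j}-\tfrac{s_0^2}{2}e^{-s_k}$.
\begin{itemize}
\item This $u$ is continuous and separable. For $a<b$, one has $b\succeq a$ iff $s_j+s_k\ge f(a,b):=\ln\tfrac{a+b}{2}$, so CRS holds with $\delta=1$. Here $D(\cdot+te)=e^{t}D$, which is exactly your exponential case.
\item Suppose $B(s_0)+\Gamma(s_0)(g_js_j+g_ks_k)$ were strategically equivalent. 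Since the order varies with $s_j+s_k$ alone, this forces $g_j=g_k$ (normalize to $1$), $\Gamma$ increasing, and $B(b)-B(a)=-f(a,b)\,(\Gamma(b)-\Gamma(a))$.
\item Additivity of $B$ over $a<b<c$ then gives $\frac{\Gamma(c)-\Gamma(b)}{\Gamma(b)-\Gamma(a)}=\frac{f(a,c)-f(a,b)}{f(b,c)-f(a,c)}$.
\item On $\{1,2,3,4\}$, the triple $(1,2,4)$ gives $\frac{\Gamma(4)-\Gamma(2)}{\Gamma(2)-\Gamma(1)}=\frac{\ln(5/3)}{\ln(6/5)}\approx 2.802$. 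Chaining $(1,2,3)$ and $(2,3,4)$ gives $\frac{\ln(4/3)}{\ln(5/4)}\bigl(1+\frac{\ln(6/5)}{\ln(7/6)}\bigr)\approx 2.814$. This is a contradiction.
\end{itemize}
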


\begin{proof}
    We provide a full proof in Section \ref{sec:agg}.
\end{proof}

To get the linear-quadratic utility \eqref{eq:linearquad}, we need two further conditions.  Linear-quadratic utilities famously lead to linear best-response maps, but assuming linear best replies alone is, perhaps surprisingly, not enough.  Suppose $S_i = \bb{R}$, and utilities take the form
$$u_i(\mathbf{s}) = (b_i - s_i) e^{s_i} + e^{s_i} \sum_{j \neq i} g_{ij} s_j.$$

\noindent The first order condition reduces to
$$(b_i - 1 - s_i) e^{s_i} + e^{s_i} \sum_{j \neq i} g_{ij} s_j = 0,$$

\noindent which has the unique solution
$$s_i = b_i - 1 + \sum_{j \neq i} g_{ij} s_j,$$

\noindent and one can readily verify that the second-order condition for a maximum holds.  The missing condition is \textbf{midpoint indifference}.

\begin{definition}\label{def:midpointind}
With real-valued actions, player $i$'s preferences satisfy \textbf{midpoint indifference} if, whenever $s_i$ is a best response to $s_{-i}$ and $s'_i$ is a best response to $s'_{-i}$, player $i$ is indifference between $s_i$ and $s'_i$ against $\frac{s_{-i} + s'_{-i}}{2}$.
\end{definition}

Given strategic separability, the classic linear-quadratic utility is equivalent to a constant rate of substitution between neighbors, linear best replies, and midpoint indifference.

\begin{Theorem}\label{theo:linquad}
Suppose actions are real-valued, and player $i$ has continuous and strategically separable CRS preferences.  If additionally player $i$ has a linear best response, and satisfies midpoint indifference, then player $i$'s preferences are strategically equivalent to a linear-quadratic utility
$$u_i(\mathbf{s}) = b s_i + s_i \sum_{j \neq i} g_{ij} s_j - \frac{s_i^2}{2}.$$
\end{Theorem}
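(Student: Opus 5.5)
We start from Theorem \ref{theo:CRS}, which gives a strategically equivalent representation
$$u_i(\mathbf{s}) = b(s_i) + \gamma(s_i)\, x, \qquad x = \sum_{j\neq i} g_{ij}s_j.$$
Because strategic equivalence only concerns the orders $\succeq_{i,s_{-i}}$, and these depend on $s_{-i}$ only through the scalar aggregate $x$, the problem reduces to one dimension. We must show that the family of orders on $S_i$ indexed by $x$ is the family induced by $s_i x' - s_i^2/2 + b s_i$ for some affine reparametrization $x' = \alpha x + \beta$ of the aggregate. We can then absorb $\alpha$ into the $g_{ij}$ and $\beta$ into $b$.

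\textbf{Step 1: reparametrize the aggregate by the best response.} Linear best replies mean the best response is $s_i^*(x) = c + \alpha x$. If $\alpha = 0$, preferences do not depend on opponents and the claim is trivial with $g_{ij}=0$. So we take $\alpha \ne 0$ and define $y = s_i^*(x)$. This is an affine change of variables, so $y = c + \sum_j \alpha g_{ij} s_j$ is still a linear aggregate of opponent actions. In the variable $y$, the action $y$ is a best response at aggregate $y$.

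\textbf{Step 2: translate midpoint indifference.} Midpoint indifference says that for any $y, y'$ in the range, $i$ is indifferent between $y$ and $y'$ at aggregate $(y+y')/2$. This uses that the midpoint of opponent profiles maps to the midpoint of aggregates, by linearity. Using $u = b(s_i) + \gamma(s_i)\,\tilde x(y)$, where $\tilde x$ is affine in $y$, indifference reads
$$b(y)-b(y') + \big(\gamma(y)-\gamma(y')\big)\,\tilde x\!\left(\tfrac{y+y'}{2}\right) = 0 .$$
We then compare with the target $v(s_i,y) = s_i y - s_i^2/2$. It satisfies $v(y,m)-v(y',m) = (y-y')\big(m - \tfrac{y+y'}{2}\big)$, which vanishes exactly at $m=(y+y')/2$, and it has best response $y$ at $y$. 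So the indifference point $m(s_i,s_i')$ coincides for both representations.

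\textbf{Step 3: indifference points pin down the orders.} This is the main obstacle. Each representation has the form $\beta(s_i) + \delta(s_i) m$. For a pair $s_i \neq s_i'$, the comparison between them is a sign of an affine function of $m$. If $\gamma(s_i)\neq\gamma(s_i')$, that sign flips exactly at the common indifference point $(s_i+s_i')/2$. We must also show that the direction of the flip agrees, and rule out $\gamma(s_i)=\gamma(s_i')$.

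\emph{Ruling out $\gamma(s_i)=\gamma(s_i')$.} In that case $i$ would be indifferent at every $m$, contradicting that $s_i$ is the (unique) best response at $s_i$.

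\emph{Matching the direction of the flip.} At $m = s_i$, the action $s_i$ is strictly preferred to $s_i'$ in both representations. Since each side of $(s_i+s_i')/2$ contains such a point (either $s_i$ or $s_i'$), the signs agree everywhere.

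\emph{Domain and continuity.} We need the best-response range to cover the relevant actions, so that every $s_i$ is some $y$. We also need uniqueness of best responses, which we expect to get from continuity together with the linear best response being a function. Where actions lie outside the range, we will use continuity to extend by limits, or restrict to the range. With the order on each pair $\{s_i, s_i'\}$ agreeing for every $m$, the orders $\succeq_{i,s_{-i}}$ coincide.

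**Conclusion.** Substituting $y = c + \sum_j \alpha g_{ij}s_j$ into $s_i y - s_i^2/2$ gives
$$c\, s_i + s_i\sum_j (\alpha g_{ij}) s_j - \frac{s_i^2}{2},$$
which is the claimed linear-quadratic form with $b = c$ and weights $\alpha g_{ij}$.
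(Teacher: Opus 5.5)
Your argument is correct in the nondegenerate setting the paper's proof also tacitly uses, but it follows a different route. The paper applies midpoint indifference to three actions $s_i,s_i',s_i''$ at once. It then uses the fact that the three pairwise utility differences of the single CRS representation add up, which yields the functional equation $\gamma(s_i'')(s_i-s_i')+\gamma(s_i)(s_i'-s_i'')+\gamma(s_i')(s_i''-s_i)=0$. This forces $\gamma$ to be affine, after which $b$ is solved for explicitly. You instead work one pair at a time. The differences $u(s_i,m)-u(s_i',m)$ and $(s_i-s_i')\bigl(m-\frac{s_i+s_i'}{2}\bigr)$ are both affine in the aggregate $m$, both vanish at $\frac{s_i+s_i'}{2}$, and both are strictly positive at $m=s_i$. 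They are therefore positive multiples of each other, and the orders coincide.

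Your route is more elementary and more ordinal. It needs no functional equation, and it uses only that each pairwise comparison is governed by an affine function of the aggregate, not the additivity across pairs. It also handles the orientation issue explicitly, whereas the paper's final rescaling by $m$ silently requires $m>0$, which again comes from strictness of the best reply. The paper's route yields more: the CRS representation itself is a positive multiple of the linear-quadratic utility, so the positive factor relating your two pairwise differences is the same for all pairs. Its derivation of the functional form also uses only the indifference conditions, with strictness needed only to sign $m$.

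One claim in your proposal is false: a constant best response ($\alpha=0$) does not make the orders independent of opponents. Take $S_i=\mathbb{R}$, $x=\sum_{j\neq i}g_{ij}s_j\ge 0$ and $u=-s_i^2+x\min\{s_i,0\}$. The unique best reply is $0$ for every $x$, and midpoint indifference holds vacuously. Yet $-1$ is preferred to $2$ at $x=0$, while $2$ is preferred to $-1$ at $x=4$. Similarly, continuity cannot carry your conclusion to actions outside the closure of the best-response range.

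These are exceptions to the statement as written rather than defects of your method. The paper's proof tacitly assumes a nonconstant best response whose range is all of $S_i$: it rescales the $g_{ij}$ by the slope and applies midpoint indifference to arbitrary actions. You should state that assumption explicitly, together with uniqueness of the best reply. Uniqueness is part of reading the best response as a function; it does not follow from continuity, as you suggest.
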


\begin{proof}
    We provide a full proof in Section \ref{sec:agg}.
\end{proof}

Note that Theorems \ref{theo:CRS} and \ref{theo:linquad} are written assuming the weaker strategic separability, and the conclusions correspondingly deliver strategic equivalence.  Using identical arguments, we could write versions of these theorems assuming separability and delivering equivalence, but the utility representations are slightly different.  For CRS preferences, the corresponding result would give a utility
$$u_i(\mathbf{s}) = b(s_i) + \gamma(s_i) \sum_{j \neq i} g_{ij} s_j + \sum_{j \neq i} f_j(s_j),$$

\noindent adding an extra set of terms that depend only on opponent actions.  Similarly, given CRS preferences, linear best replies, and midpoint indifference, the representation would become
$$u_i(\mathbf{s}) = b s_i + s_i \sum_{j \neq i} g_{ij} s_j - \frac{s_i^2}{2} + \sum_{j \neq i} f_j(s_j).$$

\noindent The extra terms in each case affect $i$'s preferences over opponent profiles without impacting $i$'s preferences over her own actions $S_i$.

Given these findings, the principal technical hurdle is to uncover when preferences are separable or strategically separable. In the next section, we introduce independence axioms, opponent and strategic independence, that characterize separable and strategically separable preferences.  We develop our main technical results in Sections \ref{sec:separable} and \ref{sec:stratseparable}.  The first result shows that, under mild technical restrictions, the preference order $\succeq_i$ on $S$ is separable if and only if it jointly satisfies opponent and strategic independence.  The second result shows that, under stronger conditions, the family of orders $\{\succeq_{i,s_{-i}}\}_{s_{-i} \in S_{-i}}$ on $S_i$ is strategically separable if and only if it satisfies strategic independence.  In Section \ref{sec:cyclic}, we discuss the importance of our technical assumptions, and we provide an alternative characterization of strategic separability without them.


\section{Independence Axioms}

\hspace{1 pc}
In the introduction, we noted a ubiquitous feature of preferences in network games:  strategic interactions are bilateral.  We formalize this idea through two properties, \emph{opponent independence} and \emph{strategic independence}.  Going forward, we take $N = \{0,1,2,...,n\}$ for the set of players and focus on representing player $0$'s preferences.  We write $s_0$ for a generic action of this focal player and $s_+$ for a generic profile of actions for her $n$ opponents.  We use $s_{-i}$ to denote a profile of actions for \emph{opponents} other than $i$, and $s_A$ to denote a profile of actions for the subset of opponents $A$.

\begin{definition}\label{def:independence}
    A preference order $\succeq_0$ on $S$ satisfies \textbf{opponent independence} if
    $$(s_0, s_i, s'_{-i}) \succeq_0 (s_0, s_i, s_{-i}) \quad \implies \quad (s_0, s'_i, s'_{-i}) \succeq_0 (s_0, s'_i, s_{-i})$$

    \noindent for any opponent $i$, any $s_i, s'_i \in S_i$, and any $s_{-i}, s'_{-i} \in S_{-i}$.  The corresponding family of orders $\{\succeq_{s_+}\}_{s_+ \in S_+}$ satisfies \textbf{strategic independence} if for any actions $s_0, s'_0 \in S_0$, any opponent $i > 0$, and any opponent action profiles $s_+, s'_+ \in S_+$, we have
    $$s_0 \succeq_{s_+} s'_0, \quad s'_0 \succeq_{s'_i, s_{-i}} s_0, \quad s'_0 \succeq_{s_i, s'_{-i}} s_0 \quad \implies \quad s'_0 \succeq_{s'_+} s_0,$$

    \noindent with the last comparison being strict if any one of the first three is strict.  
\end{definition}

Interpreting opponent independence is straightforward:  our focal player's preferences over action profiles for opponents other than $i$ cannot depend on what opponent $i$ is doing.  This is clearly necessary for an additively separable representation \eqref{eq:addseputility}, and it is identical to ``coordinate independence'' as used in consumer theory to characterize preferences over consumption bundles that are separable across goods.  Since opponent independence restricts preferences \emph{over opponent actions}, it only appears in our characterization of separable preferences where we start from an order $\succeq_0$ over all of $S$.\footnote{The condition $(s_0, s_i', s_{-i}) \succeq_0 (s_0, s_i, s_{-i})  \implies (s_0, s'_i, s'_{-i}) \succeq_0 (s_0, s_i, s_{-i}')$ is not sufficient to generate an additively separable utility representation over $s_+$ for each $s_0$.}

Strategic independence is new to our paper, and it concerns how changes in opponent actions affect our focal player's preferences over her own actions.  The essence of strategic independence is that if a given opponent $i$'s switching from $s_i$ to $s'_i$ can make $s'_0$ more attractive relative to $s_0$, then it must always make $s'_0$ relatively more attractive than $s_0$, no matter what other opponents are doing.  The first two comparisons assert that, when other opponents play $s_{-i}$, switching from $s_i$ to $s'_i$ flips our focal player's preference from $s_0$ to $s'_0$.  If preferences satisfy strategic independence, then there cannot be another profile $s'_{-i}$ such that the switch from $s_i$ to $s'_i$ flips our focal player's preference in the opposite direction.



Our result on separable preferences requires a notion of joint independence that is somewhat stronger than assuming both opponent and strategic independence.\footnote{We find it helpful to draw a parallel with differentiability of multivariate functions.  Having partial derivatives in each variable does not guarantee the function is differentiable overall because it could still be poorly behaved if we move along a diagonal.  Our definition of joint independence analogously contemplates simultaneous changes in own and opponents' actions.}

\begin{definition}\label{def:jointind}
    The order $\succeq_0$ is \textbf{jointly independent} if, for any actions $s_0, s'_0 \in S_0$, any two opponents $i,j > 0$, and any opponent action profiles $s_+, s'_+, t_+, t'_+ \in S_+$, we have
    $$(s_0, s_+) \succeq_0 (s'_0, s'_+), \quad (s'_0, t'_{j}, s'_{-j}) \succeq_0 (s_0, t_i, s_{-i}), \quad (s'_0, s'_{j}, t'_{-j}) \succeq_0 (s_0, s_i, t_{-i})$$
    $$\implies \quad (s'_0, t'_+) \succeq_0 (s_0, t_+),$$

    \noindent with the last comparison being strict if any one of the first three is strict.
\end{definition}

Joint independence implies both opponent independence and strategic independence as special cases. Opponent independence is the special case in which $s_0 = s'_0$, $i = j$, $s'_i = s_i$, $t'_i = t_i$, and $s_{-i} = s'_{-i}$ while strategic independence is the special case in which $t_+=t_+'$, $s_+=s_+'$ and $i=j$. Joint independence imposes additional restrictions if we change both the focal player's action and opponent actions simultaneously.




\section{Separable Preferences}\label{sec:separable}

\hspace{1 pc}
Given an order $\succeq_0$ on a space of strategy profiles $S$, when can we represent it using an additively separable utility \eqref{eq:addseputility}?  Under mild technical restrictions, the answer is precisely when $\succeq_0$ satisfies joint independence.  We divide our conditions into two formal assumptions.  The first is indispensable while the second simplifies our argument---at the end of the section, we discuss how to relax it.  Fixing an action $s_0$, we say that opponent $i$ is \textbf{essential at} $s_0$ if $(s_0, s'_i, s_{-i}) \succ_0 (s_0, s_i, s_{-i})$ for some $s_i, s'_i \in S_i$ and $s_{-i} \in S_{-i}$.


\begin{assumption}\label{as:separable1}
    Each action set $S_i$ is a connected topological space, the preference order $\succeq_0$ on $S$ is continuous.  There exists $s_0^*$ with at least two essential opponents.
\end{assumption}



The most meaningful restriction here is that preferences are continuous.  Note that if the last part of Assumption \ref{as:separable1} fails, then the utility is trivially separable across opponents because our focal player cares about at most one opponent.  The reason we exclude this case is because without two essential opponents, the utility representation is no longer cardinal---any monotone transformation remains separable across opponents.  Our second assumption imposes an additional richness condition.


\begin{assumption}\label{as:separable2}
    For some $s_0^*$ with at least two essential opponents, we have that for any profile $(s_0,s_+)$ we can find $s^*_+ \in S_+$ such that $(s_0,s_+) \simeq_0 (s_0^*, s^*_+)$.
\end{assumption}

Assumption \ref{as:separable2} says that there exists an $s_0^*$ such that, for any profile $s$, we can find a profile of opponent actions $s_+^*$ to make our focal player indifferent between $s$ and $(s_0^*,s_+^*)$.  While not strictly necessary for our conclusion, this assumption allows a more transparent argument, and we discuss at the end of this section how to extend beyond this case.  The standard linear-quadratic utility
$$u(s) = b s_0  + s_0 \sum_{i = 1}^n \gamma_{i} s_i- \frac{s_0^2}{2}$$

\noindent satisfies this assumption as long as either i) the action of at least one essential opponent can range over all of $\bb{R}$, or ii) actions are unbounded (e.g., in the usual case with $S_i = \bb{R}_+$), and at least one weight $\gamma_i$ is positive and another is negative.  To preview how we can dispense with this assumption, if all weights have the same sign, and $S_i = \bb{R}_+$ for each opponent $i$, then assumption \ref{as:separable2} fails. However, if we restrict the focal player to actions $S_0 = \left[\frac{1}{N}, N\right]$ for any positive integer $N$, the assumption is restored, and we can take a limit to cover all of $\bb{R}_+$.

\begin{Theorem}\label{theo:main1}
    Suppose Assumptions \ref{as:separable1} and \ref{as:separable2} hold for action sets $\{S_i\}_{i=0}^n$ and the preference order $\succeq_0$.  The order $\succeq_0$ is separable if and only if it satisfies joint independence.  Moreover, the representation of $\succeq_0$ via a utility of the form
    $$u(s) = \sum_{i = 1}^n g_i(s_0, s_i)$$

    \noindent is unique up to a positive affine transformation.
\end{Theorem}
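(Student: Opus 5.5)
Necessity is a direct check.  If $u(s)=\sum_{i}g_i(s_0,s_i)$ represents $\succeq_0$, write the three hypotheses of joint independence as inequalities of utility sums.  Because $i$ is the only coordinate changed on the $s_0$ side and $j$ the only one changed on the $s'_0$ side, adding the three inequalities makes every term cancel except those making up $u(s'_0,t'_+)-u(s_0,t_+)\ge 0$.  The result is strict whenever any summand is strict.  Uniqueness up to positive affine transformations will follow from the construction below together with the cardinality in Wakker's theorem.

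For sufficiency, I first fix $s_0^*$ from Assumption \ref{as:separable2}.  Opponent independence (the special case of joint independence) gives coordinate independence of the restricted order on $\{s_0^*\}\times S_+$.  Continuity, connectedness, and two essential opponents then let me invoke \citet{Wakker1989} (Debreu's theorem; with exactly two essential coordinates I would also need the hexagon/Thomsen condition, which I would derive from joint independence).  This yields a continuous $U^*(s_+)=\sum_i g_i(s_0^*,s_i)$ representing $\succeq_0$ on that slice, unique up to positive affine transformation.  By Assumption \ref{as:separable2} every profile is indifferent to some $(s_0^*,s_+^*)$, so $u(s):=U^*(s_+^*)$ is a well-defined representation of $\succeq_0$ on all of $S$.

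It remains to show that, for each other $s_0$, $u(s_0,\cdot)$ is additively separable.  I apply Wakker to the slice $\{s_0\}\times S_+$ whenever it has at least two essential opponents.  (Slices with at most one essential opponent need a separate argument: either a trivial representation, or continuity in $s_0$ with limits from nearby slices.)  This gives $V_{s_0}=\sum_i h_i^{s_0}(s_i)$, and $u(s_0,\cdot)=\phi_{s_0}\circ V_{s_0}$ for some increasing $\phi_{s_0}$.  \textbf{The key step, and the main obstacle, is proving $\phi_{s_0}$ is affine.}  For this I use the full strength of joint independence, which I read as a statement about utility differences.  Take two one-coordinate changes on the $s_0^*$ slice, $s_i\to t_i$ and $s'_j\to t'_j$, with the profiles linked across slices by indifference.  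Joint independence then says that equal $U^*$-differences are carried to equal $V_{s_0}$-differences.  In the style of \citet{Kobberling2006}, this gives a consistency of the difference orderings across slices, so an ordering of utility differences on one slice transfers to the other.

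Concretely, I would build standard sequences on the $s_0^*$ slice, i.e.\ points equally spaced in $U^*$.  I would map them via indifference into the $s_0$ slice and use joint independence to show they are also equally spaced in $V_{s_0}$.  Hence $\phi_{s_0}$ is affine on a dense set, and by continuity affine everywhere.  Setting $g_i(s_0,s_i)=a_{s_0}h_i^{s_0}(s_i)+c_{s_0}/n$ then gives the representation.  Uniqueness follows because any other separable representation restricts to a Wakker representation on the $s_0^*$ slice, hence is affine in $U^*$ there, and the indifference linking determines it everywhere else.
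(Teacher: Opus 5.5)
Your proposal is correct. Its core coincides with the paper's: in both, the decisive fact is that joint independence, applied to profiles linked by indifference across the $s_0^*$ slice and another slice, forces the map between the two slice representations to preserve equal utility differences, so the map is affine.

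The paper proves this in its step (ii) by bisection. It splits a change into an opponent-$1$ half and a remaining half, invokes joint independence with $i=j=1$ to get the dyadic grid, and then uses a maximal-interval argument. You use standard sequences and refine the step to $\delta'/m$. The two techniques are interchangeable, and both need the same informal care about solvability near the boundaries of the ranges.

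Where you differ is the packaging. The paper embeds the slice utilities in a difference order on $S\times S$. It must then verify totality, consistency, continuity and the hexagon condition, and apply Lemma~\ref{lem:wakker} a second time to extract $v_1$. You instead use Assumption~\ref{as:separable2} to define the global utility $u(s)=U^*(s^*_+)$ directly, so all that remains is affinity of each $\phi_{s_0}$.

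Your route buys three things. The second Wakker application and the continuity and hexagon checks disappear. Your argument also covers slices with only two essential opponents, since joint independence allows different coordinates $i$ and $j$ on the two slices; the paper's step (ii) is written for three. Finally, your uniqueness argument is explicit where the paper leaves it implicit: any separable $\tilde u$ is affine in $U^*$ on the $s_0^*$ slice and is then determined everywhere by indifference.

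The paper's route buys a template that carries over to Theorem~\ref{theo:main2}. There, no order on $S$ compares profiles across slices, so $u$ cannot be defined by indifference, and the cross-slice comparisons must themselves be constructed, as in its step (d).

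One small simplification: slices with at most one essential opponent need no limiting argument. On such a slice $u(s_0,\cdot)$ depends on a single coordinate, so it is trivially of the form $g_k(s_0,s_k)$.
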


We present the full proof of Theorem \ref{theo:main1} in Section \ref{sec:proof1}.  The first step is a straightforward application of an old result in consumer theory.  Fixing an action $s_0$ for our focal player, earlier work gives us an additively separable utility that represents preferences over opponent profiles $S_+$.  We state the relevant result as a lemma, and provide references, in Section \ref{sec:prelim}.  The novel part of our proof shows that, using joint independence, we can merge the utilities we obtain for each fixed $s_0$ into a single utility that covers all of $S_0$.

\subsection{Proof Preliminaries}\label{sec:prelim}

\hspace{1 pc}
Suppose $X = X_1 \times X_2 \times ... \times X_n$---classically, we would have in mind the space of bundles involving $n$ different commodities.  A binary relation $\succeq$ on $X$ is \textbf{coordinate independent} if for any $i$ we have
$$(x_i, x_{-i}) \succeq (x_i, y_{-i}) \quad \iff \quad (y_i, x_{-i}) \succeq (y_i, y_{-i})$$


\noindent for any two $x_i, y_i \in X_i$.  Coordinate $i$ is \textbf{essential} if $(x_i, x_{-i}) \succ (y_i, x_{-i})$ for some $x_i, y_i, x_{-i}$.  Various authors going back to \citet{Debreu1959} have identified conditions under which a coordinate independent preference order over bundles admits an additively separable utility representation:
$$u(\mathbf{x}) = \sum_{i=1}^n v_i(x_i).$$

\noindent Technical conditions on the sets $\{X_i\}$ vary, but most results of this sort require a continuous preference order and at least three essential coordinates.  We rely on a version from \citet{Wakker1989}, which requires that each $X_i$ is a connected topological space and $\succeq$ is a continuous weak order on $X$.

With only two essential coordinates, we need a further condition.  The weakest one of which we are aware is the \textbf{hexagon condition} that \citet{Wakker1989} identifies.  The binary relation $\succeq$ on $X = X_1 \times X_2$ satisfies the hexagon condition if
$$(x_1, x_2) \simeq (y_1, y_2) \quad \text{and} \quad (z_1,y_2) \simeq (y_1, x_2) \simeq (x_1, z_2) \quad \implies \quad (y_1, z_2) \simeq (z_1, x_2).$$

\noindent  Given an additively separable utility, the hypotheses in the hexagon condition state that
$$v_1(x_1) + v_2(x_2) = v_1(y_1) + v_2(y_2), \quad \text{and}$$
$$v_1(z_1) + v_2(y_2) = v_1(y_1) + v_2(x_2) = v_1(x_1) + v_2(z_2)$$

\noindent Adding the first and last two quantities in the second line respectively gives
$$v_1(z_1) + v_2(x_2) + v_1(y_1) + v_2(y_2) = v_1(y_1) + v_2(z_2) + v_1(x_1) + v_2(x_2).$$

\noindent Subtracting $v_1(y_1) + v_2(y_2)$ from the first, and $v_1(x_1) + v_2(x_2)$ from the second, shows that
$$v_1(z_1) + v_2(x_2) = v_1(y_1) + v_2(z_2).$$

\noindent Hence, the hexagon condition is necessary to have an additively separable representation.  With three essential coordinates, one can derive it from coordinate independence, but with only two, we must assume it separately.  Note that in our model, opponent independence is exactly coordinate independence on $\{s_0\} \times S_+$ for each action $s_0 \in S_0$, and joint independence implies the hexagon condition on $\{s_0\} \times S_+$ for each action $s_0 \in S_0$.

For ease of reference, we state the full result from \citet{Wakker1989} here.

\begin{lemma}[Wakker, 1989, Theorem III.4.1]\label{lem:wakker}
    Suppose $X = X_1 \times X_2 \times ... \times X_n$, with $n \geq 3$, is a connected topological space, and $\succeq$ is a continuous weak order on $X$.  If $\succeq$ is coordinate independent, and at least three coordinates are essential, then there exist real-valued functions $\{v_i\}_{i = 1}^n$ such that $\mathbf{x} \succeq \mathbf{y}$ if and only if
    $$\sum_{i = 1}^n v_i(x_i) \geq \sum_{i=1}^n v_i(y_i).$$

    \noindent Moreover, the functions $\{v_i\}_{i = 1}^n$ are unique up to a positive affine transformation.  In particular, if $\{\tilde{v}_i\}_{i=1}^n$ is another such representation, then
    $$\tilde{v}_i(x_i) = \lambda v_i(x_i) + c_i, \quad i=1,2,...,n$$

    \noindent for some $\{c_i\}_{i = 1}^n$ and $\lambda > 0$.  With two essential coordinates, the same conclusion holds if $\succeq$ additionally satisfies the hexagon condition.
\end{lemma}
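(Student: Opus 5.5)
The statement to prove is Lemma \ref{lem:wakker}, which is quoted from \citet{Wakker1989}. I would not reprove it from scratch. I would give the standard route, since the paper takes it as a citation. The plan has three parts: build a topological additive structure from coordinate independence, obtain an additive representation on a suitable subdomain via standard sequences, and extend it by connectedness and continuity.

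\textbf{Step 1 (reductions).} Inessential coordinates can be assigned constant functions $v_i$, so I restrict attention to the essential coordinates. Coordinate independence lets me define, for each subset $A$ of coordinates, an induced order on $X_A$ that does not depend on the fixed values of the remaining coordinates. Continuity of $\succeq$ and connectedness of $X$ make each induced order continuous, and its range is order-dense. This yields the "solvability-like" property: whenever $(x_i,x_{-i}) \succeq y \succeq (x'_i,x_{-i})$, there is some $z_i$ with $(z_i,x_{-i}) \simeq y$.

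\textbf{Step 2 (hexagon/Thomsen condition).} With three or more essential coordinates, I derive the Thomsen condition on each pair of coordinates from coordinate independence plus continuity. I would follow Debreu's argument for this, which uses the third coordinate to "transfer" indifferences. With exactly two essential coordinates, I assume the hexagon condition directly. Wakker shows that, under continuity and connectedness, the hexagon condition implies Thomsen locally. This local-to-global step is the main obstacle, because it needs a careful covering/chaining argument along connected paths. I would follow Wakker's route: construct local representations on small neighbourhoods where standard sequences exist, then glue them.

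\textbf{Step 3 (construction and uniqueness).} On a neighbourhood, I fix a reference point and a unit step. I build standard sequences in one coordinate, which define $v_i$ on a dense set; Thomsen makes the grid consistent across coordinates. The Archimedean property, which follows from continuity, and density then extend each $v_i$ continuously. The local representations are unique up to positive affine maps, so on overlaps they can be matched. Connectedness of $X$ then patches them into a global additive representation $\sum_i v_i(x_i)$. For uniqueness, suppose $\tilde v$ is another representation. On each standard sequence, $\tilde v_i$ must be affine in $v_i$ with a common slope across coordinates, since equal utility steps are preserved. Density and continuity give $\tilde v_i = \lambda v_i + c_i$ with $\lambda > 0$.
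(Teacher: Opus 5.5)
Your proposal matches the paper, which gives no proof of this lemma and simply imports it from Wakker (1989, Theorem III.4.1). Your outline (restricted solvability from connectedness, Thomsen via a third coordinate or via the hexagon condition, standard sequences, local-to-global gluing, and uniqueness via standard sequences) is a fair summary of the argument behind that citation; one minor caution is that the third-coordinate ``transfer'' derivation of Thomsen needs more solvability than connectedness provides, so it only yields Thomsen locally, which is all your Step~3 requires since you build local representations and glue them anyway.
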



We invoke Lemma \ref{lem:wakker} twice in our proof.  The first time, we apply it to obtain an additively separable utility on $\{s_0\} \times S_+$.  We then use these utilities to construct a \emph{difference order} on all of $S$.  A difference order is an order on \emph{pairs} of strategy profiles, with the interpretation that $st \succeq s't'$ if
$$u(s) - u(t) \geq u(s') - u(t').$$

\noindent It is well-known that a difference order satisfying natural regularity conditions yields a \emph{cardinal} utility representation:  if two utility functions correspond to the same difference order, then one is a positive affine transformation of the other \citep[See, for instance, ][]{Kobberling2006}.

In fact, the result for difference orders is a special case of Lemma \ref{lem:wakker} with $n=2$.  Taking $X_1 = X_2$, if we have
$$(x, x) \simeq (y, y)$$

\noindent for every $x,y \in X_1$, then it is easy to show that
$$v_2(x) = -v_1(x) + c$$

\noindent for some constant $c$.  The function $v_1$ then gives a cardinal representation for a preference order on $X_1$ that ranks $x$ above $y$ if
$$(x, z) \succeq (y,z)$$

\noindent for any $z \in X_1$.  Our second use of Lemma \ref{lem:wakker} derives our full representation from the difference order on $S$.  Crucially, the difference order agrees with the additively separable utility we constructed on $\{s_0\} \times S_+$ for each action $s_0 \in S_0$.  Since these utilities are unique up to scale and location, the full representation is additively separable for each $s_0$.

\subsection{Proof of Theorem \ref{theo:main1}}\label{sec:proof1}

\hspace{1 pc}
Joint independence is clearly necessary, so we prove only sufficiency.  Since $\succeq_0$ satisfies opponent independence and the hexagon condition on $\{s_0\} \times S_+$ for each $s_0 \in S_0$, we can apply Lemma \ref{lem:wakker} on each restricted space $\{s_0\} \times S_+$ to obtain an additively separable utility representation holding $s_0$ fixed.  That is, for each $s_0 \in S_0$, we can find functions $\{g_i\}_{i = 1}^n$ such that
%
\begin{equation}\label{eq:addsepproof}
    \sum_{i = 1}^n g_i(s_0, s_i)
\end{equation}

\noindent represents $\succeq_0$ restricted to $\{s_0\} \times S_+$.  By Assumption \ref{as:separable1}, there exists $s_0^*$ with at least two essential opponents, and we know the representation is cardinal on $\{s_0^*\} \times S_+$.  We now construct a difference order $\succeq$ on $S$ in four steps:
    \begin{enumerate}
        \item Define $ss \simeq tt$ for every $s,t \in S$---this must hold in any difference order.

        \item Define $st \succeq s't$ and $ts' \succeq ts$ whenever $s \succeq_0 s'$---this ensures coordinate independence of the difference order as a relation on $S\times S$ and that the difference order represents the preferences $\succeq_0$.

        \item For each $s_0$ with at least two essential opponents, define $(s_0, s_+)(s_0, t_+) \succeq (s_0, s'_+)(s_0, t'_+)$ whenever
        $$\sum_{i=1}^n g_i(s_0, s_i) - g_i(s_0, t_i) \geq \sum_{i=1}^n g_i(s_0, s'_i) - g_i(s_0, t'_i).$$

        This ensures that $\succeq$ agrees with the additively separable representation on the restriction to $\{s_0\} \times S_+$ for each $s_0$.

        \item Take the transitive closure.
    \end{enumerate}

We show shortly that i) the order $\succeq$ is total on $S\times S$, ii) it represents the preferences $\succeq_0$ in the sense that $st \succeq s't$ if and only if $s\succeq_0 s'$, and iii) it is continuous and satisfies the hexagon condition.  Given these properties, we can apply Lemma \ref{lem:wakker} a second time to obtain continuous functions $v_1(s)$ and $v_2(s)$ such that
$$st \succeq s't' \quad \iff \quad v_1(s) + v_2(t) \geq v_1(s') + v_2(t').$$

\noindent Part (a) of the construction implies $v_2(s) = -v_1(s) + c$ for some constant $c$.  Since the representation of a difference order is unique up to positive affine transformations, we know from part (c) that
$$v_1(s_0, s_+) = c_1(s_0)\sum_{i = 1}^n g_i(s_0, s_i) + c_2(s_0)$$

\noindent for some functions $c_1$ and $c_2$ whenever there are at least two essential opponents at $s_0$---if there are $0$ or $1$ essential opponents, the utility trivially takes this form. Absorbing $c_2$ into one of the terms in the sum yields our desired representation.

It remains to establish properties i), ii), and iii) of the difference order.  Addressing property i) first, given any four profiles $s, t, s', t'$, we first use Assumption \ref{as:separable1} and find $s_0^*$ with at least two essential opponents.  By Assumption \ref{as:separable2} we can find $\hat{s}_+$, $\hat{t}_{+}$, $\hat{s}'_{+}$, and $\hat{t}'_+$ such that
$$s \simeq_0 (s_0^*, \hat{s}_+), \quad t \simeq_0 (s_0^*, \hat{t}_+), \quad s' \simeq_0 (s_0^*, \hat{s}'_{+}), \quad \text{and} \quad t' \simeq_0 (s_0^*, \hat{t}'_+).$$

\noindent Using part (b) of the construction and transitivity, we now have
$$st \simeq s (s_0^*, \hat{t}_+) \simeq (s_0^*, \hat{s}_+)(s_0^*, \hat{t}_+), \quad s't' \simeq s' (s_0^*, \hat{t}'_+) \simeq (s_0^*, \hat{s}'_+)(s_0^*, \hat{t}'_+).$$

\noindent Part (c) yields a comparison between $(s_0^*, \hat{s}_+)(s_0^*, \hat{t}_+)$ and $(s_0^*, \hat{s}'_+)(s_0^*, \hat{t}'_+)$, and transitivity implies the same comparison between $st$ and $s't'$.

For part ii), note that step (b) in the construction tells us that $st \succeq s't$ whenever $s \succeq_0 s'$---we just need to check that the comparisons we added in part (c) did not flatten strict comparisons of this form.  Equivalently, we need to show that the comparisons we derived in part i) do not depend on our choice of $s_0^*$.  Fix $s_0^*$ given by Assumption \ref{as:separable2}, and consider any other action $t_0 \in S_0$ with at least three essential opponents.  Fix opponent profiles $s_+, t_+, s'_+, t'_+$ such that
$$(s_0^*, s'_+) \simeq_0 (t_0, t'_+) \succ_0 (s_0^*, s_+) \simeq_0 (t_0, t_+).$$

\noindent Using positive affine transformations, we can rescale and translate the functions $\{g_i\}_{i=1}^n$ so that
$$\sum_{i=1}^n g_i(s_0^*, s_i) = \sum_{i=1}^n g_{i}(t_0, t_i) = 0, \quad \text{and} \quad \sum_{i=1}^n g_i(s_0^*, s'_i) = \sum_{i=1}^n g_{i}(t_0, t'_i) = 1.$$

\noindent We need to show that $(s_0^*, s''_+) \simeq_0 (t_0, t''_+)$ whenever
$$\sum_{i=1}^n g_i(s_0^*, s''_i) = \sum_{i=1}^n g_{i}(t_0, t''_i),$$

\noindent which implies that we get the same comparisons in part i) if we use $t_0$ instead of $s_0^*$.

Since the utility \eqref{eq:addsepproof} represents $\succeq_0$ when restricted to either $\{s_0^*\} \times S_+$ or $\{t_0\} \times S_+$, we know that our focal player is indifferent if either both sums are $0$ or both sums are $1$.  Moreover, for any other $x \in \bb{R}$, if there exist profiles $(s_0^*, \hat{s}_+) \simeq_0 (t_0, \hat{t}_+)$ such that
$$\sum_{i=1}^n g_i(s_0^*, \hat{s}_i) = \sum_{i=1}^n g_{i}(t_0, \hat{t}_i) = x,$$

\noindent then our focal player is always indifferent when both sums are equal to $x$.  Let $X$ denote the set of $x \in \bb{R}$ such that $(s_0^*, s''_+) \simeq_0 (t_0^*, t''_+)$ whenever
$$\sum_{i=1}^n g_i(s_0^*, s''_i) = \sum_{i=1}^n g_{i}(t_0, t''_i) = x.$$

\noindent We first show that $[0,1] \in X$.  We then show that $X$ contains the entire range of our utility on $\{s_0^*\} \times S_+$.
    
Focusing first on $[0,1]$, we show inductively that $X$ includes a grid with mesh size $\frac{1}{2^k}$ for each $k$, and continuity then implies that $X$ includes the entire interval.  Since there are at least three essential opponents at $s_0^*$, and each $g_i$ is continuous, we can modify the actions in the profile $s'_+$ so that
$$g_1(s_0^*, s'_1) + \sum_{i=2}^n g_i(s_0^*, s_i) = g_1(s_0^*, s_1) + \sum_{i = 2}^n g_i(s_0^*, s'_i) = \frac{1}{2}.$$

    
\noindent Without loss, we are assuming that opponent $1$ is essential at both $s_0^*$ and $t_0$, and note that if boundaries on the action sets $S_i$ would prevent such a construction, we could simply start from an $s'_+$ closer to $s_+$ for our initial scaling.  Similarly, we can modify the actions in $t'_+$ so that
$$g_1(t_0, t'_1) + \sum_{i=2}^n g_i(t_0, t_i) = g_1(t_0, t_1) + \sum_{i = 2}^n g_i(t_0, t'_i) = \frac{1}{2}.$$


\noindent Suppose that $(s_0^*, s_1, s'_{-1}) \succ_0 (t_0, t_1, t'_{-1})$, which implies also that $(s_0^*, s'_1, s_{-1}) \succ_0 (t_0, t'_1, t_{-1})$.  These two comparisons together with $(t_0, t_+) \succeq_0 (s_0^*, s_+)$ and $(t_0, t'_+) \succeq_0 (s_0^*, s'_+)$ violate joint order and strategic independence.  We can derive a similar violation if $(s_0^*, s_1, s'_{-1}) \prec_0 (t_0, t_1, t'_{-1})$, so we conclude that $(s_0^*, s_1, s'_{-1}) \simeq_0 (t_0, t_1, t'_{-1})$, and hence $\frac{1}{2} \in X$.  Analogous constructions then show that $\frac{1}{4}, \frac{3}{4} \in X$, and so on for the rest of our grid.


Having established that $X$ contains an interval, let $I$ denote a maximal interval in $X$, and note that continuity of $\succeq_0$ implies that $I$ is closed.  If $I$ does not cover the entire range of our utility, consider an endpoint of $I$---without loss take $b = \max \, I$, and suppose for any $\epsilon > 0$ there exists $x_\epsilon \in (b, b + \epsilon)$ with $x_\epsilon \notin X$.  Fix profiles $s^*_+$ and $t^*_+$ such that
$$\sum_{i=1}^n g_i(s_0^*, s^*_i) = \sum_{i=1}^n g_{i}(t_0, t^*_i) = x_\epsilon$$

\noindent For small enough $\epsilon$, we can find $\hat{s}_+$ and $\hat{t}_+$ such that
$$g_1(s_0^*, s^*_1) + \sum_{i = 2}^n g_i(s_0^*, \hat{s}_i) = g_1(s_0^*, \hat{s}_1) + \sum_{i = 2}^n g_i(s_0^*, s^*_i) = b, \text{ and}$$
$$g_1(t_0, t^*_i) + \sum_{i = 2}^n g_i(t_0, \hat{t}_i) = g_1(t_0, \hat{t}_i) + \sum_{i = 2}^n g_i(t_0, t^*_i) = b.$$


\noindent Since $b-\epsilon \in X$ for sufficiently small $\epsilon$, we have
$$(s_0^*, \hat{s}_+) \simeq_0 (t_0, \hat{t}_+), \quad (s_0^*, s^*_1, \hat{s}_{-1}) \simeq_0 (t_0, t^*_1, \hat{t}_{-1}), \quad \text{and} \quad (s_0^*, \hat{s}_1, s^*_{-1}) \simeq_0 (t_0, \hat{t}_{1}, t^*_{-1}),$$

\noindent and joint order and strategic independence implies $(s_0^*, s^*_+) \simeq_0 (t_0, t^*_+)$.  We conclude that $I$ was not maximal, and hence $I$ covers the entire range.

For property iii) we first focus on continuity.  We show that the upper contour sets of $\succeq$ in $S^2$, equipped with product topology, are open---analogously, lower contour sets are also open, implying that $\succeq$ is continuous.  That is, given an arbitrary pair $s't'$, we show that the set
$$U_{s' t'} := \{st \, : \, st \succ s' t'\}$$

\noindent is open in $S^2$.  Observe that for any $\hat{s}\hat{t}$, the set
$$\tilde{U}_{\hat{s}\hat{t}} := \{st \, : \, s \succ_0 \hat{s}, \, t \prec_0 \hat{t}\}$$

\noindent is open in $S^2$ since $\succeq_0$ is continuous.  Part (b) of our construction implies that $\tilde{U}_{\hat{s}\hat{t}} \subseteq U_{s' t'}$ for any $\hat{s}\hat{t} \succeq s' t'$.  We show that $U_{s' t'}$ is in fact the union of all such sets $\tilde{U}_{\hat{s} \hat{t}}$---since it is a union of open sets, it is open.

Fix $st$ such that $st \succ s' t'$, pick $s_0^*$ with three essential opponents, and find $\hat{s}_+, \hat{t}_+, \hat{s}'_+, \hat{t}'_+ \in S_+$ so that
$$s \simeq_0 (s_0^*, \hat{s}_+), \quad t \simeq_0 (s_0^*, \hat{t}_+), \quad s^* \simeq_0 (s_0, \hat{s}^*_+), \quad t^* \simeq_0 (s_0, \hat{t}^*_+).$$

\noindent We clearly have
$$(s_0^*, \hat{s}_+) (s_0^*, \hat{t}_+) \simeq st \succ s' t' \simeq (s_0^*, \hat{s}'_+)(s_0^*, \hat{t}'_+).$$

\noindent Since $\succeq$ is continuous on $\{s_0^*\} \times S_+$, we can find $(s_0^*,s''_+) \prec_0 s$ and $(s_0^*, t''_+) \succ_0 (s_0^*, \hat{t}_+) \simeq_0 t$ so that
$$(s_0^*,s''_+)(s_0^*, t''_+) \succeq (s_0^*, \hat{s}'_+)(s_0^*, \hat{t}'_+) \simeq s' t',$$

\noindent and we clearly have $st \in \tilde{U}_{(s_0^*, s''_+)(s_0^*, t''_+)}$ as required.
    
Finally, the hexagon condition follows because $\succeq$ must satisfy the hexagon condition when restricted to $\{s_0^*\} \times S_+$.  Suppose $st \simeq s't'$ and $s'' t' \simeq s't \simeq s t''$.  We can find $\hat{s}_+, \hat{t}_+, \hat{s}'_+, \hat{t}'_+, \hat{s}''_+, \hat{t}''_+$ such that
$$s \simeq_0 (s_0^*, \hat{s}_+), \, t \simeq_0 (s_0^*, \hat{t}_+), \, s' \simeq_0 (s_0^*, \hat{s'}_+), \,t' \simeq_0 (s_0^*, \hat{t}'_+), \,s'' \simeq_0 (s_0^*, \hat{s}''_+), \,t'' \simeq_0 (s_0^*, \hat{t}''_+).$$

\noindent The same equivalences then hold if we replace $s, t, s', t', s'', t''$ with the corresponding profiles involving $s_0^*$, and since the hexagon condition holds on $\{s_0^*\} \times S_+$ we have
$$(s_0^*, \hat{s}'_+)(s_0^*, \hat{t}''_+) \simeq (s_0^*, \hat{s}''_+)(s_0^*, \hat{t}_+) \quad \implies \quad s't'' \simeq s'' t$$

\noindent as required. \qedsymbol

\subsection{Discussion}

\hspace{1 pc}
Assumption \ref{as:separable2} is the most restrictive condition we use in our proof, but as already noted, we can substantially weaken it.  Suppose instead that we can cover $S$ using a collection of open subsets $\{U_k\}_{k = 1}^\infty$ that each individually satisfy Assumption \ref{as:separable2}.  We can then apply Theorem \ref{theo:main1} to each subset $U_k$.  Since the representation is unique up to positive affine transformations, when two subsets $U_k$ and $U_{k'}$ overlap, we can translate and rescale to ensure that the utilities match.  Inducting, we obtain an additively separable utility over all of $S$.  Nevertheless, we do need some assumption along these lines as the following example demonstrates.

\begin{example}\label{ex:obstruction}

    Suppose each player chooses a strictly positive, real-valued action, and our focal player strictly prefers $(s_0, s_+)$ to $(s'_0, s'_+)$ if either:
    \begin{enumerate}
        \item $s_0 \geq \frac{\pi}{2}$ and $s'_0 < \frac{\pi}{2}$, or

        \item $s_0, s'_0 \geq \frac{\pi}{2}$ and $\sum_{i=1}^n s_i > \sum_{i=1}^n s'_i$, or

        \item $s_0, s'_0 < \frac{\pi}{2}$ and $\tan(s_0) \sum_{i=1}^n s_i > \tan(s'_0) \sum_{i=1}^n s'_i$.
    \end{enumerate}

    \noindent One can readily verify that Assumption \ref{as:separable1} holds, as does joint independence.  Indeed from the specification above, we have additively separable representations on two disjoint domains, taking $u_0(\mathbf{s}) = \tan(s_0)\sum_{i=1}^n s_i$ on $(0, \frac{\pi}{2}) \times S_+$ and $u_0(\mathbf{s}) = \sum_{i=1}^n s_i$ on $[\frac{\pi}{2}, \infty) \times S_+$.  Nevertheless, no additively separable utility works on the entire set of strategy profiles $S$.
    
    To see why, consider the restriction to $\{1, \frac{\pi}{2}\}$, allowing our focal player just two actions.  Our focal player strictly prefers $(\frac{\pi}{2}, s_+)$ over $(1, s'_+)$ for any pair of opponent profiles, and holding $s_0$ fixed, she prefers a profile with a higher sum of opponent actions.  Restricted to either $\{1\} \times S_+$ or $\{\frac{\pi}{2}\} \times S_+$, Lemma \ref{lem:wakker} implies that any additively separable representation is a positive affine transformation of $\sum_{i=1}^n s_i$.  If we had an additively separable representation over the whole domain, its restrictions to these two regions would take the form
    $$\beta_1 + \alpha_1\sum_{i=1}^n s_i \hspace{0.5cm}\text{ and }\hspace{0.5cm}
        \beta_{\pi/2} + \alpha_{\pi/2}\sum_{i=1}^n s_i$$

    \noindent for constants $\beta_1, \beta_{\pi/2}$ and $\alpha_1, \alpha_{\pi/2} > 0$.  However, since sums are unbounded, given any such constants, we can find $s_+$ and $s'_+$ such that
    $$\beta_1 + \alpha_1\sum_{i=1}^n s_i > \beta_{\pi/2} + \alpha_{\pi/2}\sum_{i=1}^n s'_i,$$

    \noindent which violates the preference for $s_0 = \frac{\pi}{2}$ over $s_0 = 1$.

\end{example}

\section{Strategically Separable Preferences}\label{sec:stratseparable}

\hspace{1 pc}
Many preferences that appear in network games are not separable.  For instance, a popular way to generalize the classic linear-quadratic utility assumes that preferences depend on a linear aggregate of neighbors' actions---actions are real-valued, and player $i$'s payoff is
\begin{equation}\label{eq:linagg}
u_i(\mathbf{s}) = b(s_i) + \gamma(s_i) \psi\left(\sum_{j \neq i} g_{ij} s_j\right)
\end{equation}

\noindent for some functions $b$, $\gamma$, and $\psi$.  Taking $b(s_i) = b_i s_i-\frac{s_i^2}{2}$ and $\gamma(s_i) =  s_i$, this form includes preferences in recent work by \citet{Belhajetal2014,PariseOzdaglar2019},and \citet{ZenouZhou2023}.  Such preferences are typically not equivalent to a utility of the form \eqref{eq:addseputility}, but they often \emph{are} strategically separable.

Recall our example from the introduction in which $S_i = \bb{R}_+$ for each player $i$, and player $i$'s payoffs are
$$u_i(s) = b_i s_i - s_i^2 + s_i \ln \left(\sum_{j \in G_i} g_{ij} s_j\right).$$

\noindent Although this utility satisfies opponent independence, joint independence fails in general.  To see this, suppose our focal player $0$ faces just two opponents and set $b_0 = 0.15$.  In Definition \ref{def:jointind}, take $s_0 = 1.2$, $s'_0 = 1$, $s_1 = s'_1 = 0.5$, $s_2 = 1$, $s'_2 = 0.5$, $t_1 = t'_1 = 2$, $t_2 = 2.5$, and $t'_2 = 2$.  With these values, we have
$$(s_0, s_1, s_2) \succeq_0 (s'_0, s'_1, s'_2) \quad \text{and} \quad(s'_0, t'_1, s'_2) \succeq (s_0, t_1, s_2)$$

\noindent but
$$(s'_0, s'_1, t'_2) \succeq (s_0, s_1, t_2) \quad \text{and} \quad (s_0, t_1, t_2) \succeq_0 (s'_0, t'_1, t'_2).$$

\noindent Nevertheless, the corresponding family of orders $\{\succeq_{s_{-i}}\}_{s_{-i} \in S_{-i}}$ satisfies strategic independence, and we already saw that the above utility is strategically equivalent to
$$v_i(s) = - e^{s_i} - e^{b_i-s_i} \sum_{j \in G_i} g_{ij} s_j.$$





\noindent Our main result in this section shows that, under somewhat stronger technical restrictions than what we imposed earlier, strategic independence of the family $\{\succeq_{s_+}\}_{s_+ \in S_+}$ ensures that our focal player's preferences are strategically separable.  Utilities of the form \eqref{eq:linagg} that depend on linear aggregates generally satisfy our conditions as long as $\psi$ is monotonic.  Indeed, such preferences imply additional structure, a constant rate of substitution between neighbors, giving the representation of Theorem \ref{theo:CRS}.



The proof of Theorem \ref{theo:main2} follows the same two-step outline as that of Theorem \ref{theo:main1}.  We first construct additively separable utilities on restricted spaces before showing that we can merge them to get a utility over the whole space.  Both steps present new challenges.  In the first, since our primitive does not include preferences over opponent actions, we cannot immediately apply Lemma \ref{lem:wakker}.  Instead, for each pair of actions $s_0, s'_0 \in S_0$, we construct a coordinate independent order $\succeq$ on $S_+$ with the property that, if $s'_0 \succeq_{s_+} s_0$, then $s'_0 \succeq_{s'_+} s_0$ for all $s'_+ \succeq s_+$.  We then apply Lemma \ref{lem:wakker} to get an additively separable utility for each \emph{pair} of actions $s_0, s'_0 \in S_0$.  In the second step, we construct a difference order on $S$ to merge our pairwise utilities, but without direct preference comparisons between opponent profiles, we need additional structure.  The key tool for this step is a new property, \emph{joint solvability}, which is similar in spirit to requiring that payoffs depend on an aggregate of opponent actions.

In the next subsection, we detail technical assumptions.  We then state our result and prove a lemma that constitutes the first step of our outline---we relegate the second step to an appendix.  We subsequently discuss prospects for weakening technical assumptions.

\subsection{Technical Assumptions}\label{sec:assumptions}

\hspace{1 pc}
We divide our assumptions into two rough categories:  those that involve topological properties of action sets and preferences, and those that require sufficiently ``rich'' action sets.  We begin with topological properties.

\begin{definition}\label{def:jointcont}
The family of orders $\{\succeq_{s_+}\}_{s_+ \in S_+}$ is \textbf{jointly continuous} if for any $s^*_0 \in S_0$, whenever we have $s_0 \succ_{s_+}  s^*_0$ ($s_0 \prec_{s_+} s^*_0$), there exists an open neighborhood $U_0$ of $s_0$, and an open neighborhood $U_+$ of $s_+$, such that $t_0 \succ_{t_+} s^*_0$ ($t_0 \prec_{t_+} s^*_0$)for all $t_0 \in U_0$ and $t_+ \in U_+$.

\vspace{1 pc}

\noindent The family has \textbf{connected intervals} if for any two actions $s_0, s'_0 \in S_0$, any opponent $i$, and any two profiles $s_{-i}, s'_{-i}$ for opponents other than $i$, the set
$$\{s_i \, : \, s_0 \succeq_{s_i, s_{-i}} s'_0, \quad s'_0 \succeq_{s_i, s'_{-i}} s_0\}$$

\noindent is connected in $S_i$.
\end{definition}

Joint continuity of the family $\{\succeq_{s_+}\}_{s_+ \in S_+}$ is analogous to assuming continuity of a preference order $\succeq_0$ on $S$.  The definition implies that each $\succeq_{s_+}$ is a continuous order on $S_0$---if $s_0 \succ_{s_+} (\prec_{s_+}) \, s^*_0$, then $t_0 \succ_{s_+} (\prec_{s_+}) \, s^*_0$ for all $t_0 \in U_0$.  Requiring the strict comparison to continue to hold in an open neighborhood of both $s_0$ and $s_+$ is precisely what we need to ensure that the utility we obtain is continuous on $S = S_0 \times S_+$.

Connected intervals is a new condition, but it holds for all standard utilities in network games.  For instance, if our focal player has preferences that follow a linear-quadratic utility
$$u(s) = b s_0 - \frac{s_0^2}{2} + s_0 \sum_{j=1}^n \gamma_j s_j$$

\noindent then for $s_0 > s'_0$, our focal player prefers $s_0$ exactly when
$$\sum_{j=1}^n \gamma_j s_j \geq \frac{s_0 + s'_0}{2} - b_i.$$

\noindent Fixing an opponent $s_i$ and profiles $s_{-i}, s'_{-i}$ for the other opponents, these preferences have connected intervals if the set of $s_i$ such that
$$\gamma_i s_i + \sum_{j \neq i}\gamma_j s_j \geq \frac{s_0 + s'_0}{2} - b_i \quad \text{and} \quad \gamma_i s_i + \sum_{j \neq i}\gamma_j s'_j \leq \frac{s_0 + s'_0}{2} - b_i$$

\noindent is connected.  This set is exactly the interval between
$$\frac{1}{\gamma_i}\left(\frac{s_0 + s'_0}{2} - b_i - \sum_{j \neq i}\gamma_j s_j\right) \quad \text{and} \quad \frac{1}{\gamma_i}\left(\frac{s_0 + s'_0}{2} - b_i - \sum_{j \neq i}\gamma_j s'_j\right),$$

\noindent which is connected.\footnote{If $\sum_{j \neq i} \gamma_j s'_j > \sum_{j \neq i} \gamma_j s_j$, the set is empty and therefore trivially connected. Otherwise, it is a non-empty closed interval.}  More intuitively, what this condition means is that, in an additively separable representation \eqref{eq:addseputility}, the set of $s_i$ such that
$$g_i(s_0, s_i) - g_i(s'_0, s_i) \in [a,b]$$

\noindent is a connected set in $S_i$ for any interval $[a,b]$.  In one-dimensional action spaces, this is equivalent to requiring that the difference above is weakly monotone in $s_i$.


To ensure that we have enough preference comparisons to pin down our representation, we also need to impose richness conditions on the action sets.

\begin{definition}\label{def:richness2}
    Given a family of orders $\{\succeq_{s_+}\}_{s_+ \in S_+}$, opponent $i$ is \textbf{essential} if for any two actions $s_0, s'_0 \in S_0$, there exists a profile of actions $s_{-i}$ for other opponents and two actions $s_i, s'_i \in S_i$ such that
    $$s_0 \succeq_{s_i, s_{-i}} s'_0 \quad \text{and} \quad s'_0 \succeq_{s'_i, s_{-i}} s_0,$$

    \noindent with at least one strict inequality.  Opponent $i$ is \textbf{completely essential} if additionally, for any two actions $s_0, s'_0 \in S_0$ and any $s_{-i} \in S_i$, we can find $s_i \in S_i$ such that
    $$s_0 \simeq_{s_i, s_{-i}} s'_0.$$
    
    \noindent The family of orders $\{\succeq_{s_+}\}_{s_+ \in S_+}$ satisfies \textbf{joint solvability} if for any two completely essential opponents $i,j$, any three actions $s_0, s'_0, s''_0$, any $s_i, s'_i \in S_i$, and $s_j \in S_j$, whenever we have
    $$s_0 \simeq_{s'_i, s_j, s_{-ij}} s'_0, \quad s_0 \simeq_{s'_i, s_j, s'_{-ij}} s''_0, \quad \text{and} \quad s'_0 \simeq_{s'_i, s_j, s''_{-ij}} s''_0$$
    
    \noindent for some profiles $s_{-ij}, s'_{-ij}, s''_{-ij}$ of actions for opponents other than $i$ and $j$, then we can find an action $s'_j \in S_j$ such that
    $$s_0 \simeq_{s_i, s'_j, s_{-ij}} s'_0, \quad s_0 \simeq_{s_i, s'_j, s'_{-ij}} s''_0, \quad \text{and} \quad s'_0 \simeq_{s_i, s'_j, s''_{-ij}} s''_0.$$
\end{definition}

This definition of an essential opponent mirrors the earlier one.  An opponent $i$ is essential if at some profile $s_{-i}$ for opponents other than $i$, opponent $i$'s action can flip our focal player's preference between $s_0$ and $s'_0$.  Here, we require this opponent's action to affect all pairs of comparisons, rather than just one---we could weaken this, but it would complicate the other definitions.  Opponent $i$ is completely essential if, no matter what the other opponents are doing, her choice can still change our focal player's preference between $s_0$ and $s'_0$.  Effectively, this means that the differences
$$g_i(s_0, s_i) - g_i(s'_0, s_i)$$

\noindent are unbounded on both sides in the representation we derive.  While many network games do not have any completely essential opponents---for instance, in linear-quadratic models we typically assume actions are non-negative---we can often extend action sets and preferences in a natural way (e.g., let $S_i = \bb{R}$ in a linear-quadratic game, and keep the same utility function).  To allow a straightforward application of Lemma \ref{lem:wakker}, we require that our focal player has at least three completely essential opponents.

Joint solvability is a more substantive condition.  This says that, however the change from $s_i$ to $s'_i$ affects our focal player's preferences between three actions, we can achieve the same effect on all three comparisons by changing a different opponent's action $s_j$ to some $s'_j$.  This weakens a key property that utilities based on linear aggregates enjoy.  If actions are real-valued, and we have
$$u(s) = f\left(s_0, \sum_{i=1}^n \gamma_i s_i\right),$$

\noindent then we can replicate the impact of changing $s_i$ to $s'_i$ by changing $s_j$ to
$$s'_j = s_j + \frac{\gamma_i}{\gamma_j}(s'_i - s_i).$$

\noindent In this case, our focal player's entire preference order is preserved if we switch from $(s'_i, s_j)$ to $(s_i, s'_j)$.  Joint solvability holds more generally, including for preferences that depend on arbitrary non-linear aggregates.  In our proof, we first obtain an additively separable utility \emph{for each pair of actions} $s_0$ and $s'_0$, and this step does not depend at all on joint solvability.  Joint solvability, together with connected intervals, ensures that we can merge these pairwise utilities into a single additively separable utility over all of $S$.

Assumption \ref{as:basic2} recapitulates all of the technical conditions we need.

\begin{assumption}\label{as:basic2}
    Each action set $S_i$ is a connected topological spaces.  The family of orders $\{\succeq_{s_+}\}_{s_+ \in S_+}$ is jointly continuous and has connected intervals.  At least three opponents are completely essential, and the family satisfies joint solvability.
\end{assumption}

\subsection{The Result}

\hspace{1 pc}
Given our technical assumptions on action sets and preferences, strategic independence is precisely the condition needed to ensure strategic separability.

\begin{Theorem}\label{theo:main2}
    Suppose Assumption \ref{as:basic2} holds for action sets $\{S_i\}_{i=0}^n$ and the family of preference orders $\{\succeq_{s_+}\}_{s_+ \in S_+}$.  The family is strategically separable if and only if it satisfies strategic independence.  Moreover, the representation of $\{\succeq_{s_+}\}_{s_+ \in S_+}$ via a utility of the form
    $$u(s) = \sum_{i=1}^n g_i(s_0, s_i)$$

    \noindent is unique up to a positive affine transformation.
\end{Theorem}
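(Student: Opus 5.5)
Necessity is immediate: if $u(s)=\sum_i g_i(s_0,s_i)$ represents each $\succeq_{s_+}$, then the four comparisons in the definition of strategic independence translate into inequalities on the differences $D_i(s_i):=g_i(s_0,s_i)-g_i(s'_0,s_i)$. Write $D(s_+)=\sum_i D_i(s_i)$. The first two hypotheses give $D_i(s_i)\ge D_i(s'_i)$, and the third, $s'_0 \succeq_{s_i,s'_{-i}} s_0$, gives $D_i(s_i)+\sum_{j\neq i}D_j(s'_j)\le 0$. Combining these, $D_i(s'_i)+\sum_{j\ne i}D_j(s'_j)\le 0$, and this is strict if any hypothesis is strict. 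So the work is sufficiency, which I would organize in the same two steps as the proof of Theorem \ref{theo:main1}.

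\textbf{Step 1: pairwise representations.} Fix $s_0,s'_0$. I would define a relation on $S_+$ by $t_+ \succeq t'_+$ when ``$t_+$ favors $s_0$ over $s'_0$ at least as much as $t'_+$.'' Formally, I would generate it from single-coordinate comparisons: $(s_i,s_{-i})\succeq(s'_i,s_{-i})$ if at some $\hat s_{-i}$ we have $s_0\succeq_{s_i,\hat s_{-i}}s'_0$ and $s'_0\succeq_{s'_i,\hat s_{-i}}s_0$. Strategic independence makes this coordinate comparison independent of $s_{-i}$ and free of reversals. Complete essentiality together with connected intervals and joint continuity should make it a complete, continuous weak order on each $S_i$, with indifference classes pinned down by the connected ``switching'' sets. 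The order on $S_+$ is then the one whose indifference curves include the set $\{t_+: s_0\simeq_{t_+}s'_0\}$, compared coordinatewise.

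To get a representation, I would first show, via solvability from complete essentiality, that every $t_+$ is equivalent to a profile in which only coordinate $1$ moves. I would then define the order by comparing those coordinate-$1$ values. Coordinate independence follows from strategic independence, and with three completely essential opponents, Lemma \ref{lem:wakker} gives continuous $V^{s_0s'_0}(t_+)=\sum_i h_i^{s_0s'_0}(t_i)$, cardinally unique, such that $s_0\succeq_{t_+}s'_0$ iff $V^{s_0s'_0}(t_+)\ge c$ for a threshold that I normalize to $0$.

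\textbf{Step 2: merging, the main obstacle.} The pairwise representations must satisfy a cocycle condition: after rescaling, $h_i^{s_0s''_0}=h_i^{s_0s'_0}+h_i^{s'_0s''_0}$, up to constants. Choosing a base action $s^*_0$ and setting $g_i(s_0,s_i)=h_i^{s_0s^*_0}(s_i)$ then gives the representation. Joint solvability is the tool I would use here. It says that the effect of moving $s_i\to s'_i$ on all three comparisons among $s_0,s'_0,s''_0$ can be replicated by a single move $s_j\to s'_j$. This lets me express every coordinate's $h_i$ in terms of coordinate $j$ simultaneously for all three pairs.

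I would construct a difference order on triples, or equivalently on $S_0\times S_+$, by analogy with the construction in Section \ref{sec:proof1}. For a fixed triple, I would show that the three threshold surfaces are consistent with a single additive function along coordinate $j$, using a grid-halving argument like the one in the proof of Theorem \ref{theo:main1}: establish the identity on dyadic points, extend by continuity, then extend to the whole range by maximality. The required consistency is that the rescaling factors satisfy $\lambda^{s_0s''_0}$-relations making $V^{s_0s'_0}+V^{s'_0s''_0}$ a positive multiple of $V^{s_0s''_0}$.

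Once this holds for every triple, I would normalize the scales on pairs against $s^*_0$ and verify the cocycle identity. Continuity of the resulting $g_i$ in $s_0$ would follow from joint continuity.

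Uniqueness up to positive affine transformations comes from the cardinal uniqueness in Lemma \ref{lem:wakker} applied to any single pair. The triple consistency then forces a common scale across all pairs.
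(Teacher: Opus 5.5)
\textbf{Verdict: there is a genuine gap, in Step 2.} Your necessity argument is correct. Step 1 is a mirror image of the paper's Lemma~\ref{lem:pairwise}: you compare single coordinates witnessed at a common $\hat s_{-i}$, whereas the paper compares blocks $s_{-i}$ witnessed at a common $s^*_i$. Connected intervals is not needed there.

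\textbf{What Step 2 is missing.} Step 2 names the target but supplies no mechanism that forces it. With base $s^*_0$, write $u_0:=\rho\sum_i g_i(s_0,\cdot)$ and $u_1:=\sum_i g_i(s'_0,\cdot)$. The target is that $s_0\succeq_{t_+}s'_0$ exactly when $u_0(t_+)\ge u_1(t_+)$; this is equivalent to your cocycle identity. The halving step in Theorem~\ref{theo:main1} works because joint independence is a primitive axiom linking profiles in different $s_0$-slices. Nothing plays that role here:
\begin{itemize}
\item Strategic independence involves only two own actions at a time.
\item Applied to coordinate-swapped profiles, it only reproduces additivity of each pairwise utility, which Lemma~\ref{lem:jointsolv} already gives you.
\item Suppose you build a ``midpoint'' profile $M$ with $u_0(M)$ halfway between its values at two $(s_0,s'_0)$-indifference profiles. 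No single comparison and no instance of strategic independence rules out $s_0\not\simeq_M s'_0$ with $u_1(M)$ off the line. At such an $M$, $u_0$ and $u_1$ have the same sign, so nothing clashes.
\end{itemize}
The only restriction tying the three pairwise utilities together is transitivity of each $\succeq_{t_+}$. That is a pure sign condition, and your Step 2 never invokes it. Nor does it use connected intervals, which is the assumption the paper actually needs at this point.

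\textbf{How the paper closes it.} To exploit transitivity you must transport increments into a region where the signs conflict. The paper does this by contradiction:
\begin{itemize}
\item Suppose the $(s_0,s'_0)$ comparison disagrees with the sign of $u_0-u_1$ at some $s_+$.
\item Using connected intervals, it builds a chain $s^{k/N}_+$ from the indifference profile $s^*_+$ to $s_+$. Along the chain, $u_0$ moves in equal steps and the $(s_0,s'_0)$ utility $\tilde g$ is monotone.
\item It picks a step on which $u_1$ rises by at most a $1/N$ share of its total change.
\item It subtracts copies of that step repeatedly via Lemma~\ref{lem:jointsolv}, which replicates an increment in all three pairwise utilities at once.
\item This reaches a profile $s^{-K}_+$ with $u_0<0\le u_1$ and $\tilde g\ge 0$, that is, $s_0\prec s^*_0\preceq s'_0\preceq s_0$, a contradiction.
\end{itemize}
Only after this pairwise consistency is established are the scales $\rho$ glued together through a difference order and a second application of Lemma~\ref{lem:wakker}. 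Without some argument that turns joint solvability plus transitivity into linearity of the thresholds, your dyadic-grid, continuity, and maximality steps have nothing to propagate.
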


As before, a utility of the form \eqref{eq:addseputility} clearly implies strategic independence, so we prove only the ``if'' direction.  Mirroring our proof of Theorem \ref{theo:main1}, we construct a difference order on $S$ that is consistent with the preferences $\{\succeq_{s_+}\}_{s_+ \in S_+}$ on $S_0$ and with a coordinate independent order on $S_+$ for each fixed $s_0$.  Applying Lemma \ref{lem:wakker} then gives our required representation.  The following lemma provides the first step in our argument.  Unlike in the last section, we cannot immediately obtain an additively separable utility for each fixed $s_0$ because we do not have preferences over opponent actions.  Instead, for each pair of actions $s_0$ and $s'_0$, we construct a continuous additively separable utility that represents our focal player's preferences between those two actions.

\begin{lemma}\label{lem:pairwise}
    Suppose each action set $S_i$ is a connected topological space, the family of orders $\{\succeq_{s_+}\}_{s_+ \in S_+}$ is jointly continuous, and at least three opponents are completely essential.  If the family satisfies strategic independence, then for any pair of actions $s_0, s'_0 \in S_0$, there exist continuous functions $\{g_i\}_{i = 1}^n$ such that $s'_0 \succeq_{s_+} s_0$ if and only if
    $$\sum_{i=1}^n g_i(s_i) \geq 0.$$

    \noindent Moreover, these functions are unique up to a positive rescaling.
\end{lemma}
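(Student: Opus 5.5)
Fix the pair $s_0, s'_0$. If $s_0 = s'_0$ there is nothing to prove, so assume they differ. The plan is to build a binary relation $\succeq^*$ on $S_+$ that ranks opponent profiles by how favorable they are to $s'_0$ relative to $s_0$. Then check that $\succeq^*$ satisfies the hypotheses of Lemma \ref{lem:wakker}, and read off the representation from the resulting additively separable utility.

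First, define a preliminary relation that changes one coordinate at a time. For profiles differing only in opponent $i$'s action, set $(s^a_i, s_{-i}) \succeq^* (s^b_i, s_{-i})$ if there is some $t_{-i}$ such that switching $i$ from $s^b_i$ to $s^a_i$ against $t_{-i}$ weakly moves the comparison toward $s'_0$. Concretely, this means one of three things: $s'_0 \succeq_{s^a_i,t_{-i}} s_0$ and $s_0 \succeq_{s^b_i,t_{-i}} s'_0$ with one strict; or both comparisons are indifferences; or no flip is possible in either direction. Strategic independence says exactly that such a flip can never go in the opposite direction at another $t_{-i}$. So for each $i$ this yields a well-defined complete preorder on $S_i$, and that preorder does not depend on $s_{-i}$. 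Completeness uses complete essentiality: for any $s_{-i}$ we can solve $s_0 \simeq_{s_i, s_{-i}} s'_0$, so enough comparisons exist. Then let $\succeq^*$ on $S_+$ be the order whose indifference classes are determined by the following cutoff. For each $s_{-i}$, pick $s_i$ with $s_0 \simeq_{s_i,s_{-i}} s'_0$, and declare $s_+ \succeq^* t_+$ when the indifference surfaces $\{s_0 \simeq s'_0\}$ can be reached by coordinatewise moves ordered consistently. An equivalent and cleaner route is to define $s_+ \succeq^* t_+$ iff there is a chain of single-coordinate improvements from $t_+$ to $s_+$, and then take the transitive closure.

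Second, verify that $\succeq^*$ is a weak order that is coordinate independent and continuous, and that it has at least three essential coordinates. Coordinate independence holds by construction, since each coordinate order is independent of the other coordinates. The three essential coordinates come from the completely essential opponents. Continuity should follow from joint continuity: strict comparisons $s'_0 \succ_{s_+} s_0$ persist on an open neighborhood $U_+$. Connectedness of each $S_i$ gives connectedness of $S_+$.

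The main obstacle is showing that the transitive closure does not collapse strict comparisons, i.e.\ that $\succeq^*$ is consistent with the sets $\{s_+ : s'_0 \succeq_{s_+} s_0\}$. The key property to establish is that $s'_0 \succeq_{s_+} s_0$ and $s'_+ \succeq^* s_+$ imply $s'_0 \succeq_{s'_+} s_0$. I would try to show this by induction along coordinatewise chains, using strategic independence at each single-coordinate step. Here strategic independence is applied with $s_{-i}$ and $s'_{-i}$ chosen so that the first three comparisons hold, which forces the fourth.

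Once Lemma \ref{lem:wakker} gives continuous $v_i$ representing $\succeq^*$, the region $\{s_+ : s'_0 \succeq_{s_+} s_0\}$ is an upper contour set of $\sum_i v_i$. This region is nonempty and proper, and its boundary consists of indifference points, which exist by complete essentiality. So it equals $\{\sum_i v_i \geq c\}$ for a single constant $c$, because the indifference set $\{s_0 \simeq s'_0\}$ lies in one $\succeq^*$-class. That membership is itself a consequence of the monotonicity property above applied in both directions. Absorbing $-c$ into $g_1 = v_1 - c$ and $g_i = v_i$ otherwise gives the representation.

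For uniqueness, any other representation $\{\tilde g_i\}$ defines the same sign structure. The map $s_+ \mapsto \sum_i \tilde g_i$ also represents $\succeq^*$, because single-coordinate flips determine both orders, so Lemma \ref{lem:wakker} gives $\tilde g_i = \lambda g_i + c_i$. Matching the zero set forces $\sum_i c_i = 0$. After redistributing the constants across coordinates, which does not change the sum, this is a positive rescaling.
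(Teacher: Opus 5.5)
There is a genuine gap in your construction of $\succeq^*$. Every comparison you generate relates two profiles that differ in a \emph{single} coordinate, and, as you note, the induced ranking $\succeq^*_i$ on $S_i$ does not depend on $s_{-i}$. The transitive closure of chains of single-coordinate improvements is therefore just coordinatewise dominance: $s_+ \succeq^* t_+$ if and only if $s_i \succeq^*_i t_i$ for every $i$. This relation is not complete. Any two profiles ranked in opposite directions in two coordinates are incomparable, so Lemma \ref{lem:wakker} does not apply. Your later claim that the indifference set $\{s_+ : s_0 \simeq_{s_+} s'_0\}$ lies in a single $\succeq^*$-class also fails. Two distinct indifference points are typically incomparable, and monotonicity only rules out their being strictly ranked. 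The ``cutoff/indifference surface'' description is too vague to supply the missing comparisons. The monotonicity step you flag as the main obstacle is easy here precisely because this order contains no information about tradeoffs across opponents, and those tradeoffs are exactly what the additive utility must encode.

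The missing idea is to let the preference flips themselves generate cross-coordinate comparisons. The paper does this by reversing the roles of $\{i\}$ and $-i$ in your definition. It compares profiles that \emph{share} coordinate $i$ and differ in the block $-i$, which has at least two coordinates. It sets $(s_i, s'_{-i}) \succeq (s_i, s_{-i})$ for all $s_i$ whenever some $s^*_i$ satisfies $s_0 \succeq_{(s^*_i, s_{-i})} s'_0$ and $s'_0 \succeq_{(s^*_i, s'_{-i})} s_0$. The completely essential coordinate $i$ thus serves as a measuring rod for changes in the rest of the profile. The remaining steps then go through as follows.
\begin{itemize}
\item Complete essentiality lets you pick $s^*_i$ with $s_0 \simeq_{(s^*_i, s_{-i})} s'_0$, so any two profiles sharing a coordinate become comparable.
\item Strategic independence gives transitivity on each $\{s_i\} \times S_{-i}$ and the monotonicity property.
\item Arbitrary $s_+, s'_+$ are linked through an intermediate profile $(s_1, s^*_2, s'_{-12}) \simeq s'_+$, which shares a coordinate with each of them.
\end{itemize}
With this order, monotonicity for profiles sharing no coordinate is genuinely nontrivial. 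The paper reduces it to the shared-coordinate case by the same device.
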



\begin{proof}
We fix two actions $s_0$ and $s'_0$ and define an order $\succeq$ on $S_+$.  Our interpretation of $\succeq$ is that $s'_{+} \succeq s_{+}$ if $s'_0$ is relatively more attractive to our focal player when opponents play $s'_{+}$ instead of $s_{+}$.

Formally, for each opponent $i$ and any two profiles $s_{-i}, s'_{-i} \in S_{-i}$, we define $(s_i, s'_{-i}) \succeq (s_i, s_{-i})$ for \emph{every} $s_i \in S_i$ if there exists $s^*_i$ such that
$$s_0 \succeq_{(s^*_i, s_{-i})} s'_0, \quad \text{and} \quad s'_0 \succeq_{(s^*_i, s'_{-i})} s_0.$$

\noindent That is, if switching from $s_{-i}$ to $s'_{-i}$ can ever flip our focal player's preference in favor of $s'_0$, we define $(s_i, s'_{-i}) \succeq (s_i, s_{-i})$ for all $s_i$---note the order is coordinate independent by construction.

We need to establish three properties:  i) after taking the transitive closure we have a total order on $S_+$, ii) the order is continuous, and iii) if $s'_0 \succeq_{s_+} s_0$, then $s'_0 \succeq_{s'_+} (\succ_{s'_+}) s_0$ for every $s'_+ \succeq (\succ) s_+$.  Given these properties, we can apply Lemma \ref{lem:wakker} to obtain continuous functions $\{g_i\}_{i=1}^n$ such that
$$s'_+ \succeq s_+ \quad \iff \quad \sum_{i=1}^n g_i(s'_i) \geq \sum_{i=1}^n g_i(s_i).$$

\noindent Moreover, these functions are unique up to a positive affine transformation.  Property iii) implies that we can translate the functions $\{g_i\}$ such that $\sum_{i=1}^n g_i(s_i) = 0$ whenever $s'_0 \simeq_{s_+} s_0$, and we have $s'_0 \succ_{s_+} (\prec_{s_+}) s_0$ whenever the sum is positive (negative).  This gives the desired representation of our focal player's preference between $s'_0$ and $s_0$.

We now establish properties i), ii), and iii).  For property i), note that since each opponent $i$ is completely essential, the comparisons we explicitly added include comparisons between any $s_+$ and $s'_+$ that share at least one coordinate.  These comparisons are also already transitive on each $\{s_i\} \times S_{-i}$.  If $(s_i,s''_{-i}) \succeq (s_i, s'_{-i}) \succeq (s_i, s_{-i})$, then there exist $s^*_i$ and $s^{**}_i$ such that
$$s_0 \succeq_{s^*_i, s_{-i}} s'_0, \quad s'_0 \succeq_{s^*_i, s'_{-i}} s_0, \quad s_0 \succeq_{s^{**}_i, s'_{-i}} s'_0, \quad \text{and} \quad s'_0 \succeq_{s^{**}_i, s''_{-i}} s_0.$$

\noindent Given the first three comparisons above, strategic independence implies that $s_0 \succeq_{s^{**}_i, s_{-i}} s'_0$, and this together with the last comparison reveals that $s''_{-i} \succeq s_{-i}$.

To get a comparison between arbitrary $s_+$ and $s'_+$, we can find $(s_1, s_2^*, s'_{-12}) \simeq s'_+$, and transitivity implies that whatever comparison exists between $s_{+}$ and $(s_1, s_2^*, s'_{-12})$ must also exist between $s_+$ and $s'_+$.  Property ii) follows from joint continuity of the family $\{\succeq_{s_+}\}_{s_+ \in S_+}$.   

For property iii), suppose we have $s'_0 \succeq_{s_+} s_0$, but $s'_0 \prec_{s'_+} s_0$ for some $s'_+ \succeq s_+$---the other case is similar.  If $s'_+$ and $s_+$ share a coordinate, say $s_i = s'_i$, then since $s'_+ \succeq s_+$, we can find $s_i^*$ such that
$$s_0 \succeq_{s_i^*, s_{-i}} s'_0 \quad \text{and} \quad s'_0 \succeq_{s_i^*, s'_{-i}} s_0,$$

\noindent but together with
$$s'_0 \succeq_{s_i, s_{-i}} s_0 \quad \text{and} \quad s_0 \succ_{s_i, s'_{-i}} s'_0$$

\noindent this violates strategic independence.  If $s'_+$ and $s_+$ do not share a coordinate, we can reduce our problem to the previous case.  Since opponents are completely essential, we can find $s_{-1}^*$ so that
$$s'_0 \simeq_{s'_1, s'_2, s^*_{-12}} s_0 \quad \text{and} \quad s'_0 \simeq_{s_1, s_{-1}^*} s_0,$$

\noindent which implies $s'_+ \simeq (s_1, s^*_2, s'_{-12})$, so $(s_1, s^*_2, s'_{-12}) \succeq s_+$.  If $s'_0 \prec_{s_1, s^*_2, s'_{-12}} s_0$, then this conflicts with strategic independence as shown before.  If $s'_0 \succeq_{s_1, s^*_2, s'_{-12}} s_0$, then this comparison together with 
$$s'_0 \succeq_{s'_1, s'_2, s^*_{-12}} s_0, \quad s_0 \succeq_{s_1, s_{-1}^*} s_0, \quad \text{and} \quad s_0 \succ_{s'_+} s'_0$$

\noindent violates strategic independence.  We conclude the property iii) holds, completing the proof.

\end{proof}


We present the remainder of our proof in Appendix \ref{ap:proofmain}.  While Theorem \ref{theo:main2} requires stronger technical conditions than our characterization of separable preferences, it still covers important cases.  In particular, Theorem \ref{theo:main2} applies to preferences that depend on aggregates of opponent actions, which we explore in our next subsection.  Later, in Section \ref{sec:cyclic}, we give an alternative characterization that sheds further light on the tradeoffs required to ensure that strategic independence is sufficient to get an additively separable utility.






\subsection{Local Aggregates}

\hspace{1 pc}
Recall our example from the introduction:
$$u_i(s) = b_is_i - s_i^2 + s_i \ln \left(\sum_{j \in G_i} g_{ij} s_j\right).$$

\noindent A key feature is that player $i$'s utility depends on an aggregate of her neighbors' actions, which implies joint solvability.  If neighbor $j$ reduces her action by $1$, we can increase neighbor $k$'s action by $\frac{g_{ij}}{g_{ik}}$ to restore $i$'s preferences over $S_i$ to what they were.  The example thus illustrates a broader point:  generically, preferences that depend on aggregates of neighbors' actions are strategically separable.

\begin{Cor}\label{cor:addagg}
Suppose $(N, \{S_i\}_{i \in N}, \{u_i\}_{i \in N})$ is a normal form game with real-valued actions, and player $i$ has a continuous utility function
$$u_i(s) = u\left(s_i, \sum_{j \neq i} f_j(s_j)\right),$$

\noindent in which each $f_j$ is a monotone function of $s_j$.  Suppose further that player $i$ has no weakly dominated actions, and for any pair $s_i, s'_i \in S_i$, there is a threshold $\tau$ such that $u(s_i, x) \geq u(s'_i, x)$ precisely when either $x \geq \tau$ or $x \leq \tau$.  If $f_j(s_j)$ ranges over all of $\bb{R}$ for at least three opponents $j$, then player $i$'s preferences are strategically separable.
\end{Cor}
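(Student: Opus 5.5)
The plan is to verify the hypotheses of Theorem \ref{theo:main2}, namely Assumption \ref{as:basic2} together with strategic independence, and then read off strategic separability. Write $X(s_{-i}) = \sum_{j \neq i} f_j(s_j)$. The key observation is that the whole family $\{\succeq_{i,s_{-i}}\}$ factors through the scalar $X$. Fix a pair $s_i, s'_i$ with threshold $\tau(s_i,s'_i)$. Then $s_i \succeq_{i,s_{-i}} s'_i$ holds exactly when $X(s_{-i})$ lies in a closed half-line $[\tau,\infty)$ or $(-\infty,\tau]$. The absence of weakly dominated actions should guarantee that each threshold is finite, that both half-lines are nonempty, and that indifference occurs exactly at $X = \tau$. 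I would first record this reduction. Any preference between two own actions is governed by which side of $\tau$ the aggregate lies on, and the direction of the half-line depends only on the pair.

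Next I would check strategic independence. Suppose switching $s_j \to s'_j$ at $s_{-ij}$ flips the preference from $s_i$ to $s'_i$. This means that moving $X$ by $\Delta = f_j(s'_j) - f_j(s_j)$ crosses $\tau$ in the direction favoring $s'_i$, so $\Delta$ has a definite sign relative to that direction (or is zero with indifferences). At any other $s'_{-ij}$, the same switch shifts $X$ by the same $\Delta$. Hence it cannot cross $\tau$ in the opposite direction. Handling weak versus strict comparisons just uses the half-line structure.

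Then I would verify the conditions of Assumption \ref{as:basic2}. Action sets are real intervals, hence connected. Joint continuity follows from continuity of $u$ and of the $f_j$ (or I would argue via the threshold characterization, after reparametrizing monotone $f_j$ so that $X$ depends continuously on actions). Connected intervals holds because the set of $s_j$ in the definition is the preimage under the monotone $f_j$ of an interval of the form $[\tau - c, \tau - c']$. The three opponents whose $f_j$ have range $\mathbb{R}$ are completely essential, since one can always choose $s_j$ with $X = \tau$ exactly. Joint solvability follows the aggregate argument in the text: given a change from $s_k$ to $s'_k$, choose $s'_j$ with $f_j(s'_j) = f_j(s_j) + f_k(s_k) - f_k(s'_k)$ (surjectivity of $f_j$), which leaves $X$ unchanged at every $s_{-kj}$ and so preserves all three indifferences.

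The main obstacle is the essentiality and joint-continuity bookkeeping, and in particular the fact that Assumption \ref{as:basic2} asks every opponent to be essential in a strong sense. Opponents whose $f_j$ are constant, or have small range, may fail this. I would handle it by noting that the theorem's proof only uses the completely essential opponents to build comparisons. Alternatively, I would absorb inessential opponents into a single aggregate coordinate, giving a game with the three surjective opponents plus the rest. I also need to use no weak dominance to exclude degenerate pairs where $\tau = \pm\infty$. Finally, Theorem \ref{theo:main2} yields the representation $\sum_j g_{ij}(s_i,s_j)$.
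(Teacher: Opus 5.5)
Your overall route is the paper's. You verify Assumption \ref{as:basic2}: joint continuity from continuity of $u_i$, connected intervals from the half-line structure and monotone preimages, complete essentiality of the three opponents with surjective $f_j$, and joint solvability by offsetting one aggregate term against another. You then invoke Theorem \ref{theo:main2}. You also check strategic independence, which the theorem needs and the paper's short proof never mentions, so in that respect you are more complete.

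\textbf{The gap.} Your strategic-independence check rests on a false claim: no weak dominance does \emph{not} force indifference to occur only at $X=\tau$. Suppose $d(x)=u(s_i,x)-u(s'_i,x)$ is negative for $x<\tau$, zero on $[\tau,\tau']$ with $\tau<\tau'$, and positive for $x>\tau'$. Then neither action dominates the other, and each ordered pair still has a closed half-line of weak preference. Yet the aggregates $X_1=\tau$, $X_2=X_3=\tau'$, $X_4=2\tau'-\tau$ are realizable, because $X_1+X_4=X_2+X_3$ and the $f_j$ are onto. They give three indifferences followed by a strict preference for $s_i$. That violates strategic independence, which every additively separable representation satisfies, so the conclusion itself fails. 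A concrete instance meeting all the stated hypotheses (with $f_j$ the identity for three opponents) is $u(s_i,x)=-(s_i-\phi(x))^2$, where $\phi$ is continuous, nondecreasing, onto $\mathbb{R}$, and vanishes exactly on $[0,1]$. Each action is a strict best reply somewhere, but $s_i=1$ and $s'_i=-1$ are indifferent precisely on $[0,1]$.

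So single-point indifference cannot be derived. It has to be read into the threshold hypothesis as one $\tau$ serving both orderings: $s_i\succeq s'_i$ iff $X\ge\tau$, and $s'_i\succeq s_i$ iff $X\le\tau$. This is the reading under which the paper's ``above or below a threshold'' remark makes sense. Under that reading your shift argument is right: $X_1\ge\tau$ and $X_2,X_3\le\tau$ give $X_4=X_2+X_3-X_1\le\tau$, strictly if any premise is strict, and symmetrically for the other orientation. Relatedly, payoff-identical actions are not weakly dominated under the usual definition, yet no opponent is essential for such a pair. Duplicates must therefore be merged before applying the theorem.

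\textbf{The essentiality worry.} Your ``main obstacle'' comes from misreading Assumption \ref{as:basic2}. It requires only that at least three opponents be completely essential, and joint solvability quantifies only over completely essential pairs. The one condition ranging over all opponents, connected intervals, you already obtain from monotone preimages. Opponents with constant or bounded $f_j$ therefore need no special treatment, and the paper simply applies Theorem \ref{theo:main2} as stated. The proof of Lemma \ref{lem:pairwise} does at one point use that each opponent is completely essential, but that is a question about the theorem, not about this corollary.

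Your fallback of collapsing the remaining opponents into one aggregate coordinate is not free. It yields a term $g(s_i,Y)$ that you would still have to split across those opponents, so drop it. Finally, joint continuity follows directly from continuity of $u_i$ on $S$; no reparametrization or continuity of the $f_j$ is needed.
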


\begin{proof}
    The family of orders $\{\succeq_{s_{-i}}\}_{s_{-i} \in S_{-i}}$ is jointly continuous because $u_i$ is continuous, and it has connected intervals because $i$'s preference between any $s_i$ and $s'_i$ depends on whether the aggregate is above or below a threshold.  Since $f_j(s_j)$ ranges over all of $\bb{R}$ for at least three opponents $j$, we have three opponents who are completely essential, and joint solvability holds.  The result follows from Theorem \ref{theo:main2}.
\end{proof}

Corollary \ref{cor:addagg} highlights that many preferences in models of network games, among others, are strategically separable and can thus be represented via additively separable utilities of the form \eqref{eq:addseputility}.  When the aggregate is linear, as in our example, Theorem \ref{theo:CRS} applies, and our representation is necessarily linear in opponent actions.

\section{Proofs of Theorems \ref{theo:CRS} and \ref{theo:linquad}}\label{sec:agg}

\hspace{1 pc}
Having now addressed conditions that lead to separability and strategic separability, we return to the Theorems \ref{theo:CRS} and \ref{theo:linquad}.  Recall that for both of these results, we assume actions are real-valued, and preferences are strategically separable.  Theorem \ref{theo:CRS} asserts that if players preferences feature a constant rate of substitution between neighbors, then the additively separable utility is linear in opponent actions:
$$u_i(\mathbf{s}) = b_(s_i) + \gamma(s_i) \sum_{j \neq i} g_{ij} s_j.$$

\noindent Theorem \ref{theo:linquad} shows that, if additionally best responses are linear, and preferences satisfy midpoint indifference, we get the linear-quadratic form
$$u_i(\mathbf{s}) = b s_i + s_i\sum_{j \neq i} g_{ij} s_j - \frac{s_i^2}{2}.$$


\begin{proof}[Proof of Theorem \ref{theo:CRS}]
    By assumption, player $i$'s preferences are strategically equivalent to
    $$u_i(\mathbf{s}) = \sum_{j \neq i} g_{ij}(s_i, s_j)$$

    \noindent for some functions $\{g_{ij}\}$.  Given a constant rate of substitution between $j$ and $k$, for any $x \in \bb{R}$ we can also represent $i$'s preferences via
    $$u_i^x(\mathbf{s}) = g_{ij}(s_i, s_j - \delta x) + g_{ik}(s_i, s_k + x) + \sum_{\ell \neq i,j,k} g_{i\ell}(s_i, s_\ell)$$

    \noindent for some constant $\delta$.  Since the representation of the difference $u_i(s'_i, s_{-i}) - u_i(s_i, s_{-i})$ is unique up to scaling, we have
    \begin{align*}
        g_{ij}(s'_i, s_j) &- g_{ij}(s_i, s_j) + g_{ik}(s'_i, s_k) - g_{ik}(s_i, s_k) \\
        &=  g_{ij}(s'_i, s_j- \delta x) - g_{ij}(s_i, s_j-\delta x) + g_{ik}(s'_i, s_k + x) - g_{ik}(s_i, s_k + x)
    \end{align*}
    
    \noindent for all $s_i, s'_i, s_j, s_k, x$.  Fix $s'_i = s_i^0$ and $s_j = s_j^0$, and rearrange to get
    \begin{align*}
        g_{ik}(s_i, s_k+x) &- g_{ik}(s_i^0, s_k + x) + g_{ik}(s_i^0, s_k) - g_{ik}(s_i, s_k) \\
        &= g_{ij}(s_i, s_j^0) - g_{ij}(s_i^0, s_j^0) + g_{ij}(s_i^0, s_j^0 - \delta x) - g_{ij}(s_i, s_j^0- \delta x).
    \end{align*}

    \noindent This holds for all $s_i$, $s_k$, and $x$.  In particular, the left hand side is constant in $s_k$, so we can rewrite this as
    $$g_{ik}(s_i, s_k + x) - g_{ik}(s_i^0, s_k+x) = g_{ik}(s_i, s_k) - g_{ik}(s_i^0, s_k) + h(s_i, x)$$

    \noindent for some function $h(s_i, x)$.
    
    Now define
    $$b_k(s_i) = g_{ik}(s_i, 0) - g_{ik}(s_i^0, 0) \quad \text{and} \quad \tilde{g}_{ik}(s_i, s_k) = g_{ik}(s_i, s_k) - g_{ik}(s_i^0, s_k) - b_k(s_i),$$
    
    \noindent and substituting these definitions above gives
    $$\tilde{g}_{ik}(s_i, s_k +x) = \tilde{g}_{ik}(s_i, s_k) + h(s_i, x).$$

    \noindent By definition we have $\tilde{g}_{ik}(s_i, 0) = 0$ for all $s_i$, and substituting this gives $h(s_i, x) = \tilde{g}_{ik}(s_i, x)$.  We therefore arrive at Cauchy's functional equation in the second variable:
    $$\tilde{g}_{ik}(s_i, s_k +x) = \tilde{g}_{ik}(s_i, s_k) + \tilde{g}_{ik}(s_i, x)$$

    \noindent for all $s_i, s_k, x$.  Since all of the functions are continuous, the only solution is
    $$\tilde{g}_{ik}(s_i, s_k) = \gamma_{ik}(s_i) s_k $$

    \noindent for some function $\gamma_{ik}(s_i)$.  Taking $f_k(s_k) = g_{ik}(s_i^0, s_k)$, it follows that
    $$g_{ik}(s_i, s_k) = b_k(s_i) + f_k(s_k) + \gamma_{ik}(s_i)s_k .$$

    \noindent Repeating the argument for $j$, replacing $x$ with $-\frac{x}{\delta}$, gives
    $$g_{ij}(s_i, s_j) = b_j(s_i) + f_j(s_j) + \gamma_{ij}(s_i) s_j $$

    \noindent for some $b_j, f_j, \gamma_{ij}$.

    It remains to show that $\gamma_{ij}(s_i)$ is a constant multiple of $\gamma_{ik}(s_i)$.  Recall that we had
    \begin{align*}
        g_{ij}(s'_i, s_j) &- g_{ij}(s_i, s_j) + g_{ik}(s'_i, s_k) - g_{ik}(s_i, s_k) \\
        &=  g_{ij}(s'_i, s_j- \delta x) - g_{ij}(s_i, s_j-\delta x) + g_{ik}(s'_i, s_k + x) - g_{ik}(s_i, s_k + x)
    \end{align*}

    \noindent for any value of $x$.  Substituting our expressions for $g_{ij}$ and $g_{ik}$ and canceling common terms, we get
    $$\left(\gamma_{ik}(s'_i) - \gamma_{ik}(s_i)\right)x = \left(\gamma_{ij}(s'_i) - \gamma_{ij}(s_i)\right)\delta x,$$

    \noindent from which the conclusion follows.
\end{proof}

Together Theorems \ref{theo:CRS} and \ref{theo:main2} show that two key properties---strategic independence and a constant rate of substitution---combine to give an especially convenient canonical form for players' utility.  Almost all utilities used in models of network games satisfy these properties and are therefore strategically equivalent to utilities
$$u_i(\mathbf{s}) = b(s_i) + \gamma(s_i) \sum_{j \neq i} g_{ij} s_j.$$

\noindent This class offers a principled way to build on the standard linear-quadratic framework.  An important caveat is that strategic equivalence does not justify using utilities of this form for welfare comparisons.  We can obtain robust results on equilibria and comparative statics, but other utility specifications may make more sense for welfare analysis.

\begin{proof}[Proof of Theorem \ref{theo:linquad}]
    Given strategic separability and continuous CRS preferences, by Theorem \ref{theo:CRS} we can assume a representation
    $$u_i(\mathbf{s}) = b(s_i) + \gamma(s_i) \sum_{j \neq i} g_{ij} s_j.$$

    \noindent Assuming linear best responses, we can rescale the constants $g_{ij}$ so that best replies take the form $s_i^*(s_{-i}) = a + \sum_{j \neq i} g_{ij} s_j$.  The action $s_i$ is then a best response exactly when opponents choose actions such that
    $$\sum_{j \neq i} g_{ij} s_j = s_i - a.$$

    \noindent Given any three action $s_i$, $s'_i$, and $s''_i$, midpoint indifference gives us the following three equalities:
    $$b(s_i) + \gamma(s_i) \left(\frac{s_i + s'_i}{2} - a\right) = b(s'_i) + \gamma(s'_i) \left(\frac{s_i + s'_i}{2} - a\right)$$
    $$b(s'_i) + \gamma(s'_i) \left(\frac{s'_i + s''_i}{2} - a\right) = b(s''_i) + \gamma(s''_i) \left(\frac{s'_i + s''_i}{2} - a\right)$$
    $$b(s_i) + \gamma(s_i) \left(\frac{s_i + s''_i}{2} - a\right) = b(s''_i) + \gamma(s''_i) \left(\frac{s_i + s''_i}{2} - a\right).$$

    \noindent Adding the first two equalities together and rearranging gives
    $$b(s_i)-b(s''_i) = \gamma(s''_i) \left(\frac{s'_i + s''_i}{2} - a\right) + \gamma(s'_i) \frac{s_i - s''_i}{2} - \gamma(s_i)\left(\frac{s_i + s'_i}{2} - a\right),$$

    \noindent and using this together with the third equality, we have
    \begin{align*}
        b(s_i) - b(s''_i) &= \left(\gamma(s''_i) - \gamma(s_i)\right)\left(\frac{s_i + s''_i}{2} - a\right) \\
        &= \gamma(s''_i) \left(\frac{s'_i + s''_i}{2} - a\right) + \gamma(s'_i) \frac{s_i - s''_i}{2} - \gamma(s_i)\left(\frac{s_i + s'_i}{2} - a\right).
    \end{align*}

    \noindent After canceling terms and simplifying, we have
    $$\gamma(s''_i) (s_i - s'_i) + \gamma(s_i) (s'_i - s''_i) + \gamma(s'_i) (s''_i - s_i) = 0.$$

    \noindent Assuming distinct values of $s_i$, $s'_i$, and $s''_i$, we can divide by all three differences and rearrange to get
    $$\frac{\gamma(s''_i) - \gamma(s'_i)}{(s'_i - s''_i)(s''_i - s_i)} - \frac{\gamma(s'_i) - \gamma(s_i)}{(s_i - s'_i)(s''_i - s_i)} = \frac{\gamma(s''_i) - \gamma(s'_i)}{s'_i - s''_i} - \frac{\gamma(s'_i) - \gamma(s_i)}{s_i - s'_i} = 0$$

    \noindent for every $s_i, s'_i, s''_i$.  In particular, this means that $\frac{\gamma(s'_i) - \gamma(s_i)}{s'_i - s_i}$ is a constant, implying $\gamma$ is a linear function. Without loss, take $\gamma(s_i) = ms_i$,\footnote{Adding a constant to $\gamma(s_i)$ does not impact strategic equivalence.} and we substitute back into our earlier expression to get
    \begin{align*}
    b(s_i) - b(s''_i) &= \left(\gamma(s''_i) - \gamma(s_i)\right) \left(\frac{s_i + s''_i}{2} - a\right) \\
    &= m (s''_i - s_i) \left(\frac{s_i + s''_i}{2} - a\right) \\
    &= m\left(a s_i -\frac{1}{2}s_i^2 - a s''_i + \frac{1}{2} (s''_i)^2\right).
    \end{align*}

\noindent From this we conclude that $b(s_i) = m\left(a s_i - \frac{s_i^2}{2}\right)$, giving the linear-quadratic form---note we can rescale the utility to eliminate the constant $m$.
\end{proof}

Building on Theorem \ref{theo:CRS}, Theorem \ref{theo:linquad} delivers an exact characterization of the standard linear-quadratic utility.  In addition to separability, a linear-quadratic utility assumes a constant rate of substitution between neighbors, a linear best-response map, and midpoint indifference.

\section{Beyond Strategic Independence}\label{sec:cyclic}

\hspace{1pc}
Getting to an additively separable utility using strategic independence alone requires several assumptions on action sets and preferences that may not always hold.  In the absence of such restrictions, when are preferences strategically separable?  In this section, we provide a tight characterization using the concept of a \emph{balanced sequence}. Our approach applies equally well, after straightforward modifications, to characterize separability.  To avoid redundancy, we present results only for strategic separability.


\begin{definition}\label{def:cyclicsequence}
Given a family of preference orders $\{\succeq_{s_+}\}_{s_+ \in S_+}$, a \textbf{balanced sequence} is a collection of strategy profile pairs $\{(r^k, t^k)\}_{k=1}^m \subseteq S \times S$ such that
\begin{enumerate}
\item $r^k_+ = t^k_+$ for each $k$

\item $r^k_0 \succeq_{r^k_{+}} t^k_0$ for each $k$, with strict preference for at least one $k$

\item For each opponent $i$, each pair of actions $(s_0, s_i)$ appears the same number of times in the sequence $(r^1_0, r^1_i), (r^2_0, r^2_i),...,(r^m_0, r^m_i)$ as it does in $(t^1_0, t^1_i), (t^2_0, t^2_i),...,(t^m_0, t^m_i)$
\end{enumerate}
We refer to $m$ as the length of the balanced sequence.

\end{definition}

In the simplest case, a balanced sequence involves two actions $s_0, s'_0$ for the focal player and four pairs of strategy profiles:
$$s_0 \succeq_{s_A, s_B} s'_0, \quad s'_0 \succeq_{s'_A, s_B} s_0, \quad s'_0 \succeq_{s_A, s'_B} s_0, \quad s_0 \succeq_{s'_A, s'_B} s'_0,$$

\noindent in which $\{A,B\}$ is a partition of the opponents, and at least one comparison is strict. Any such sequence is inconsistent with a utility of the form \eqref{eq:addseputility}.  To see this, observe that the four comparisons in turn imply that
$$\sum_{j}g_{0j}(s_0,s_j)\geq \sum_{j}g_{0j}(s_0',s_j),$$
$$\sum_{j\in A} g_{0j}(s_0',s_j')+ \sum_{j\in B}g_{0j}(s_0',s_j)\geq \sum_{j\in A} g_{0j}(s_0,s_j')+ \sum_{j\in B}g_{0j}(s_0,s_j),$$
$$\sum_{j\in A} g_{0j}(s_0',s_j)+ \sum_{j\in B}g_{0j}(s_0',s_j')\geq \sum_{j\in A} g_{0j}(s_0,s_j)+ \sum_{j\in B}g_{0j}(s_0,s_j'), \quad \text{and}$$
$$\sum_{j}g_{0j}(s_0,s_j')\geq \sum_{j}g_{0j}(s_0',s_j').$$

\noindent Since each term appears on the left-hand side as often as it does on the right-hand side, and at least one comparison is strict, adding all of these together gives
\begin{align*}
    \sum_j g_{0j}(s_0,s_j) + &g_{0j}(s_0,s_j') + g_{0j}(s_0',s_j)+g_{0j}(s_0',s_j') \\
     & > \sum_j g_{0j}(s_0,s_j) + g_{0j}(s_0,s_j') + g_{0j}(s_0',s_j) + g_{0j}(s_0',s_j'),
\end{align*}

\noindent a contradiction.  Consequently, no additively separable representation exists. 

Ruling out short balanced sequences exactly captures our intuition about bilateral strategic interactions.  In fact, strategic independence simply rules out particular balanced sequences of length $4$.  Ruling out longer balanced sequences builds on this idea, precluding more complicated strategic dependencies across opponents.  Our next result shows that the absence of balanced sequences of any length is a necessary and sufficient condition for strategic separability on finite action sets.


\begin{Theorem}\label{theo:cyclic}
 Suppose that each $S_i$ is finite.  The utility $u_0$ is strategically separable if and only if its induced family of orders $\{\succeq_{s_+}\}_{s_+ \in S_+}$ does not admit a balanced sequence.
\end{Theorem}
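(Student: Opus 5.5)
Necessity was already argued in the text: given a representation $\sum_{j} g_{0j}(s_0,s_j)$ and a balanced sequence, summing the inequalities $\sum_j g_{0j}(r^k_0,r^k_j)\ge \sum_j g_{0j}(t^k_0,t^k_j)$ over $k$ gives a strict inequality between two identical sums, because condition 3 makes the multisets of terms on each side coincide. The work is sufficiency, and I would prove it as a finite linear feasibility problem using a strict/weak version of Farkas' lemma (Motzkin's transposition theorem).

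The unknowns are the reals $x_{j}(a,b)$ for each opponent $j$ and each pair $(a,b)\in S_0\times S_j$. This is a finite vector space, say $\bb{R}^D$. For each $s_+\in S_+$ and each ordered pair $s_0\neq s'_0$, the profile pair $(r,t)=((s_0,s_+),(s'_0,s_+))$ defines a vector $c_{rt}\in\bb{R}^D$ with entry $+1$ at $(j,s_0,s_j)$ and $-1$ at $(j,s'_0,s_j)$ for each $j$. The representation $u(s)=\sum_j x_j(s_0,s_j)$ strategically represents the family exactly when:
\begin{itemize}
\item $c_{rt}\cdot x>0$ whenever $s_0\succ_{s_+}s'_0$;
\item $c_{rt}\cdot x\ge 0$ whenever $s_0\simeq_{s_+}s'_0$.
\end{itemize}
Indifference is encoded by including both orientations of the pair. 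By completeness of each $\succeq_{s_+}$, these two conditions give the "only if" direction of representation as well. So strategic separability is equivalent to feasibility of a finite system $Ax>0$, $Bx\ge 0$.

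By Motzkin's theorem, this system is infeasible iff there are nonnegative vectors $y$ and $z$, with $y\neq 0$, such that $A^\top y+B^\top z=0$. Since $A$ and $B$ have integer entries, the solution set of this dual system is a rational polyhedral cone. So if a solution exists, a rational one exists, and after clearing denominators an integer one. I would then read the integer multipliers as multiplicities: repeat each comparison $(r,t)$ as many times as its coefficient. This produces a list of pairs in which every $r^k_0\succeq_{r^k_+}t^k_0$ holds, at least one holds strictly (because $y\neq0$), and $r^k_+=t^k_+$ by construction. The condition $\sum_k c_{r^kt^k}=0$ says that for each $j$ and each $(a,b)$, the number of times $(a,b)$ appears among the $(r^k_0,r^k_j)$ equals the number of times it appears among the $(t^k_0,t^k_j)$. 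That is exactly condition 3, so the list is a balanced sequence. Taking the contrapositive, the absence of balanced sequences gives feasibility, and hence a strategically separable representation.

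The main obstacle is getting the duality bookkeeping exactly right. Three points need care:
\begin{itemize}
\item Applying the strict/weak alternative correctly, in particular checking that indifferences entered in both orientations contribute legitimate weak comparisons to a balanced sequence.
\item Justifying the rationality step that converts real multipliers into integer multiplicities.
\item Handling degenerate cases: when there are no strict preferences at all, the constant utility works trivially; comparisons with $s_0=s'_0$ are omitted.
\end{itemize}
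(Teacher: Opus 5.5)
Your proposal is correct and takes essentially the same route as the paper: it also encodes strategic separability as a finite system $A\mathbf{g}\gg 0$, $B\mathbf{g}\geq 0$ indexed by triples $(s_0,s_0',s_+)$, applies Motzkin's transposition theorem, and reads integer multipliers $p>0$, $q\geq 0$ with $pA+qB=0$ as repetition counts, so that $pA+qB=0$ is exactly condition (c) of a balanced sequence. The differences are cosmetic: the paper's weak block $J$ contains all weak comparisons rather than only indifferences, and it gets integrality directly from its integer form of Motzkin's theorem rather than from your rational-cone argument.
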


Theorem \ref{theo:cyclic} helps us to interpret earlier findings.  We can always establish strategic separability by ruling out balanced sequences of any length.  Theorem \ref{theo:main2} tells us that, with three completely essential opponents and preferences that satisfy joint solvability, we can get away with only checking short sequences.  The ``only if'' direction follows from a similar argument as in our example above.  Given a balanced sequence, if we had an additively separable utility $u$, property $(c)$ would imply that
$$\sum_{k=1}^m u(r^k) = \sum_{k=1}^m u(t^k),$$

\noindent but property $(b)$ implies
$$\sum_{k=1}^m u(r^k) > \sum_{k=1}^m u(t^k).$$

Of course, many network games following \citet{Ballesteretal2006} have infinite action sets, so Theorem \ref{theo:cyclic} does not apply directly. Example \ref{ex:obstruction} highlights a potential difficulty:  a single scale might lack the richness to express the properties of $\{\succeq_{s_+}\}_{s_+ \in S_+}$. To accommodate this possibility, we enlarge the range of utility functions. Instead of assigning a single number to each strategy-profile, we assign a collection of utilities to each strategy-profile. Two strategies are then compared lexicographically, as in \cite{blume1991lexicographic1, blume1991lexicographic2}.  

Formally, we say that the family of orders $\{\succeq_{s_+}\}_{s_+ \in S_+}$ admits a \textbf{strategically separable lexicographic} representation if there is a linearly ordered index set $M$ and a collection of functions $g_j:S_0\times S_j\rightarrow \mathbb{R}^{M}$ for $j=1,\dots, n$ such that $$s_0\succeq_{s_+} s_0' \iff \sum_{j=1}^n g_j(s_0,s_j) \geq_{lex} \sum_{j=1}^n g_j(s_0',s_j') $$ where $\geq_{lex}$ is the order on $\mathbb{R}^M$ where  $x\geq_{lex} y$ if and only if $x=y$ or $x_l>y_l$, in which $l$ is the minimal coordinate in which $x$ and $y$ differ.\footnote{We can assume that such a minimum exists by taking the range of each $g_j$ to be such that the set of indices with nonzero values forms a well-ordered set.} We can extend Theorem \ref{theo:cyclic} as follows. 

\begin{Theorem}\label{theo:cyclicinfinite}
    A family of orders $\{\succeq_{s_+}\}_{s_+ \in S_+}$ admits a strategically separable lexicographic representation if and only if there are no balanced sequences.
\end{Theorem}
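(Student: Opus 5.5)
The ``only if'' direction is the same summation argument used for Theorem \ref{theo:cyclic}, now in $\mathbb{R}^M$. Given a lexicographic representation and a balanced sequence, property (c) gives $\sum_k \sum_j g_j(r^k_0, r^k_j) = \sum_k \sum_j g_j(t^k_0, t^k_j)$ as vectors. Property (b) gives $\sum_j g_j(r^k_0,r^k_j) \geq_{lex} \sum_j g_j(t^k_0,t^k_j)$ for every $k$, strictly for some $k$. Since $\geq_{lex}$ is a translation-invariant total order on the ordered group $\mathbb{R}^M$ (with well-ordered supports), these inequalities can be added, which yields a strict lexicographic inequality between two equal vectors, a contradiction.

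For the ``if'' direction I would translate the problem into linear algebra over an ordered field. Let $V$ be the free real vector space with basis $\{e_{(s_0,s_j),j}\}$, one vector for each opponent $j$ and pair $(s_0,s_j)\in S_0\times S_j$. Each comparison $s_0 \succeq_{s_+} s'_0$ defines a vector $d = \sum_j \big(e_{(s_0,s_j),j} - e_{(s'_0,s_j),j}\big)$, which is weak or strict according to the comparison. A separable representation with values in an ordered abelian group $G$ is a homomorphism $\phi: V\to G$ with $\phi(d)\geq 0$ on weak vectors and $\phi(d)>0$ on strict ones. Indifferences contribute both $d$ and $-d$.

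\textbf{Step 1 (finite subproblems).} For a finite subset of the constraints, I apply the Farkas/Motzkin alternative behind Theorem \ref{theo:cyclic}. Either a real-valued $\phi$ satisfies the finite system, or some nonnegative rational combination of the constraint vectors, with positive weight on a strict one, equals zero. Clearing denominators turns such a combination into a balanced sequence, where repetition corresponds to the integer multiplicities. By hypothesis no balanced sequence exists, so every finite subsystem has a real solution.

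\textbf{Step 2 (compactness).} I then pass to the full, possibly infinite, system. A real solution need not exist, as Example \ref{ex:obstruction} shows. Instead I use the compactness theorem of first-order logic for the theory of ordered fields. Add a constant $c_e$ for each basis element, and add the axioms $\sum_j (c_{(s_0,s_j),j}-c_{(s'_0,s_j),j}) \geq 0$ for weak comparisons and $>0$ for strict ones. By Step 1 every finite subtheory has a model in $\mathbb{R}$, so the whole theory has a model: a real-closed ordered field $F\supseteq\mathbb{R}$ (a nonstandard extension) carrying an $F$-valued separable representation. By the Hahn embedding theorem, the ordered additive group of $F$ embeds order-preservingly into a Hahn group $\mathbb{R}^M$ with lexicographic order. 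Here $M$ is the (reverse-ordered) set of archimedean classes, and elements have well-ordered support. Composing the embedding with $c$ gives functions $g_j:S_0\times S_j\to\mathbb{R}^M$, and the inequalities transfer because the embedding is an order-embedding of groups. Sums are computed coordinatewise, so $\sum_j g_j$ is compatible with $\geq_{lex}$ as in the definition.

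The main obstacle is Step 1: the conversion between dual certificates and balanced sequences must be exact. Indifferences must enter as pairs of weak constraints. The existence of a dual certificate with rational coefficients must be justified, which holds because the constraint vectors have integer entries, so the Motzkin alternative can be run over $\mathbb{Q}$. Finally, a combination that uses $-d$ for an indifference $s_0\simeq s'_0$ has to be rewritten as the reversed comparison $s'_0\succeq s_0$, so that every term is a genuine profile pair $(r^k,t^k)$ with $r^k_+=t^k_+$ satisfying conditions (a)--(c) of Definition \ref{def:cyclicsequence}. Once this is checked, the compactness and Hahn-embedding steps are standard.
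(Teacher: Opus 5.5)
Your proposal is correct and follows essentially the paper's route: finite subsystems are solvable over $\mathbb{R}$ by the Motzkin argument behind Theorem \ref{theo:cyclic}, compactness then gives a representation in a linearly ordered abelian group, and Hahn's embedding theorem places it in $\mathbb{R}^{[M]}$ with the lexicographic order (you also write out the easy ``only if'' direction, which the paper leaves implicit). The one deviation is that you run compactness in the theory of ordered fields rather than ordered abelian groups, which is harmless because you only use the ordered additive group; note, however, that compactness alone does not make the resulting model real-closed or an extension of $\mathbb{R}$, and you need neither property.
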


\noindent In the next subsection, we prove the ``if'' direction of Theorem \ref{theo:cyclic}.  In the Appendix, we use logical compactness to derive Theorem \ref{theo:cyclicinfinite} from Theorem \ref{theo:cyclic}.

\subsection{Proof of Theorem \ref{theo:cyclic}}

\hspace{1 pc}
Our proof relies on a close cousin of Farkas' lemma known as Motzkin's Transposition Theorem.  Versions of this result can be found in \citet{border2013alternative} and \citet{StoerWitzgall1970}.

\begin{lemma}[Motzkin's Transposition Theorem]\label{lem:motzkin}
    Let $A, B$ and $C$ be arbitrary rational matrices, each with $d$ columns. Exactly one of the following holds:\footnote{ For two vectors $x$ and $y$, $x\gg y$ means that $x_i>y_i$ for all $i$; $x>y$ means that $x_i\geq y_i$ for all $i$ and $x\neq y$; and $x\geq y$ means that $x_i\geq y_i$ for all $i$.}
\begin{enumerate}
    \item There is a vector $x\in \mathbb{R}^d$ such that 
    \begin{align*}
        &Ax\gg 0 \\        
        &Bx\geq 0 \\        
        &Cx=0
    \end{align*}

    \item There are integer vectors $p>0$, $q\geq 0$ and $r$ such that $$pA + qB + rC=0.$$
    Furthermore, $B$ or $C$ can be omitted.
\end{enumerate}
\end{lemma}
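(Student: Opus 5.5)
The plan is to derive the lemma from the homogeneous Farkas lemma by adding one auxiliary variable, and then use rationality of the data to clear denominators.

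\emph{Both alternatives cannot hold.} Suppose $x$ satisfies (1) and $p,q,r$ satisfy (2). Then
$$0=(pA+qB+rC)x=p(Ax)+q(Bx)+r(Cx).$$
Here $p(Ax)>0$, because $p>0$ is nonzero and nonnegative while $Ax\gg 0$. Also $q(Bx)\geq 0$ and $r(Cx)=0$. This is a contradiction. If $B$ or $C$ is omitted, the same computation works with the corresponding term deleted.

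\emph{If (1) fails, then (2) holds.} Introduce a scalar $t$ and the augmented variable $z=(x,t)\in\mathbb{R}^{d+1}$. Let $M$ be the matrix with row blocks $[A,\,-\mathbf{1}]$ and $[B,\,0]$, and let $C'=[C,\,0]$.

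1. Since (1) fails, every $z$ with $Mz\geq 0$ and $C'z=0$ satisfies $t\leq 0$. Otherwise we could take $t>0$, get $Ax\geq t\mathbf{1}\gg 0$, $Bx\geq 0$ and $Cx=0$, which is (1).
2. So the linear inequality $-t\geq 0$ is implied by the homogeneous system $Mz\geq 0$, $C'z=0$. Apply the homogeneous Farkas lemma in the form ``$c\cdot z\geq 0$ for all $z$ with $Mz\geq0$, $C'z=0$ if and only if $c=yM+rC'$ for some $y\geq 0$ and some $r$.'' With $c=(0,\dots,0,-1)$ this gives $y=(p,q)\geq 0$ and $r$ such that $pA+qB+rC=0$ from the first $d$ coordinates.
3. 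The last coordinate gives $-p\cdot\mathbf{1}=-1$. Hence $p\geq 0$ with $\sum p_i=1$, so $p>0$ in the footnote's sense.
4. Omitting $B$ or $C$ just means dropping the corresponding row block from the construction.

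\emph{Integrality.} The main point needing care is that $p,q,r$ can be taken integer. I would obtain rational multipliers by proving Farkas via Fourier--Motzkin elimination. When all entries of $M$ and $C'$ are rational, eliminating the variables one at a time produces only nonnegative rational combinations of the original rows. Therefore, when the implied inequality is derived, its certificate $(y,r)$ is rational. (Equalities can be handled as pairs of inequalities, so $r$ is a difference of rational nonnegative vectors.)

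Equivalently, the cone $\{yM+rC' : y\geq 0\}$ is generated by rational vectors, and a rational vector lying in it admits a rational representation by Carathéodory's theorem applied to a rational basis. Either way, multiplying $p,q,r$ by a common denominator yields integer vectors with $p>0$, $q\geq 0$ and $pA+qB+rC=0$. The conditions are homogeneous, so scaling by a positive integer preserves them.
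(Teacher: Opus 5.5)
Your proof is correct, and it supplies an argument the paper does not give: the paper states this lemma only with a citation to Border (2013) and Stoer and Witzgall (1970). Your route is the standard textbook reduction. You homogenize the strict inequalities with an auxiliary variable $t$, so that failure of (1) becomes the implied inequality $-t\geq 0$. You then read off $pA+qB+rC=0$ and $p\cdot\mathbf{1}=1$ from the homogeneous Farkas lemma, and finally clear denominators. The citation buys brevity. Your version buys self-containment and makes explicit exactly where rationality enters: the alternative with real multipliers holds for arbitrary real matrices, and the hypothesis on $A,B,C$ is used only to make the certificate integral. That integrality is precisely what the paper needs in the proof of Theorem~\ref{theo:cyclic}, where the entries of $p$ and $q$ become multiplicities of repeated profile pairs in a balanced sequence. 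The matrices there have entries in $\{0,\pm 1\}$, so your hypothesis is satisfied. One phrase deserves tightening. Instead of ``Carath\'eodory's theorem applied to a rational basis,'' say that conic Carath\'eodory expresses $c$ as a nonnegative combination of a linearly independent subset of the rational generators (the rows of $M$ and $\pm$ the rows of $C'$). The coefficients are then the unique solution of a full-column-rank rational linear system and are therefore rational.
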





    Assume all action sets are finite.  We show that if $\{\succeq_{s_+}\}_{s_+ \in S_+}$ admits no balanced sequence, then preferences are strategically separable.  Strategic separability holds if we can find a finite collection of real numbers $\{g_{i}(s_0, s_i): i>0, s_0\in S_0, s_i\in S_i\}$ such that
    $$s_0 \succeq_{s_{+}} s'_0 \quad \iff \quad \sum_{i = 1}^ng_{i}(s_0, s_i) \geq \sum_{i = 1}^n g_{i}(s'_0, s_i) \hspace{1cm} \text{ for all }s_0,s_0'\text{ and }s_+.$$
    
    \noindent This is a finite system of linear inequalities indexed by triples $(s_0, s'_0, s_{+}) \in S_0^2 \times S_{+}$, and a solution exists if and only if the preferences are strategically separable.
    
    Let $I$ be the set of triples corresponding to strict preferences, where $s_0 \succ_{s_{+}} s'_0$, and let $J$ be the set of triples corresponding to weak preferences, where $s_0\succeq_{s_{+}} s'_0$ and $s_0 \neq s'_0$. If $I$ is empty, there are no strict preferences, and $\{\succeq_{s_+}\}_{s_+ \in S_+}$ is trivially strategically separable (e.g., taking $g_{i}(s_0, s_i)=0$ for all $i$, $s_0$, and $s_i$).  Now assume $I$ is nonempty, and order the sets $I$ and $J$ arbitrarily. Define a matrix $A$ with one row for each element of $I$ and one column for each triple $(i, t_0, t_i)$ in which $i>0$, $t_0\in S_0$ and $t_i\in S_i$. Set the entry of $A$ at row $(s_0, s'_0, s_{+})$ and column $(i, t_0, t_i)$ to 
    \[\begin{cases}
        1 & \text{ if }s_0=t_0 \text{ and } s_i=t_i \\
        -1& \text{ if }s_0'=t_0\text{ and } s_i=t_i \\
        0 & \text{ otherwise.}
    \end{cases}\]
    Define an analogous matrix $B$ for the weak comparisons, containing one row for each element of $J$.  A vector $\mathbf{g}$ solves our system of inequalities if $A \mathbf{g} \gg 0$ and $B \mathbf{g} \geq 0$.  By Lemma \ref{lem:motzkin}, a solution exists if and only if there are no integer vectors $p>0$ and $q \geq 0$ such $pA+qB=0$.  We show that if such vectors exist, then we can find a balanced sequence.  Hence, if there is no solution, then a balanced sequence exists, completing our proof.
    
    Given integer vectors $p$ and $q$ as above, we can construct a balanced sequence $(r^k, t^k)_k$ as follows. Suppose that $(s_0,s_0', s_{+})$ is the first element of $I$ such that the corresponding entry of $p$ is positive, and suppose this entry equals $\ell$. Set $r^1=r^2=\cdots = r^{\ell}=(s_0,s_{+})$, and $t^1=t^2=\cdots = t^{\ell}=(s'_0,s_{+})$.  Repeat this for all subsequent positive entries of $p$, and then do the same for the positive entries of $q$. Let $|p|$ and $|q|$ denote the sum of the entries in $p$ and $q$ respectively.  This yields a sequence of length $|p| + |q|$, which satisfies conditions (a) and (b) in Definition \ref{def:cyclicsequence} by construction.  Condition (c) is exactly the requirement that $pA+qB=0$. \qedsymbol

\subsection{Discussion}

\hspace{1 pc}
Theorem~\ref{theo:cyclic} provides a complete revealed–preference test for
strategic separability that is valid \emph{without} additional assumptions: a player’s preferences are additively separable exactly when they
admit no balanced sequence. The cost of weak assumptions is that we now must rule out arbitrarily long balanced sequences. Long balanced sequences are both challenging to interpret and difficult to check in practice. Two auxiliary results, proved in the appendix, illuminate how costly that criterion can be in finite games.

\begin{Prop}
\label{prop:cycle_bound_short}
Let $m:=\prod_{i=0}^{n}|S_i|$ be the number of pure–strategy profiles.  If a
finite preference is \emph{not} strategically separable, then it contains a
balanced sequence whose length does not exceed
$L(m)=(m-1)\exp\left(\frac{m-1}{2}\ln(m-1)\right)$.
\end{Prop}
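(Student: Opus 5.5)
The bound will come from the Motzkin argument in the proof of Theorem~\ref{theo:cyclic}, combined with two facts: sparsity of solutions (Carath\'eodory) and Hadamard's inequality for the integer multipliers. If the preference is not strategically separable, the proof of Theorem~\ref{theo:cyclic} gives nonnegative integer vectors $p>0$, $q\ge 0$ with $pA+qB=0$. The balanced sequence constructed there has length $|p|+|q|$. So it suffices to show that some such pair exists with $|p|+|q|\le L(m)$.

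\textbf{Step 1: reduce the number of comparisons.} Rows of $A$ and $B$ are indexed by triples $(s_0,s'_0,s_+)$. Since the preference is a weak order, each $\succeq_{s_+}$ is determined by comparisons between consecutive elements of $S_0$ listed in its order. Moreover, any row for a pair $s_0\succeq s'_0$ is a nonnegative integer combination of these consecutive-pair rows, telescoping along the chain, and the combination is strict if the original row was strict. Replacing every row of a certificate by its telescoped expansion leaves a valid certificate, at the cost of inflating multipliers. I would therefore set up the system from the start using only consecutive comparisons, plus the reverse weak rows for indifferences. This yields at most $\sum_{s_+}(|S_0|-1)\le m-1$ essentially distinct inequality rows, which is where the factor $m-1$ in $L(m)$ comes from. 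I would check that this reduced system is infeasible if and only if the original one is.

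\textbf{Step 2: find a sparse certificate.} The set of $(p,q)$ with $p,q\ge0$, $pA+qB=0$ and $\sum p=1$ is a nonempty polytope. Take a vertex. By Carath\'eodory, or by the standard basic-solution argument, its support rows are linearly independent modulo the single normalization, so there are at most $m-1$ of them, say $k\le m-1$. The vertex solves a $k\times k$ linear system whose coefficient matrix has entries in $\{-1,0,1\}$ and at most two nonzero entries per column restricted to the rows involved. By Cramer's rule, the vertex is a ratio of minors. Clearing the common denominator, which is the determinant $D$, gives integer multipliers each bounded by the absolute value of a $(k-1)\times(k-1)$ minor.

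\textbf{Step 3: bound the minors, the main obstacle.} Each column of the relevant minor lies in $\{-1,0,1\}$ and has small support; in the crudest case treat every column as having Euclidean norm at most $\sqrt{k-1}$. Hadamard's inequality then bounds each minor by $(k-1)^{(k-1)/2}\le (m-1)^{(m-1)/2}=\exp\!\big(\tfrac{m-1}{2}\ln(m-1)\big)$. Summing at most $m-1$ multipliers gives $|p|+|q|\le (m-1)\exp\big(\tfrac{m-1}{2}\ln(m-1)\big)=L(m)$.

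The delicate points are two. First, I must confirm that the Motzkin/Carath\'eodory support is at most $m-1$ rows and not merely the number of columns. The rank of the row space is at most the number of independent rows, which Step 1 bounds by $m-1$, so $m-1$ does control the support. Second, I must ensure the telescoping in Step 1 does not blow up the multipliers beyond the bound. I would avoid this by working directly with the reduced system rather than converting back, because every reduced row is already a legitimate comparison in Definition~\ref{def:cyclicsequence}, so no conversion back is needed.
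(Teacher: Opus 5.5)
Your proposal is correct and is essentially the paper's argument: reduce via transitivity to fewer than $m$ comparison rows, take a vertex of the dual feasible set, and bound the cleared-denominator multipliers using Cramer's rule plus Hadamard's inequality on $\{-1,0,1\}$ minors, with your normalization $\sum p=1$ standing in for the paper's truncation to the unit cube in its general integral-solution lemma. Two harmless slips: the dual feasible set need not be a polytope, since forward and reverse rows for an indifference let $q$ grow without bound, but it still has a vertex because it lies in the nonnegative orthant; and each row of $A$ or $B$ has $2n$ nonzero entries rather than two, which your crude column-norm bound already covers.
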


Proposition~\ref{prop:cycle_bound_short} tells us that a balanced sequence, if it exists, must appear within an explicit, though exponential, search horizon. Although one might be able to improve the bound, our next result shows that any bound must grow with the number of strategy profiles.  However far one looks, there are games with preferences whose first offending sequence is even longer.  While universally valid, the balanced sequence test may be exceedingly hard to apply.


\begin{Prop}
\label{prop:no_finite_ax_short}
For every integer $L\ge 1$ there is a finite game $\Gamma$ with a $u_i$ that is
\emph{not} strategically separable and yet contains \emph{no} balanced sequence
of length $\le L$.
\end{Prop}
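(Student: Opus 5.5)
The plan is to reduce to a focal player with two actions, where strategic separability becomes an additive-threshold question on a grid, and then adapt the classical argument that additive conjoint structures are not finitely axiomatizable. Fix $L$ and choose $m$ with $2m > L$. Let $S_0=\{a,b\}$ and let there be two opponents with $S_1=S_2=\{1,\dots,m\}$. A third opponent with a singleton action set can be added if one wants $n\ge 3$; nothing uses it.

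\textbf{Construction.} Place the numbers $y_1=0<x_1=1<y_2=2<x_2=3<\dots<y_m=2m-2<x_m=2m-1$ on a line and define $\psi(k,l)=x_k-y_l$. This value is always an odd integer, so it is never $0$. Declare $a\succ_{(k,l)} b$ if $\psi(k,l)>0$ and $b\succ_{(k,l)} a$ if $\psi(k,l)<0$. The one exception is the cell $(m,1)$: there $\psi=2m-1>0$, but we set $b\succ_{(m,1)} a$. A utility $u_0$ inducing these orders clearly exists, since we only need to specify $u_0(a,\cdot)-u_0(b,\cdot)$ cellwise.

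\textbf{Step 1: the preference is not strategically separable.} The $2m$ strict comparisons
\[
a\succ_{(k,k)} b \ (k=1,\dots,m),\qquad b\succ_{(k,k+1)} a \ (k<m),\qquad b\succ_{(m,1)} a
\]
form a balanced sequence. Each row index $k$ and each column index $l$ appears exactly once with $a$ on the preferred side and once with $a$ on the dispreferred side, and the same holds for $b$. So condition (c) of Definition \ref{def:cyclicsequence} holds, and the ``only if'' direction of Theorem \ref{theo:cyclic} gives non-separability.

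\textbf{Step 2: every balanced sequence has length at least $2m$.} Take any balanced sequence. Each pair compares $a$ with $b$ at some cell (pairs with $r^k_0=t^k_0$ are trivial), and each comparison appears in the sequence with some multiplicity. Condition (c) implies that any additively separable function $\phi(s_0,s_1,s_2)=g_1(s_0,s_1)+g_2(s_0,s_2)$ satisfies $\sum_k \phi(r^k)=\sum_k\phi(t^k)$. Apply this with
\[
g_1(a,k)=x_k,\qquad g_2(a,l)=-y_l,\qquad g_1(b,\cdot)=g_2(b,\cdot)=0 .
\]
This gives
\[
\sum_{\text{pairs with } a \text{ preferred}} \psi \;-\; \sum_{\text{pairs with } b \text{ preferred}} \psi \;=\;0 .
\]
Every comparison other than the one at $(m,1)$ agrees in sign with $\psi$, so it contributes a term of at least $1$ to the left side, because $|\psi|\ge 1$. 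A comparison at $(m,1)$ contributes $-(2m-1)$ per occurrence. Comparisons that are weak but not strict cannot occur, since every cell has a strict preference.

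If $(m,1)$ never appears, every term is positive and the sum cannot be $0$. So it appears some $c\ge1$ times, and the remaining entries must total $(2m-1)c$. Since each remaining entry contributes at most $2m-1$ in absolute value but at least $1$, this alone does not force many entries, so the step needs a finer weighting.

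\textbf{Main obstacle.} The difficulty is to make the large correcting terms small, so that many correct entries are needed to cancel each occurrence of $(m,1)$. I would therefore choose different additive test functions rather than the single $\psi$, following the Scott--Suppes style argument. For each $j$, let $\psi^{(j)}$ come from the same interleaving of the $x$'s and $y$'s, but with only the gap between $x_j$ and $y_{j+1}$ equal to $1$ and every other gap equal to a tiny $\varepsilon$; the $x_m$ versus $y_1$ relation is fixed by the order.

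The idea is that under $\psi^{(j)}$, the only correct cells with value of order $1$ are those straddling that particular gap. The wrong cell $(m,1)$ straddles every gap, so it has value of order $1$ under every $\psi^{(j)}$. Running the identity through each $\psi^{(j)}$ and letting $\varepsilon\to0$ forces, for every $j$, at least $c$ entries straddling gap $j$. The $2m-1$ gaps straddled by $(m,1)$ are the gaps $y_k<x_k$ and $x_k<y_{k+1}$, and these distinct gaps each need their own witnesses. I expect the accounting that ensures a single entry cannot serve as the witness for several gaps at once to be the delicate point. If it fails, I would fall back on the $\varepsilon$-weighting with the weight placed on the diagonal and superdiagonal cells only, where cells straddle exactly one gap. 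Granting this step, the sequence has length at least $2m>L$, which proves the proposition.
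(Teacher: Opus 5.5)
Your Step 1 is correct, but Step 2 is false, so the construction does not prove the proposition. For every $m\ge 2$ your game contains a balanced sequence of length $4$:
\[
a\succ_{(1,1)} b,\qquad b\succ_{(1,2)} a,\qquad b\succ_{(m,1)} a,\qquad a\succ_{(m,2)} b .
\]
The signs check out. $\psi(1,1)=1>0$ and $\psi(1,2)=-1<0$. The cell $(m,1)$ is your flipped cell. $\psi(m,2)=2m-3>0$. Rows $1$ and $m$, and columns $1$ and $2$, each contain exactly one cell where $a$ is preferred and one where $b$ is preferred, so condition (c) holds. This is just a violation of strategic independence. Your own identity confirms it: $1+(2m-3)+1-(2m-1)=0$. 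The single cell $(m,2)$ straddles $2m-3$ gaps at once, which is exactly the accounting you flagged as delicate. For $m\ge 3$ this sequence is shorter than $2m$, so no choice of $\varepsilon$-weighted test functions can rescue Step 2.

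The failure is structural, not a matter of weights. Take two focal actions, two opponents, and a strict preference in every cell. Then strategic separability says $a\succ_{(k,l)} b \iff h_1(k)>-h_2(l)$. So the set of cells where $a$ is preferred must be a Ferrers relation (biorder) between $S_1$ and $S_2$, which happens exactly when the row-sets $\{l: a\succ_{(k,l)} b\}$ are nested. A failure of nestedness is precisely a $2\times 2$ pattern like the one above. Hence in any construction of your shape, having no length-$4$ balanced sequence already implies separability.

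To get arbitrarily long minimal balanced sequences you need a genuinely multi-factor comparison. The paper uses $n=2m\ge 4$ opponents and encodes an arbitrary order $\succeq$ on $X^m$ (with $m\ge 2$) via $R'(z_1,z_0,x,y)\iff x\succeq y$. It then argues that a uniform bound $L$ would yield a single first-order axiom ("no balanced sequence of length $\le L$") characterizing strategic separability. Through the encoding, that axiom would give a finite axiomatization of additive separability on $X^m$, contradicting the Scott--Suppes/Luce et al.\ theorem. A direct proof along your lines would need a Scott--Suppes-type family of orders on a product with at least two factors on each side, embedded with at least four opponents, plus an argument that its shortest cancellation violation is long.
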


%


\section{Final Remarks}

\hspace{1 pc}
In addition to uncovering precise conditions behind the classic linear-quadratic utility for network games, our analysis identifies two canonical utilities that relax key assumptions.  If strategic interactions are bilateral, players behave as if their preferences follow an additively separable utility
$$u_i(s_i, s_{-i}) = \sum_{j \neq i} g_{ij}(s_i, s_j).$$

\noindent With real-valued actions, if preferences exhibit a constant rate of substitution between opponents, we can take
$$u_i(s_i, s_{-i}) = b_i(s_i) + \gamma_i(s_i)\sum_{j \neq i} g_{ij} s_j$$

\noindent for some functions $b_i$ and $\gamma_i$, and constants $\{g_{ij}\}_{j \neq i}$.  Results we derive using these functional forms apply to any game in which preferences satisfy these qualitative properties.  Our findings thus clarify implicit assumptions in network game models, and indeed they suggest a natural definition for what it means to be a network game.  Beyond providing a rationale for specific utilities, our results highlight relatively simple revealed preference conditions to test whether such models are appropriate.  Our notion of bilateral strategic interactions exactly corresponds to ruling out a particular class of $4$-cycles in a player's preferences.

By including ``Part I'' in our title, we deliberately call attention to unfinished business.  As our Theorems \ref{theo:CRS} and \ref{theo:linquad} demonstrate, with an additively separable utility in hand, it becomes easier to deduce implications from other qualitative assumptions on preferences.  We expect similar techniques can also shed light on the relationship between network games and potential games.  Conveniently, many network games studied in the literature are also potential games.  This ensures that equilibria exist in pure strategies, and players can realistically learn how to play equilibria from feedback, even if the network is large and complex.  In an ongoing project, which we expect will constitute ``Part II,'' we are working to identify revealed preference conditions implying that a network game has a potential, as well as a canonical form for that potential.

Perhaps a more obvious next task is to derive substantive predictions in network models that do not depend on players having a particular utility function.  Who takes higher or lower actions?  How do shocks to one player's incentives propagate through the network?  Who should we target for interventions?  A popular approach to such questions assumes differentiable utilities and focuses on small shocks, so one can approximate changes using a linear system \citep[e.g.][]{Dasarathaetal2025}.  A complementary approach might leverage our results to better understand responses to larger shocks.

\bibliographystyle{plainnat}
\bibliography{refs}

\appendix

\section{Proof of Theorem \ref{theo:main2}}\label{ap:proofmain}
\hspace{1 pc}
Mirroring the proof of Theorem \ref{theo:main1}, we construct a difference order $\succeq$ on pairs of strategy profiles.  For this construction, we fix a particular $s^*_0 \in S_0$, and for each $s_0 \in S_0$ distinct from $s^*_0$, we apply Lemma \ref{lem:pairwise} to obtain functions $\{g_i\}_{i=1}^n$ such that
$$s_0 \succeq_{s_+} s^*_0 \quad \iff \quad \sum_{i=1}^n g_i(s_0, s_i) \geq 0.$$

\noindent Define $g_i(s^*_0, s_i) = 0$ for all opponents $i$ and actions $s_i \in S_i$.  We now define the difference order in the following steps:
\begin{enumerate}
    \item Define $ss \simeq tt \simeq (s_0^*, s_+)(s_0^*, s'_+)$ for every $s, t \in S$ and every $s_+, s'_+ \in S_+$.

    \item Define $(s_0, s_+)t \succeq (s'_0, s_+) t$ and $t (s'_0, s_+) \succeq t (s_0, s_+)$ whenever $s_0 \succeq_{s_+} s'_0$.

    \item Define $(s_0, s_+)(s_0, t_+) \succeq (s_0, s'_+)(s_0, t'_+)$ whenever
    $$\sum_{i=1}^n g_i(s_0, s_i) - g_i(s_0, t_i) \geq \sum_{i=1}^n g_i(s_0, s'_i) - g_i(s_0, t'_i).$$

    \item For each pair $s_0, s'_0 \in S_0$, with $s_0, s'_0 \neq s^*_0$, choose an opponent profile $s^*_+$ such that $s_0 \simeq_{s^*_+} s'_0 \not\simeq_{s^*_+} s^*_0$, and define
    $$\rho := \frac{\sum_{i=1}^n g_i(s'_0, s_i)}{\sum_{i=1}^n g_i(s_0, s_i)} > 0.$$

    \noindent We now define $(s'_0, s'_+)t \simeq (s_0, \hat{s}_+)t$ and $t(s'_0, s'_+) \simeq t(s_0, \hat{s}_+)$ for all profiles $t$ if
    $$\sum_{i=1}^n g_i(s'_0, s'_i) = \rho \sum_{i=1}^n g_i(s_0, \hat{s}_i).$$

    \item Take the transitive closure.
\end{enumerate}

We show that i) the order $\succeq$ is total on $S^2$ and remains consistent with the preferences $\{\succeq_{s_+}\}_{s_+ \in S_+}$ and with the pairwise utilities $\sum_{i=1}^n g_i(s_0, s_i)$, ii) it is continuous, and iii) it satisfies the hexagon condition.  Given these properties, we can apply Lemma \ref{lem:wakker} to obtain continuous functions $v_1(s)$ and $v_2(s)$ such that
$$st \succeq s' t' \quad \iff \quad v_1(s) + v_2(t) \geq v_1(s') + v_2(t').$$

\noindent As before, part (a) of the construction implies $v_2(s) = -v_1(s) + \lambda$ for some constant $\lambda$, and part (b) implies $v_1$ represents $\{\succeq_{s_+}\}_{s_+ \in S_+}$.  Since the representation is unique up to positive affine transformations, we have
$$v_i(s_0, s_+) = c_1(s_0) \sum_{i=1}^n g_i(s_0, s_i) + c_2(s_0)$$

\noindent for some functions $c_1$ and $c_2$.  Absorbing $c_2$ into one of the terms in the sum yields our representation.

We now turn to establishing properties i), ii), and iii).  We first show that the order is total, meaning we can compare any two pairs $st$ and $s't'$.  Using part (d) of the construction, together with the unbounded influence assumption, we can find $\hat{t}_+$, $\hat{s}'_+$, and $\hat{t}'_+$ so that $st \simeq s (s_0, \hat{t}_+)$ and $s't' \simeq (s_0, \hat{s}'_+)(s_0, \hat{t}'_+)$.  Part (c) now ensures a comparison between $s (s_0, \hat{t}_+)$ and $(s_0, \hat{s}'_+)(s_0, \hat{t}'_+)$, which must carry over by transitivity.
    
We next show that the inclusion of the comparisons (d) do not conflict with the comparisons added in (b) and (c).  This means showing that $s_0 \succeq_{s_+} s'_0$ if and only if
$$\rho\sum_{i=1}^n g_i(s_0, s_i) \geq \sum_{i=1}^n g_i(s'_0, s_i),$$

\noindent in which $\rho$ is the pair specific scaling constant selected in part (d) of the construction---write $s^*_+$ for the corresponding opponent profile at which the sums are equal and $s_0 \simeq_{s^*_+} s'_0$.  If both sums are zero at some $s_+$, we clearly have $s_0 \simeq_{s_+} s^*_0 \simeq_{s_+} s'_0$, and transitivity gives the required comparison.  Without loss, suppose $s_0 \succ_{s_+} s^*_0$, so the sum on the left is positive---the other case is substantively identical.  If the sum on the right is non-positive, we have $s_0 \succ_{s_+} s^*_0 \succeq_{s_+} s'_0$, and transitivity gives the required comparison.

If the sum on the right is also positive, we have more work to do.  We make use of another lemma to address this case.

\begin{lemma}\label{lem:jointsolv}
    Suppose Assumption \ref{as:basic2} holds for action sets $\{S_i\}_{i=0}^n$ and the family of preference orders $\{\succeq_{s_+}\}_{s_+ \in S_+}$.  Take any three actions $s_0, s'_0, s''_0 \in S_0$, and write $\{g_i\}_{i = 1}^n$, $\{g'_i\}_{i=1}^n$, and $\{g''_i\}_{i = 1}^n$ respectively for functions such that
    $$s'_0 \succeq_{s_+} s_0 \, \iff \, \sum_{i=1}^n g_i(s_i) \geq 0, \quad \quad s''_0 \succeq_{s_+} s_0 \, \iff \, \sum_{i=1}^n g'_i(s_i) \geq 0, \text{ and}$$
    $$s''_0 \succeq_{s_+} s'_0 \, \iff \, \sum_{i=1}^n g''_i(s_i) \geq 0.$$

    \noindent Given any two opponents $i, j$, and any actions $s_i, s'_i \in S_i$ and $s_j \in S_j$, we can find $s'_j \in S_j$ so that
    $$g_i(s'_i) - g_i(s_i) = g_j(s'_j) - g_j(s_j), \quad g'_i(s'_i) - g'_i(s_i) = g'_j(s'_j) - g'_j(s_j), \text{ and}$$
    $$g''_i(s'_i) - g''_i(s_i) = g''_j(s'_j) - g''_j(s_j).$$
\end{lemma}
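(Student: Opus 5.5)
The plan is to turn the indifference-based statement of joint solvability (Definition \ref{def:richness2}) into the additive statement of Lemma \ref{lem:jointsolv}. The bridge is that, by Lemma \ref{lem:pairwise}, each pairwise utility is determined up to positive scale by its zero set. The zero set is exactly the indifference locus of the corresponding pair of actions.

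I would treat the three pairs $(s_0,s'_0)$, $(s_0,s''_0)$ and $(s'_0,s''_0)$ in parallel, with $G=\sum_k g_k$, $G'=\sum_k g'_k$ and $G''=\sum_k g''_k$. Fix $i,j$, $s_i,s'_i$ and $s_j$. Because $i$ and $j$ are among the completely essential opponents (or, if not, I would reduce to that case), I can choose profiles $s_{-ij}$, $s'_{-ij}$ and $s''_{-ij}$ with
$$G(s'_i,s_j,s_{-ij})=0,\quad G'(s'_i,s_j,s'_{-ij})=0,\quad G''(s'_i,s_j,s''_{-ij})=0.$$
Complete essentialness of a third opponent $k\neq i,j$ gives this directly: vary $s_k$ until indifference holds. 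These three equalities are precisely the three indifferences in the hypothesis of joint solvability. Joint solvability then yields a single $s'_j$ for which all three indifferences continue to hold after $(s'_i,s_j)$ is replaced by $(s_i,s'_j)$. Translating back through the representations, this gives
$$g_i(s'_i)+g_j(s_j)=g_i(s_i)+g_j(s'_j),$$
and likewise for $g'$ and $g''$. Each of these says that the corresponding difference in $i$ equals the difference in $j$, which is the conclusion. The key point is that each equality involves only the $i$ and $j$ coordinates, since the remaining coordinates are held fixed at $s_{-ij}$, $s'_{-ij}$ or $s''_{-ij}$ respectively.

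The main obstacle is that joint solvability delivers $s'_j$ for one particular choice of the three auxiliary profiles. I must make sure the resulting equalities are genuine identities of the $g$ functions and do not depend on that choice. Additivity handles this: once $G(s_i,s'_j,s_{-ij})=0=G(s'_i,s_j,s_{-ij})$ holds for one $s_{-ij}$, subtracting the two expressions cancels the $s_{-ij}$ terms. So the equality of differences holds irrespective of which $s_{-ij}$ was used.

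A secondary issue is degenerate cases in which, for example, $s'_0\simeq_{s_+}s_0$ for every $s_+$. Essentialness of opponent $i$ rules this out, so each $G$ has nonempty zero sets in every fiber. Finally, I would check that the three auxiliary profiles can be chosen independently of one another. Joint solvability permits distinct $s_{-ij}$, $s'_{-ij}$ and $s''_{-ij}$, so this is allowed, and each one is obtained separately by solving a single equation in the coordinate $s_k$.
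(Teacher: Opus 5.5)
Your proposal is correct and is essentially the paper's proof: choose $s_{-ij},s'_{-ij},s''_{-ij}$ realizing the three indifferences at $(s'_i,s_j)$, apply joint solvability to get $s'_j$, and cancel the common terms. Your use of a third completely essential opponent $k$ to construct these profiles simply makes explicit a step the paper leaves unstated. The one thing to drop is the parenthetical promise to ``reduce'' to completely essential $i,j$: no such reduction exists (if, say, $g_j$ is constant the conclusion fails), so the lemma must be read, as the paper's proof tacitly does, only for completely essential $i,j$, which is exactly where joint solvability is assumed.
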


\begin{proof}
Select profiles of actions $s_{-ij}, s'_{-ij}, s''_{-ij}$ for opponents other than $i$ and $j$ such that
$$s_0 \simeq_{s'_i, s_j, s_{-ij}} s'_0, \quad s_0 \simeq_{s'_i, s_j, s'_{-ij}} s''_0, \quad \text{and} \quad s'_0 \simeq_{s'_i, s_j, s''_{-ij}} s_0,$$

\noindent which implies
$$g_i(s'_i) + g_j(s_j) + \sum_{k \neq i,j} g_k(s_k) = g'_i(s'_i) + g'_j(s_j) + \sum_{k \neq i,j} g'_k(s'_k) = g''_i(s'_i) + g''_j(s_j) + \sum_{k \neq i,j} g''_k(s''_k) = 0.$$

\noindent By joint solvability, we can find $s'_j$ so that
$$s_0 \simeq_{s_i, s'_j, s_{-ij}} s'_0, \quad s_0 \simeq_{s_i, s'_j, s'_{-ij}} s''_0, \quad \text{and} \quad s'_0 \simeq_{s_i, s'_j, s''_{-ij}} s_0,$$

\noindent which implies
$$g_i(s_i) + g_j(s'_j) + \sum_{k \neq i,j} g_k(s_k) = g'_i(s_i) + g'_j(s'_j) + \sum_{k \neq i,j} g'_k(s'_k) = g''_i(s_i) + g''_j(s'_j) + \sum_{k \neq i,j} g''_k(s''_k) = 0.$$

\noindent Substituting and canceling common terms gives
$$g_i(s'_i) + g_j(s_j) = g_i(s_i) + g_j(s'_j), \quad g'_i(s'_i) + g'_j(s_j) = g'_i(s_i) + g'_j(s'_j), \quad \text{and}$$
$$g''_i(s'_i) + g''_j(s_j) = g''_i(s_i) + g''_j(s'_j),$$

\noindent which implies the claim.
\end{proof}

Suppose that $s_0 \preceq_{s_+} s'_0$ but
$$u(s_0, s_+) := \rho\sum_{i=1}^n g_i(s_0, s_i) > \sum_{i=1}^n g_i(s'_0, s_i) := u(s'_0, s_+).$$
    
\noindent The analysis is substantively identical if the preference is strict, but the inequality between the sums is weak.  Apply Lemma \ref{lem:pairwise} to obtain functions $\{\tilde{g}_i\}_{i=1}^n$ such that
$$s_0 \succeq s'_0 \quad \iff \quad \sum_{i=1}^n \tilde{g}_i(s_i) \geq 0.$$

\noindent Define $s^0_+ = s^*_+$ and $s^1_+ = s_+$.  Since the set
$$A^1 := \{t_+ \, : \, 0 = \sum_{i=1}^n \tilde{g}_i(s^0_i) \geq \sum_{i=1}^n \tilde{g}_i(t_i) \geq \sum_{i=1}^n \tilde{g}_i(s^1_i)\}$$
    
\noindent is connected (this follows from the connected intervals assumption), and the function $u$ is continuous for each fixed $s_0$, given any positive integer $N$ we can find a point $s_+^{1/N} \in A^1$ such that
$$u(s_0, s^{1/N}_+) = \frac{N-1}{N}u(s_0, s^0_+) + \frac{1}{N}u(s_0, s^1_+).$$

\noindent We can then similarly find $s^{2/N}_+ \in A^2 := \{s_+ \, : \, \sum_{i=1}^n \tilde{g}_i(s^{1/N}_i) \geq \sum_{i=1}^n \tilde{g}_i(s_i) \geq \sum_{i=1}^n \tilde{g}_i(s^1_i)\}$ such that
$$u(s_0, s^{2/N}_+) = \frac{N-2}{N}u(s_0, s^0_+) + \frac{2}{N}u(s_0, s^1_+).$$

\noindent Continuing in this way, we construct a sequence of opponent profiles $s^{k/N}_+$ with
$$u(s_0, s^{k/N}_+) = \frac{N-k}{N}u(s_0, s^0_+) + \frac{k}{N}u(s_0, s^1_+),$$

\noindent and $\sum_{i=1}^n \tilde{g}_i(s^{(k+1)/N}_i) \leq \sum_{i=1}^n \tilde{g}_i(s^{k/N}_i)$ for each $k$.
    
There must exist an increment $s_+^{k^*/N}$ to $s_+^{(k^*+1)/N}$ such that
$$u(s'_0, s^{(k^*+1)/N}_+) - u(s'_0, s^{k^*/N}_+) \leq \frac{1}{N}\left(u(s'_0, s^{1}_+) - u(s'_0, s^{0}_+)\right).$$

\noindent Through repeated application of Lemma \ref{lem:jointsolv}, we can copy this increment and ``subtract'' it from $s_+^0$, obtaining a profile $s^{-1}_+$ with
$$u(s_0, s^{0}_+) - u(s_0, s^{-1}_+) = \frac{1}{N}\left(u(s_0, s^{1}_+) - u(s_0, s^{0}_+)\right),$$
$$u(s'_0, s^{0}_+) - u(s'_0, s^{-1}_+) \leq \frac{1}{N}\left(u(s'_0, s^{1}_+) - u(s'_0, s^{0}_+)\right), \quad\text{and}\quad \sum_{i=1}^n \tilde{g}_i(s^{-1}_i) \geq 0.$$

\noindent We can then analogously construct a sequence of profiles $s^{-2}_+$, $s^{-3}_+$,...  If we chose a large enough $N$ at the start of this procedure, we eventually reach a profile $s^{-K}_+$ at which
$$u(s_0, s^{-K}_+) < 0, \quad u(s'_0, s^{-K}_+) \geq 0, \quad \text{and} \quad \sum_{i=1}^n \tilde{g}_i(s^{-K}_i) \geq 0,$$

\noindent which implies $s_0 \prec_{s^{-K}_+} s^*_0 \preceq_{s^{-K}_+} s'_0 \preceq_{s^{-K}_+} s_0$, a contradiction.  We conclude that $s_0 \succeq s'_0$ whenever
$$\rho \sum_{i=1}^n g_i(s_0, s_i) \geq \sum_{i=1}^n g_i(s'_0, s_i).$$

Addressing continuity, we show that the order $\succeq_0$ on $S$ defined by
$$s \succeq_0 s' \quad \iff \quad st \succeq s't$$

\noindent for some profile $t$ is continuous.  Continuity of the difference order then follows from an argument analogous to that in Theorem \ref{theo:main1}.  We show that the upper contour sets $\{t \, : \, t \succ_0 s\}$ are open---analogously, the lower contour sets $\{t \, : \, t \prec_0 s\}$ are open, implying that $\succeq_0$ is continuous.
    
Suppose $t \succ_0 s$.  Since $\succeq_{t_+}$ is a continuous order on $S_0$, we can choose $t^*_0$ so that $t \succ_0 (t^*_0, t_+) \succ_0 s$.  Let $\rho$ be the scaling factor for $t^*_0$ and $s_0$ from part (d) in the construction.  Since $(t^*_0, t_+) \succ_0 s$, we have
$$\sum_{i=1}^n g_i(t^*_0, t_i) > \rho\sum_{i=1}^n g_i(s_0, s_i).$$

\noindent Since the functions $g_i$ are continuous if we fix $t^*_0$, there is an open neighborhood $V_+$ of $t_+$ such that
$$\sum_{i=1}^n g_i(t^*_0, r_i) > \rho\sum_{i=1}^n g_i(s_0, s_i)$$

\noindent whenever $r_+ \in V_+$, which implies $(t^*_0, r_+) \succ_0 s$ whenever $r_+ \in V_+$.  By joint continuity, there exist open neighborhoods $U_0$ of $t_0$ and $U_+$ of $t_+$ such that $r_0 \succ_{r_+} t^*_0$ whenever $r_0 \in U_0$ and $r_+ \in U_+$.  This implies
    $$(r_0, r_+) \succ_0 (t^*_0, r_+) \succ_0 s$$

\noindent whenever $r_+ \in U_+ \cap V_+$.  Since the intersection of two open sets is open, the set $U_0 \times (U_+ \cap V_+)$ is an open neighborhood of $t$ such that $r \succ_0 s$ for each $r$ in the neighborhood.  We conclude that the upper contour set is open.

Finally, the argument that $\succeq$ satisfies the hexagon condition is analogous to that in the proof of Theorem \ref{theo:main1}.  \qedsymbol

\section{Proof of Results From Section \ref{sec:cyclic}}

\subsection{Proof of Theorem \ref{theo:cyclicinfinite}} \label{sec: compactness extension}

Recall that Theorem \ref{theo:cyclicinfinite} says a family of orders $\{\succeq_{s_+}\}_{s_+ \in S_+}$ does not admit a balanced sequence if and only if admits a strategically separable lexicographic representation. We prove this in two main steps. First we, use logical compactness together with the result of Theorem \ref{theo:cyclic} to deduce that a strategically separable representation exists, with utilities taking values in some arbitrary linearly ordered Abelian group. We then use Hahn's embedding theorem to embed this group in to $\mathbb{R}^M$ for some linearly ordered index set $M$. Several papers have outlined how to use logical compactness to bootstrap finitary models to arbitrary models, including \citet{FuhrkenRichter1991} and \citet{gonczarowski2025infinity}. We take a similar approach. Logical compactness applies in first-order logic, where one cannot axiomatize the real numbers. As a consequence, we cannot use compactness to generate a real-valued utility. However, linearly ordered Abelian groups can be finitely axiomatized so we can use logical compactness to get a representation in such a set. A common approach is then to impose an Archimedean axiom. Since Holder's theorem characterizes the reals as the unique Archimedean liniearly ordered group, one can indirectly generate a real-valued representation. Instead, we do not impose such a criterion, but allow for non-Archimedean orders. As a consequence, our representation takes values in a richer space, namely $\mathbb{R}^M$.

Neither first-order logic nor the theory of ordered Abelian groups is commonly used in economic theory, so we give a brief primer on both for the reader's convenience.\footnote{For more details on ordered Abelian groups see \citet{fuchs2011partially}. For more on mathematical logic see \citet{enderton2001mathematical}.}




\subsubsection{Background: First-Order Logic}
A \emph{language} $L$ is a list of symbols. It includes logical connectives ($\land,\lor,\to,\neg$), quantifiers ($\forall,\exists$), and parentheses $(,)$. It can also include
\begin{itemize}
  \item A collection of constant symbols $c_0,c_1,\dots$
  \item A collection of function symbols $f_0,f_1,\dots$, each with a specified arity $n_0,n_1,\dots$
  \item A collection of predicate (or relation) symbols $R_0,R_1,\dots$, each with a specified arity $m_0,m_1,\dots$
\end{itemize}
At this point, the arity of a function or relation is simply a positive integer attached to each of the associated symbols. The arity represents the number of arguments it takes in. For example, the function $f(x,y)=x^2y$ has arity $2$. The arity of a relation is likewise the number of arguments it allows. For example, a binary relation $R$ on a set $X$ is a subset of $X\times X$, so has arity $2$. 

Formulas are strings of symbols built from the language inductively. We omit the rules for building formulas for brevity. The rules imply that  $\forall u,v,w\ (u>v)\to (u+w>v+w)$ is a valid formula in a language that includes a binary relation $>$ and a binary function $+$. However, the string $\forall u+\to >$ is not a valid formula. The formula $\forall u (u>v)$ is valid but includes the variable $v$ which is not bound to any quantifier. A formula where all variables are bound to a quantifier is called a $L$-sentence. The first example above is a sentence, for example. A \emph{theory} $\mathcal{T}$ is a set of $L$-sentences. Note that while each sentence includes only a finite number of symbols, the number of sentences in a theory is not required to be finite.

$L$ is simply a list of symbols. Sentences written in $L$ get meaning when all the symbols are given an interpretation in a specific setting. An \emph{$L$-structure} (or \emph{model}) interprets each symbol in a given set $X$, called the domain. Constants are interpreted as elements of the domain, functions as operations, and predicates as relations.   A structure satisfies (or is a \emph{model} of) $\mathcal{T}$ if it makes every sentence in $\mathcal{T}$ true.

A fundamental result in first-order logic is the \emph{Compactness Theorem}:
\begin{Theorem}[Compactness Theorem]
    If every finite list of sentences of a first-order theory $\mathcal{T}$ has a model, then the entire theory $\mathcal{T}$ has a model.  
\end{Theorem}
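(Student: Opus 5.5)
The statement is the classical Compactness Theorem, which the paper cites as background, so any standard proof will do. I would use the ultraproduct argument: it avoids setting up a proof calculus and the completeness theorem, and it needs only the ultrafilter lemma and \L{}o\'s's theorem. If $\mathcal{T}$ is empty, any nonempty $L$-structure is a model, so assume $\mathcal{T}\neq\emptyset$.

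Let $I$ be the set of finite subsets of $\mathcal{T}$. By hypothesis, for each $\Delta\in I$ we can choose an $L$-structure $M_\Delta$ with $M_\Delta\models\Delta$; this uses the axiom of choice, and we take domains to be nonempty as usual. For each sentence $\varphi\in\mathcal{T}$ put $A_\varphi=\{\Delta\in I:\varphi\in\Delta\}$. The family $\{A_\varphi\}_{\varphi\in\mathcal{T}}$ has the finite intersection property, because $\{\varphi_1,\dots,\varphi_k\}\in A_{\varphi_1}\cap\cdots\cap A_{\varphi_k}$. By the ultrafilter lemma (Zorn's lemma applied to filters containing this family), there is an ultrafilter $\mathcal{U}$ on $I$ containing every $A_\varphi$.

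Next, form the ultraproduct $M=\prod_{\Delta}M_\Delta/\mathcal{U}$. Its domain consists of equivalence classes of functions $f$ with $f(\Delta)\in M_\Delta$, where $f\sim g$ iff $\{\Delta: f(\Delta)=g(\Delta)\}\in\mathcal{U}$. Constants, functions and relations are interpreted coordinatewise: a relation holds of classes iff it holds on a $\mathcal{U}$-large set of coordinates. Because $\mathcal{U}$ is a filter, $\sim$ is an equivalence relation and these interpretations are well defined.

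The heart of the proof is \L{}o\'s's theorem. For every formula $\psi(x_1,\dots,x_k)$ and all $[f_1],\dots,[f_k]$,
\[
M\models\psi([f_1],\dots,[f_k]) \iff \{\Delta: M_\Delta\models\psi(f_1(\Delta),\dots,f_k(\Delta))\}\in\mathcal{U}.
\]
I would prove this by induction on formulas, after first checking by induction on terms that terms evaluate coordinatewise.
\begin{itemize}
\item Atomic formulas hold by the definition of $M$.
\item Conjunction uses closure of $\mathcal{U}$ under finite intersections and upward closure.
\item Negation is where maximality of $\mathcal{U}$ enters: exactly one of a set and its complement lies in $\mathcal{U}$.
\item The existential step is the main obstacle. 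If $M\models\exists y\,\psi$, some witness class gives a large set directly. Conversely, suppose the set of coordinates where a witness exists lies in $\mathcal{U}$. Then, using choice again, pick a witness $g(\Delta)$ on each such coordinate and an arbitrary element elsewhere, which is possible since domains are nonempty. The class $[g]$ then witnesses $\exists y\,\psi$ in $M$ by the induction hypothesis.
\end{itemize}

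Finally, for each $\varphi\in\mathcal{T}$, every $\Delta\in A_\varphi$ satisfies $M_\Delta\models\varphi$, so $\{\Delta:M_\Delta\models\varphi\}\supseteq A_\varphi\in\mathcal{U}$. By upward closure and \L{}o\'s's theorem, $M\models\varphi$. Hence $M$ is a model of all of $\mathcal{T}$.
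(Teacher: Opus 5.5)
Your ultraproduct argument is correct and complete. The sets $A_\varphi$ do have the finite intersection property, and the ultraproduct is well defined. Your induction for \L{}o\'s's theorem uses maximality of $\mathcal{U}$ exactly at negation, and choice plus nonempty domains exactly at the existential step; upward closure then finishes the argument.

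The paper itself gives no proof. It quotes compactness as background and refers to Enderton, where first-order compactness is a corollary of G\"odel's completeness theorem. There, a Henkin construction gives every consistent set of sentences a model. A finitely satisfiable set is consistent, because a derivation of a contradiction uses only finitely many hypotheses and derivations are sound. That syntactic route requires setting up a deductive calculus. In return it delivers completeness as well, a model of cardinality at most $|L|+\aleph_0$, and, for countable languages, no use of choice.

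Your semantic route skips the proof system entirely and is short once \L{}o\'s is in hand. The price is choice: the ultrafilter lemma, plus selecting the $M_\Delta$ and the witnesses. Strictly, since the models of $\Delta$ form a proper class, you should choose among, say, those of least rank. The route also typically produces very large models, which is harmless here.

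It fits the paper's application well. In the proof of Theorem~\ref{theo:cyclicinfinite}, Theorem~\ref{theo:cyclic} supplies, for each finite fragment of $\mathcal{T}$, a model whose underlying ordered group is $(\mathbb{R},+,>)$. So your $M$ can be taken to be the ultrapower $\mathbb{R}^I/\mathcal{U}$, a group of hyperreals that contains infinitesimals once $\mathcal{T}$ is infinite. This makes concrete why the representation the paper obtains via Hahn's theorem may be non-Archimedean.
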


The compactness theorem is our main tool in the section. There is, however, a drawback to working in first-order logic: there is no theory in first-order logic that uniquely axiomatizes the real numbers. This is because we are not allowed to quantify over sets, so one cannot express, for example, the least upper bound property. We will circumvent this issue by simply requiring a representation in an ordered Abelian group, which we can axiomatize. 

\subsubsection{Background: Ordered Abelian Groups}

Recall that an \textit{Abelian group} is a set $G$ together with a binary operation $+$ such that $+$ is commutative and associative. $G$ is required to have an additive identity, usually denoted $0$ and additive inverses so that for every $g\in G$ there is a $h\in G$ with $g+h=0$. We generally write $h$ as $-g$.

A \textit{linearly ordered Abelian group} is a triple $(G,+,\succeq)$ such that $(G,+)$ is an Abelian group and $\succeq$ is a linear order on $G$ that is translation-invariant so that $g\succeq h$ implies $g + z\succeq h +z$ for all $g,h,z$. An \textit{isomorphism} between two linearly ordered groups is a bijection that preserves order (so is ``isotone") and preserves the binary operation (so is a ``homomorphism").

For any $g$ in a linearly ordered Abelian group $(G,+,\succeq)$, let $\vert g \vert$ be the larger of $g$ and $-g$. $g$ and $h$ are said to be \emph{Archimedean equivalent} if there are natural numbers $n$ and $m$ such that $n\vert g \vert \geq \vert h \vert$ and $m\vert h \vert \geq \vert g \vert$. $(G,+,\succeq)$ is said to be Archimedean if any two elements are Archimedean equivalent. It's not hard to show that the equivalence classes of $G$ can be ordered according to the ordering of their members. 

Hahn's embedding theorem characterizes all linearly ordered Abelian groups as follows. For any linearly ordered set $M$, let $\mathbb{R}^{[M]}$ denote the set of all functions $f:M\rightarrow \mathbb{R}$ such that $\{m\,:\, f(m)\neq 0\}$ is well-ordered. This is a group with component-wise addition. It is a linearly ordered group with the lexicographic order, given by $f\geq_{lex}g$ if $f=g$ or $f\neq g$ and $f(m)>g(m)$ for the smallest $m\in M$ where $f$ and $g$ disagree. 

\begin{Theorem}[Hahn's Embedding Theorem]
    For every linearly ordered Abelian group $(G,+,\succeq)$, there is a linearly ordered set $M$ such that $(G,+,\succeq)$ is isomorphic to a subgroup of $\mathbb{R}^{[M]}$. Furthermore, $M$ can be taken to be the set of Archimedean equivalence classes of $G$ with the natural order.
\end{Theorem}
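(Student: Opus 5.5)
The statement is the classical Hahn embedding theorem, which the paper invokes rather than proves. If I had to supply a proof, I would follow the Hausner–Wendel/Conrad line of argument, proceeding in three stages.

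\emph{Stage one: reduce to divisible groups and build the target.} First pass to the divisible hull $D = G \otimes \mathbb{Q}$. It carries a unique translation-invariant linear order extending that of $G$ ($g/n \succeq 0$ iff $g \succeq 0$). It has exactly the same Archimedean classes as $G$, because $|g/n|$ and $|g|$ are Archimedean equivalent. So it suffices to embed $D$, which is an ordered $\mathbb{Q}$-vector space. Let $M$ be the set of nonzero Archimedean classes, ordered so that a class is \emph{smaller} in $M$ when its elements are \emph{larger} in absolute value. With this convention the first index at which two vectors in $\mathbb{R}^{[M]}$ differ is the dominant one, as $\geq_{lex}$ requires. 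For each $\gamma \in M$, let $C_\gamma$ be the convex subgroup generated by an element of class $\gamma$ and $D_\gamma \subset C_\gamma$ the largest convex subgroup missing class $\gamma$. The quotient $C_\gamma/D_\gamma$ is an Archimedean ordered group. By H\"older's theorem, fix an order embedding $\iota_\gamma : C_\gamma/D_\gamma \to \mathbb{R}$; these $\iota_\gamma$ supply the real coordinates.

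\emph{Stage two: set up a Zorn's lemma argument.} Call a pair $(V,\phi)$ admissible if $V \subseteq D$ is a $\mathbb{Q}$-subspace and $\phi : V \to \mathbb{R}^{[M]}$ is an order-preserving $\mathbb{Q}$-linear injection satisfying two conditions. The first is \emph{valuation-compatibility}: the leading index of $\phi(v)$ is the class of $v$, and the leading coefficient equals $\iota_\gamma(v + D_\gamma)$. The second is \emph{truncation-closure}: for every $v \in V$ and $\gamma \in M$, the truncation of $\phi(v)$ to indices strictly before $\gamma$ lies in $\phi(V)$. Admissible pairs ordered by extension form a chain-complete poset, since unions of chains preserve both conditions. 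Zorn's lemma then yields a maximal pair. Two remarks make this workable. Valuation-compatibility forces order preservation, since $\phi(v) >_{lex} 0$ iff its leading coefficient is positive iff $v \succ 0$. The trivial pair $(\{0\},0)$ is admissible, so the poset is nonempty.

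\emph{Stage three: extend a maximal pair by one element.} This is the main obstacle. Suppose $x \in D \setminus V$. I would define $\phi(x)$ by transfinite recursion, building a well-ordered sequence of indices $\gamma_0 < \gamma_1 < \cdots$ and coefficients. At each step, take the current approximation $y_\alpha \in V$, whose image is the partial sum defined so far. Let $\gamma_\alpha$ be the class of $x - y_\alpha$ and record the coefficient $\iota_{\gamma_\alpha}(x - y_\alpha + D_{\gamma_\alpha})$.

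The crucial checks are the following. \begin{enumerate} \item The next partial sum again lies in $\phi(V)$. One adds the coefficient times a chosen $v$ of class $\gamma_\alpha$ and truncates, which is where truncation-closure is used. \item The classes strictly increase. \item At limit stages there is a $y \in V$ realizing the limit partial sum. If there is none, the recursion stops there, and $x$ has been assigned a well-ordered support. \end{enumerate} If at some stage $x - y_\alpha$ has a class $\gamma$ at which no element of $V$ has a matching coefficient, set that coordinate directly and stop.

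Define $\phi(x)$ as the resulting well-ordered sum and extend linearly to $V + \mathbb{Q}x$. Then verify that the extension is still valuation-compatible and truncation-closed; the latter holds because the truncations of $\phi(x)$ are precisely the partial sums. This contradicts maximality, so $V = D$. Restricting $\phi$ to $G$ gives the embedding, with $M$ the set of Archimedean classes in its natural order, as claimed.

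The delicate point I expect to spend the most effort on is the limit step of the recursion: showing that the best approximations from $V$ stabilize into a pseudo-convergent family, and that the supports remain well-ordered.
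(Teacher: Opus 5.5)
The paper does not prove this statement. It quotes Hahn's theorem as a classical result, pointing to Fuchs, and uses it only as a black box: it turns the utility valued in an abstract linearly ordered Abelian group, produced by compactness, into one valued in $\mathbb{R}^{[M]}$. Your outline is therefore a genuinely different, self-contained route, and its architecture is sound. It runs Zorn's lemma over valuation-compatible, truncation-closed partial embeddings of the divisible hull, and then extends a maximal embedding by one element via transfinite recursion.

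Compared with the more common argument (embed a Hahn sum of chosen complements of $D_\gamma$ in $C_\gamma$, then extend using maximal completeness of $\mathbb{R}^{[M]}$), your route has two advantages. It needs no complements and no separate pseudo-completeness lemma, because each limit partial sum has support indexed by an ordinal and so lies in $\mathbb{R}^{[M]}$ automatically. It also delivers a truncation-closed image of the divisible hull, which is more than the statement asks. Your reversal of the order on the classes is exactly what the paper's convention for $\geq_{lex}$ (comparison at the least differing index) forces.

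Two steps should be stated more carefully.

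\emph{The successor step.} The phrase ``the coefficient times a chosen $v$'' is not meaningful, since $V$ is only a $\mathbb{Q}$-space and the coefficient $c_\alpha$ is real. What you need is some $v\in V$ of class $\gamma_\alpha$ with $\iota_{\gamma_\alpha}(v+D_{\gamma_\alpha})=c_\alpha$; its nonexistence is precisely your stopping rule. You then truncate $\phi(y_\alpha+v)$ strictly before the least point of its support exceeding $\gamma_\alpha$. More generally, for $f\in\mathbb{R}^{[M]}$ and any initial segment $I$ of $M$, the truncation of $f$ to $I$ is either $f$ itself or its truncation strictly before $\min(\mathrm{supp}\,f\setminus I)$. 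This is why your ``strictly before $\gamma$'' form of truncation-closure suffices.

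\emph{Valuation-compatibility at the stopping stage.} The delicate point is not well-orderedness of supports, which is automatic, but this check, and truncation-closure is used here as well. Write $e_\gamma$ for the indicator of $\gamma$, and say a class is \emph{beyond} $\gamma$ if it is strictly larger in your index order, i.e.\ strictly smaller in magnitude.
\begin{enumerate}
\item At a successor stop, the missing residue shows that no $u\in V$ has $x-u$ of class beyond $\gamma_\alpha$. Then $\phi(x)-\phi(u)=c_\alpha e_{\gamma_\alpha}+\phi(y_\alpha-u)$ has the correct leading term whether $y_\alpha-u$ has class below, beyond, or equal to $\gamma_\alpha$. In the equal case the residues cannot cancel, again because no element of $V$ matches.
\item At a limit stop, suppose some $u\in V$ had $x-u$ of class beyond every $\gamma_\beta$. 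Then $\phi(u)$ would agree with the limit partial sum on the initial segment generated by the $\gamma_\beta$. That partial sum would then be a truncation of $\phi(u)$, hence in $\phi(V)$, contradicting the stop. For every other $u$, comparing with $y_{\beta+1}$ for large $\beta$ gives the correct leading index and coefficient of $\phi(x)-\phi(u)$.
\end{enumerate}
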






\subsubsection{First-Order Language for Preferences}
We now construct a specific theory for our purposes. Fix a strategy set $S$ and a set of agents $N=\{0,1,\dots, n\}$. Fix a family of  preference relations $\{\succeq_{s_+}\}_{s_+ \in S_+}$ that admits no balanced sequence (for notational simplicity, we assume that $S_i=S$ for all $i$, although the general case is not more difficult). Let $L$ be the language consisting of:
\begin{enumerate}
  \item Constant symbols $c^i_{(s,t)}$ for each pair $(s,t)\in S\times S$ and each $i=1,2,\dots, n$
  \item The constant symbol $0$
  \item A binary function symbol $+$
  \item A binary relation symbol $>$.
\end{enumerate}

\subsubsection{Theory $\mathcal{T}$}
We now construct a theory $\mathcal{T}$ in language $L$ that contains:

\paragraph{Ordered abelian group axioms:}
\begin{align*}
  &\forall u,v,w\hspace{.3cm} (u+v)+w = u+(v+w),\\
  &\forall u,v\hspace{.3cm} u+v = v+u,\\
  &\forall u\hspace{.3cm} u+0 = u,\quad \forall u\ \exists v\ u+v=0,\\
  &\forall u,v,w\hspace{.3cm} (u>v \wedge v>w)\to u>w,\\
  &\forall u,v\hspace{.3cm} (u>v)\lor (u=v)\lor (v>u),\\
  &\forall u,v,w\hspace{.3cm} (u>v)\to (u+w>v+w),\\
  &\neg(0>0).
\end{align*}

\paragraph{Revealed-preference axioms:}
For every pair of profiles $(s_0,s_1\dots,s_n),(s_0',s_1,\dots,s_n)\in S^n$, such that
$s_0\succ_{s_+}s_0'$ include the axiom
\[
  c^1_{(s_0,s_1)}+c^2_{(s_0,s_2)}+\cdots+c^n_{(s_0,s_n)}>  c^1_{(s_0',s_1)}+c^2_{(s_0',s_2)}+\cdots+c^n_{(s_0',s_n)}
\]Likewise, for every pair of profiles $(s_0,s_1\dots,s_n),(s_0',s_1,\dots,s_n)\in S^n$, such that
$s_0\sim_{s_+}s_0'$ include the axiom
\[
  c^1_{(s_0,s_1)}+c^2_{(s_0,s_2)}+\cdots+c^n_{(s_0,s_n)}=  c^1_{(s_0',s_1)}+c^2_{(s_0',s_2)}+\cdots+c^n_{(s_0',s_n)}
\]

\subsubsection{Application of Logical Compactness}
If $\{\succeq_{s_+}\}_{s_+ \in S_+}$ admits no balanced sequence, then the same family restricted to any finite subset of $S^n$ admits a real-valued, additively separable utility representation by Theorem \ref{theo:cyclic}. 
Any finite subset of $\mathcal{T}$ mentions only finitely many constants. The absence of balanced sequences on this set ensures an additively separable representation by Theorem \ref{theo:cyclic}. 
By the Compactness Theorem, $\mathcal{T}$ is therefore satisfiable. Let $(G,+,\geq)$ be the associated linearly ordered Abelian group. Define $g_i(s_0,s_i)=c^i_{(s_0,s_i)}$ for each $i$ and each $(s_o,s_i)$. The revealed-preference axioms imply that $s_0\succsim_{s_+}s_0'$ if and only if $\sum_{i=1}^n g_i(s_0, s_i) \geq \sum_{i=1}^n g_i(s_0', s_i)$.

Applying Hahn's embedding theorem, there is a linearly ordered set $M$, a subgroup $H$ of $\mathbb{R}^{[M]}$ and an isomorphism $\Phi$ between $(G,+,\geq)$ and $(H, +, \geq_{lex})$. Setting $\hat{g}_i =\Phi\circ g_i$ for each $i$ gives the desired represntation.

\subsection{Proof of Proposition \ref{prop:cycle_bound_short}}\label{sec: cycle bounds}

In this section, we provide a bound on the length of balanced sequences that need to be checked, fixing the cardinalities of the action sets. A similar approach was taken to bound the size of solutions to linear programs in \cite{echenique2022efficiency}. Our main tool will be a lemma that we prove for general homogeneous linear systems which may be of interest outside of network games.

For any integer $k$ and real matrix $M$, let
\[
  \lambda_k(M)
  =
  \max\bigl\{\,|\det N|\colon N\ \text{is a}\ k\times k\ \text{submatrix of }M\bigr\},
\]
and then set
\[
  \Lambda_k(M)
  =
  \max_{0\leq \ell\le k}\,\lambda_\ell(M),
\]
where by convention the determinant of a $0\times0$ matrix is $1$.  For a scalar $c$, we write $\mathbf{c}_k\in\mathbb{R}^k$ for the constant vector whose entries are all $c$, and $I_k$ for the $k\times k$ identity. Given a matrix $A$, we write $A_l$ for the $l$th row of $A$.


\begin{lemma}\label{lem:lin prog}
  Let 
  \[
    A\in\mathbb{Z}^{a\times d},\quad
    B\in\mathbb{Z}^{b\times d},\quad
    C\in\mathbb{Z}^{c\times d},
  \]
  and form
$$M=\left(\begin{array}{c}
         A  \\
         B  \\
         C
    \end{array}\right)$$
  Suppose there exists \(x\in\mathbb{R}^d\) satisfying
  \[
    A x \,\gg\, \mathbf{0}_a,\quad
    B x \,\ge\,\mathbf{0}_b,\quad
    C x \,=\,\mathbf{0}_c.
    \tag{I}
  \]
  Then there is an \emph{integral} solution \(z\in\mathbb{Z}^d\) to the same system such that
    $|z_i|\le a\;\Lambda_{d-1}(M)$ for each $i$.
\end{lemma}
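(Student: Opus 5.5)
The plan is to turn the strict system into a closed polyhedron, pick a point of it that is pinned down by a square subsystem, and read off an integral solution with Cramer's rule. The minors that appear will have size at most $d-1$, which is where $\Lambda_{d-1}(M)$ comes from, and at most $a$ entries of the right-hand side will be nonzero, which is where the factor $a$ comes from.

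\emph{Step 1: reduce to a polyhedron.} The system (I) is homogeneous, so any solution $x$ can be rescaled by $1/\min_l A_l x$. Hence (I) is solvable if and only if
$$P=\{x\in\mathbb{R}^d : Ax\ge \mathbf{1}_a,\ Bx\ge \mathbf{0}_b,\ Cx=\mathbf{0}_c\}$$
is nonempty. Conversely, every point of $P$, and every positive multiple of one, solves (I). If $a=0$, the vector $z=0$ already works, so I assume $a\ge1$.

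\emph{Step 2: take a minimal face.} Let $F$ be a minimal nonempty face of $P$, and let $M'x=b'$ collect the constraints that are tight on all of $F$. These are the rows of $C$ together with some rows of $A$ and of $B$, so $b'$ has entries in $\{0,1\}$, and at most $a$ of them equal $1$. By the standard characterization of minimal faces (Schrijver), $F=\{x : M'x=b'\}$. In particular, every solution of the tight system lies in $P$.

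Let $r=\operatorname{rank}M'$. If $r=0$, then $F=\mathbb{R}^d$ contains $0$, which violates $Ax\ge \mathbf{1}_a$ because $a\ge 1$. So $r\ge1$. The same observation shows that $b'\neq 0$, since otherwise $0\in F$.

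\emph{Step 3: choose a basic solution.} Pick $r$ linearly independent rows of $M'$ and $r$ columns forming a nonsingular $r\times r$ submatrix $N$ of $M$. Set the remaining $d-r$ coordinates to zero and solve $N x_{\text{basic}}=b''$, where $b''$ is the corresponding part of $b'$. Because the chosen rows span the row space of $M'$ and the system is consistent, this $x$ satisfies all of $M'x=b'$. Hence $x\in F\subseteq P$.

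By Cramer's rule, $x_j=\det N_j/\det N$, where $N_j$ is $N$ with column $j$ replaced by $b''$. Set $z=|\det N|\,x$. Since $\det N$ is a nonzero integer, $z$ is a positive multiple of a point of $P$ and therefore solves (I). Its entries are $z_j=\pm\det N_j$, which are integers.

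\emph{Step 4: bound the entries.} Expand $\det N_j$ along the replaced column. Only the at most $a$ entries of $b''$ equal to $1$ contribute, and each contributes an $(r-1)\times(r-1)$ minor of $M$. Since $r\le d$, each such minor has absolute value at most $\lambda_{r-1}(M)\le\Lambda_{d-1}(M)$. Therefore $|z_j|\le a\,\Lambda_{d-1}(M)$. The coordinates set to zero satisfy the bound trivially.

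\emph{Main obstacle.} The delicate point is Step 2. $P$ need not be pointed, because $\ker M$ may be nontrivial, so I cannot simply take a vertex. I would rely on the minimal-face description $F=\{x : M'x=b'\}$, which holds for any nonempty polyhedron. I would then check carefully that a solution of the tight subsystem, with the nonbasic coordinates set to zero, still satisfies every non-tight inequality; that is exactly the content of $F$ being the whole affine space.
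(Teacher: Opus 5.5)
Your argument is correct, and it reaches the bound by a different route than the paper.

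The paper normalizes (I) by appending the box constraints $-\mathbf{1}\le x\le\mathbf{1}$, which makes the feasible set a nonempty bounded polytope. For each strict row $\ell$ it takes a vertex $z^\ell$ maximizing $A_\ell x$. Some box constraint must bind there, since otherwise $0$ would also solve the binding system. Removing the coordinates fixed at $\pm1$ leaves a nonsingular square submatrix of $M$ (up to row signs) of size at most $d-1$. Cramer's rule together with $|z^\ell_i|\le 1$ then shows that $|\det R^\ell|\,z^\ell$ is integral with entries at most $\Lambda_{d-1}(M)$. The factor $a$ comes from summing these $a$ vectors so that every strict inequality holds.

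You instead normalize by $Ax\ge\mathbf{1}_a$, so a single basic solution already satisfies all strict inequalities. The cost is that $P$ need not be pointed, which you handle correctly with the minimal-face characterization $F=\{x: M'x=b'\}$. The factor $a$ then arises from the Laplace expansion along the right-hand-side column, which has at most $a$ nonzero entries, each equal to $1$.

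The paper's route needs only the elementary fact that a linear functional attains its maximum over a polytope at a vertex, but it must construct and sum $a$ separate vertex solutions. Yours produces one basic solution. It also gives the slightly sharper bound $|z_j|\le\min(a,r)\,\lambda_{r-1}(M)$ with $r=\operatorname{rank}M'\le d$, because $b''$ has only $r$ entries. This refinement is immaterial for Proposition \ref{prop:cycle_bound_short}, where $a=1$.
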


\begin{proof}
  Consider instead the system
  \[
    R\,x \;\ge\; b,
    \tag{II}
  \]
  where
  \[
    R \;=\;
    \begin{pmatrix}
      A\\[4pt]
      B\\[4pt]
      C\\[4pt]
      -C\\[4pt]
      -I_n\\[2pt]
      I_n
    \end{pmatrix}
    \in\mathbb{Z}^{(a+b+2c+2d)\times d},
    \qquad
    b =
    \begin{pmatrix}
      \mathbf{0}_a\\[3pt]
      \mathbf{0}_b\\[3pt]
      \mathbf{0}_c\\[3pt]
      \mathbf{0}_c\\[3pt]
      -\mathbf{1}_n\\[2pt]
      -\mathbf{1}_n
    \end{pmatrix}.
  \]
  $Rx\geq b$ includes all the previous inequalities from $(I)$, but also bounds any solution to be in the unit cube $[-1,1]^d$. Any solution $x$ of $(I)$ can be scaled by \(1/\max_i|x_i|\) to satisfy $(II)$ while retaining the homogeneous inequalities in $(I)$. 
  
  Now
   \[ P \;=\;\{\,x\in\mathbb{R}^d : R\,x\ge b\}
  \]
  is a non-empty and bounded polytope, and for each  strict inequality from $A$, indexed by $\ell$ with $1\leq \ell \leq a$, there is a vertex $z^{\ell}$ of \(P\) at which the strict inequality \(A_\ell\,z^{\ell}>0\) holds. As a vertex, $z^{\ell}$ is the unique solution to \(d\) linearly independent binding constraints, that is $R'z^{\ell}=b'$ where $R'$ is obtained by dropping all but $d$ linearly independent rows of $R$ and $b'$ is the corresponding subvector of $b$. Notice that at least one of the cube inequalities must be included in the rows of $R'$, since otherwise $\mathbf{0}_d$ would also be a solution, contradicting uniqueness ($z^{\ell}$ is not zero since by assumption \(A_\ell\,z^{\ell}>0\)). The coordinates $i$ where $z^{\ell}_i=1$ and $z^{\ell}_i=-1$ can be reduced from the system to get a smaller family of linearly independent equalities. Call the corresponding submatrix \(R^{\ell}\) and right-hand side \(b^{\ell}\), so
  \[
    R^{\ell}\,\tilde z^{\ell} \;=\; b^{\ell}.
  \]
  where $\tilde z^{\ell}$ is the subvector of $ z^{\ell}$ dropping those coordinates $i$ where $z^{\ell}_i=1$ and $z^{\ell}_i=-1$. Note that $R^{\ell}$ includes only rows from $A,B,C$ and $-C$.
  
  Cramer’s rule then implies that each coordinate of \(\tilde z^{\ell}\) (and thus $z^{\ell}$) is a rational number with denominator \(\det R^{\ell}\), and since \(z^{\ell}\in P\) we know \(\bigl|z^{\ell}_i\bigr|\le1\).  Then
  \[
    \vert\det(R^{\ell})\vert\,z^{\ell} \;\in\;\mathbb{Z}^d
    \quad\text{and}\quad
    \vert\det(R^{\ell})\vert\,z^{\ell}_i\;\le\;\vert\det(R^{\ell})\vert \text{ for all }i .
  \]
  Since $R^{\ell}$ includes at most $d-1$ rows from $A$, $B$, $C$ and $-C$, we have that
  \(\bigl|\det(R^{\ell})\bigr|\le\Lambda_{d-1}(M).\)
  
  \noindent Finally, set
  \[
    z \;=\;\sum_{\ell=1}^a \vert\det(R^{\ell})\vert\,z^{\ell}.
  \]
  where the same construction is used to get $z^{\ell}$ for different values of $\ell\in 1,2, \dots, a$.
  Then \(z\in\mathbb{Z}^d\) and by construction
  \(A z\gg\mathbf0_a,\;Bz\ge\mathbf0_b,\;Cz=\mathbf0_c\).  Since each coordinate of
  \(\vert \det(R^{\ell})\vert\,z^{\ell
  }\) has absolute value at most \(\Lambda_{d-1}(M)\), we obtain
  \[
    |z_i|
    \;\le\;
    \sum_{\ell=1}^a
    \bigl|\det(R^{\ell})\bigr|
    \;\le\;
    a\,\Lambda_{d-1}(M),
  \]
  as claimed.
\end{proof}

\begin{Cor}\label{cor: dual bound}
    Suppose that all entries of $A,B$ and $C$ are $0,1$ or $-1$, then there is an \emph{integral} solution \(z\in\mathbb{Z}^d\) to $(I)$ with 
  \[
    |z_i|\;\le\;\exp\left(\ln(a)+\frac{d-1}{2}\ln(d-1)\right)\;
    \quad\text{for each }i.
  \]
\end{Cor}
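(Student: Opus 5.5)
The plan is to apply Lemma \ref{lem:lin prog} directly and then bound $\Lambda_{d-1}(M)$ with Hadamard's inequality. The lemma already gives an integral solution $z$ of $(I)$ with $|z_i|\le a\,\Lambda_{d-1}(M)$ for each $i$, where $M$ stacks $A$, $B$, and $C$. Since $a=\exp(\ln a)$, it suffices to show
$$\Lambda_{d-1}(M)\le \exp\left(\tfrac{d-1}{2}\ln(d-1)\right)=(d-1)^{(d-1)/2}.$$

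For this bound, take any $k\times k$ submatrix $N$ of $M$ with $0\le k\le d-1$. Hadamard's inequality gives $|\det N|\le \prod_{r=1}^k \|N_r\|_2$, where $N_r$ is the $r$th row of $N$. Every entry of $N$ lies in $\{0,1,-1\}$, so each row has Euclidean norm at most $\sqrt{k}$. Hence $|\det N|\le k^{k/2}$.

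It remains to show this is at most $(d-1)^{(d-1)/2}$. The function $k\mapsto k^{k/2}$ is nondecreasing on the positive integers, since $\tfrac{k}{2}\ln k$ is increasing for $k\ge1$. The case $k=0$ has determinant $1$ by convention, which is no larger than $(d-1)^{(d-1)/2}$ whenever $d\ge 2$. Taking the maximum over $0\le k\le d-1$ then gives $\Lambda_{d-1}(M)\le (d-1)^{(d-1)/2}$. Combining this with the lemma yields $|z_i|\le a\,(d-1)^{(d-1)/2}$, which is the claimed bound.

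The only delicate point is the degenerate case $d=1$. Here $\ln(d-1)$ is undefined and must be read under the convention $0\ln 0=0$, so that the exponent vanishes. The statement is consistent with this reading: for $d=1$, $\Lambda_0(M)=1$, and the bound becomes $|z_i|\le a$, which again follows directly from the lemma. Apart from this, the argument is routine; the step carrying the content is Hadamard's inequality.
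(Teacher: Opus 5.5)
Your proposal is correct and follows the paper's own argument: invoke the lemma to get $|z_i|\le a\,\Lambda_{d-1}(M)$, then bound every $k\times k$ minor by $k^{k/2}$ via Hadamard's inequality, since rows with entries in $\{0,\pm1\}$ have Euclidean norm at most $\sqrt{k}$. Your extra checks fill in details the paper's two-line proof leaves implicit: monotonicity of $k^{k/2}$ (needed to pass from $\lambda_k$ to $\Lambda_{d-1}$), the $k=0$ convention, and the degenerate case $d=1$.
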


\begin{proof} 
The rows of a $k\times k$ submatrix of $M$ have norm at most $\sqrt k$, so by Hadamard's inequality the determinant of a $k\times k$ submatrix is at most $\sqrt k^k$. Plugging this bound into Lemma \ref{lem:lin prog}, and changing form using exponentiation and the log gives the result. 
\end{proof}

Recall that $m:=\prod_{i=0}^{n}|S_i|$. Using transitivity, we can reduce some redundant inequalities in the system of linear inequalities constructed proof of Theorem \ref{theo:cyclic} so that $\vert I \vert + \vert J \vert =\vert S_0-1\vert \cdot \vert S_+ \vert < m$. Then the corresponding dual system is the solution to 

\begin{align*}
p_k &\ge 0\;\forall k \\
q_l &\ge 0\;\forall l \\
p \cdot \mathbf{1} &> 0 \\
pA + qB &= 0
\end{align*}

\noindent where the third inequality ensures that $p_k>0$ for some $k$. 

Applying Corollary \ref{cor: dual bound} to the dual system derived in the proof of Theorem \ref{theo:cyclic} for strategic separability in the finite case, $a=1$ since there is only one strict inequality. $d$ is the total number of constraints in the system which as already noted is less than $m$. Plugging these into Corollary \ref{cor: dual bound}, we can find a dual solution $z$ where   \[
    |z_i|\;\le\;\exp\left(\frac{m-1}{2}\ln(m-1)\right)\;
    \quad\text{for each }i.
  \]

\noindent The associated balanced sequence therefore has length at most $(m-1)\exp\left(\frac{m-1}{2}\ln(m-1)\right)$.

\subsection{Proof of Proposition \ref{prop:no_finite_ax_short}}\label{sec: impossibility of finite axioms}

In this section, we establish the impossibility of a finite first-order axiomatization of strategic separability. We do this by reducing a similar result from the theory of additive separability to our setting. We show that any finite axiomatization for strategic separability could be used to obtain a finite axiomatization for additive separability in the theory of choice. This was shown to be impossible by \citet{luce2007foundations}. Therefore, no finite axiomatization of strategic separability is possible. We begin by formalizing our setting in first-order logic. Rather than encode preferences as a collection of binary relations $\{\succeq_{i,s_{-i}}\}_{s_{-i}\in S_{-i}}$ we can equivalently write them through a single $n+2$-ary relation $R$ defined so that $$(a,b,s_{-i})\in R\iff a\succ_{i,s_{-i}} b.$$ This makes possible a simpler expression of our setting in first-order logic, but it introduces two complications. First, a $n+2$-ary relation on a set $X$ is a subset of $X^{n+2}$. However, in our case the $s_i$ belong to potentially different sets. We will overcome this issue by letting $X=\cup S_i$ and insisting that an element of $X^{n+2}$ is in $R$ only if its coordinates are in the correct subsets. Second, we need to write strategic separability in terms of $R$. We say that $R$ is \textbf{strategically separable} if there are functions 
$g_{ij}$ such that $$(a,b,s_{-i})
\in R \iff \sum_{j\neq i} g_{ij}(a,s_j) \geq \sum_{j\neq i} g_{ij}(b,s_j).$$ Going forward, we will use the notation $(a,b,s_{-i})\in R$ as interchangeable with $R(a,b,s_{-i})$. 

We fix $n$ and index the players $0,1,2,\dots, n$. We will be concerned with the utility of the player $0$. We have a vocabulary consisting of $n+1$ unary relations $S_i$ for $i=0,1,\dots, n$ and one $n+2$-ary relation $R$. The $S_i$ are used to describe which actions belong to which player and $R$ is player $0$'s preferences over their own actions, conditional on the actions of the other players. Intuitively, we interpret $R(a,b,s_1,\dots, s_n)$ to mean that $0$ prefers to play action $a$ to action $b$ when the other players are playing $s_1,\dots, s_n$. We need to encode the requirement that $0$'s preference is complete and transitive over their own actions for each profile of actions over the other players. The following axioms do exactly that. 

\begin{enumerate}
    \item[(A1)] $\forall s \left(S_0(s)\lor S_1(s)\lor S_2(s) \lor \cdots \lor S_n(s)\right)$  
    \item[(A2)] $\forall a,b,s_1,\dots, s_n \left[ R(a,b,s_1,\dots, s_n)\implies \left(S_0(a)\land S_0(b)\land S_1(s_1)\land \cdots \land S_n(s_n)\right)\right]$
    \item[(A3)] $\forall a,b,s_1,\dots, s_n \left[\left(S_0(a)\land S_0(b)\land S_1(s_1)\cdots \land S_n(s_n)\right)\right.$ \\
    $\implies \left.\left(R(a,b,s_1,\dots, s_n) \lor R(a,b,s_1,\dots, s_n)\right)\right]$
    \item[(A4)] $\forall a,b,c,s_1,\dots, s_n \left[\left(R(a,b,s_1,\dots, s_n) \land R(b,c,s_1,\dots, s_n)\right)\implies R(a,c,s_1,\dots, s_n)\right]$
\end{enumerate}

We interpret $S_i(s)$ to mean that $s$ is one of the actions available to agent $i$. Then axiom $(A1)$ simply says that all actions under consideration are available to some agent. Any model of these axioms corresponds exactly to a collection
$\{\succeq_{i,s_{-i}}\}_{s_{-i}\in S_{-i}}$ where we reindex so that $i$ zero and set $$R(a,b,s_1,\dots, s_n)\iff a\succ_{0,(s_1,\dots, s_n)} b.$$ Axiom $(A1)$ and $(A2)$ simply serve to record which agents can take which actions and that preferences are only defined over action profiles where agents take their actions. $(A3)$ is completeness and $(A4)$ is transitivity. 

Suppose that we could add an axiom $A^*$ to this list such that the finite models satisfying $(A1)-(A4)$ and $A^*$ are exactly those that are strategically separable (we can always concatenate a finite list of axioms into a single axiom). We will show that this would imply that we can also add finitely many additional axioms to characterize additive separability in decision theory. \cite{luce2007foundations} established the impossibility of a finite axiomatization of additive separability. Thus finding such an $A^*$ is impossible. 

Suppose now that $n=2m\geq 4$. Consider the following additional axioms

\begin{enumerate}
    \item[(A5)] $\exists x,y\left(x\neq y\right)\land \left(\forall s \left(S_0(s)\implies (s=x)\lor (s=y)\right)\right)$
    \item[(A6)]$\forall s \left(S_1(s)\iff S_2(s)\iff S_3(s)\iff \cdots \iff S_n(s)\right)$
    \item[(A7)] $\forall a,b\forall s_1,\dots s_m \forall s_{m+1},\dots s_n \forall t_{m+1},\dots, t_n$ 
    \begin{align*}
        R(a,b,s_1,\dots, s_m, s_{m+1},\dots, s_n)\land R(a,b,s_{m+1},\dots, s_n,t_{m+1},\dots, t_n)& \\
        \implies R(a,b,s_{1},\dots&, s_m,t_{m+1},\dots, t_n)
    \end{align*}
\end{enumerate}

We now argue that $(A1)-(A7)$ and $A^*$ together axiomatize additive separability in decision theory. To show this, we will establish that any model of $(A1)-(A7)$ and $A^*$ gives rise to an additively separable binary relation and conversely any additively separable total order can be written as a model of $(A1)-(A7)$ and $A^*$.

Suppose that $X$ is a finite set of at least two elements and that $\succeq$ is a total order on $X^m$ that is additively separable so that we can find functions $u_i:X\rightarrow \mathbb{R}$ for $i=1,\dots, m$ where $$(x_1,\dots, x_m)\succeq (y_1,\dots, y_m) \iff \sum_i u_i(x_i)\geq \sum_i u_i(y_i).$$ We now construct a model of $(A1)-(A7)$. Let $X$ be the domain. Pick two distinct elements $z_0$ and $z_1$ of $X$. Let $S'_0$ be the unary relation consisting of $z_0$ and $z_1$ only. Let $S_j'=X$ for all $j>0$. Let $R'$ be the unique $n+2$-ary relation on $\{z_0,z_1\}^2\times X^{2m}$ such that 

\begin{enumerate}
    \item $R'(z_1,z_1, x_1,\dots, x_m, y_1,\dots, y_m)$ and $R'(z_0,z_0, x_1,\dots, x_m, y_1,\dots, y_m)$
    \item $R'(z_1,z_0, x_1,\dots, x_m, y_1,\dots, y_m)\iff (x_1,\dots, x_m)\succeq (y_1,\dots, y_m)$
    \item $R'(z_0,z_1, x_1,\dots, x_m, y_1,\dots, y_m)\iff (y_1,\dots, y_m)\succsim (x_1,\dots, x_m)$
\end{enumerate}

This model satisfies axioms $(A1)-(A7)$ by construction. It remains to show that it satisfies $A^*$. If we can show that $R'$ is strategically separable, we are done since, by assumption, $(A1)-(A4)$ and $A^*$ axiomatize strategic separability. To this end, let $$g_{0j}(z,w)=
\begin{cases}
    0 & \text{ if }z=z_0 \\
    u_j(w)& \text{ if }j\in \{1,\dots, m\} \\
    -u_{j-m}(w)& \text{ if }j\in \{m+1,\dots, n\} 
\end{cases}$$

 Now, \begin{align*}
    R'(z_1,z_0, x_1,\dots, x_m, y_1,\dots, y_m)&\\ 
    \iff &(x_1,\dots, x_m)\succeq (y_1,\dots, y_m)  \\
    \iff &\sum_i u_i(x_i)\geq \sum_i u_i(y_i) \\
    \iff &\sum_i u_i(x_i)+\sum_i -u_i(y_i) \geq 0 \\ 
    \iff \sum_{i=1}^m g_{0i}(z_1,x_i)& + \sum_{i=m+1}^n g_{0i}(z_1,y_i) \geq \sum_{i=1}^m g_{0i}(z_0,x_i) + \sum_{i=m+1}^n g_{0i}(z_0,y_i)
\end{align*}

Similarly, we have 
\begin{align*}
    R'(z_0,z_1, x_1,\dots, x_m, y_1,\dots, y_m)&\\ 
    \iff &(y_1,\dots, y_m)\succeq (x_1,\dots, x_m)  \\
    \iff &\sum_i u_i(y_i)\geq \sum_i u_i(x_i) \\
    \iff &\sum_i u_i(x_i)+\sum_i -u_i(y_i) \leq 0 \\ 
    \iff \sum_{i=1}^m g_{0i}(z_1,x_i)& + \sum_{i=m+1}^n g_{0i}(z_1,y_i) \leq \sum_{i=1}^m g_{0i}(z_0,x_i) + \sum_{i=m+1}^n g_{0i}(z_0,y_i)
\end{align*}

so that $R'$ is strategically separable. 

Next, suppose that $R'$ is a model (with domain $X$) of axioms $(A1)-(A7)$ and $A^*$. Let $z_0$ and $z_1$ be the two elements of $S(0)$. Define $\succeq$ on $X^{m}$ by $$(x_1,\dots, x_m)\succeq (y_1,\dots, y_m) \iff R'(z_1,z_0,x_1,\dots, x_m, y_1,\dots, y_m).$$  By assumption, $R'$ is strategically separable so we can find  $g_{0j}$ such that $$R'(a,b, x_1,\dots, x_m, y_1,\dots, y_m) \iff \sum_{i=1}^m g_{0i}(a,x_i) - \sum_{i=m+1}^n g_{0i}(a,y_i) \geq \sum_{i=1}^m g_{0i}(b,x_i) - \sum_{i=m+1}^n g_{0i}(b,y_i)$$ (the minus signs help with the next step and do not change the results)  Rewriting the term on the right, we have 
$$\sum_{i=1}^m \left(g_{0i}(a,x_i) - g_{0i}(b,x_i)\right) \geq \sum_{i=m+1}^n \left(g_{0i}(a,y_i) -  g_{0i}(b,y_i)\right).$$ Let $h_i(x)=g_{0i}(a,x)-g_{0i}(b,x)$ for $i=1,\dots, m$ and let $r_j(y)=g_{0m+j}(a,y)- g_{0j+j}(b,y)$ for $j=1,\dots,m.$ Then this becomes $$\sum_{i=1}^m h_i(x_i) \geq \sum_{i=1}^m r_i(y_i).$$ Putting this together, we have $$(x_1,\dots, x_m)\succeq (y_1,\dots, y_m) \iff \sum_{i=1}^m h_i(x_i) \geq \sum_{i=1}^m r_i(y_i).$$ Unfortunately, this is not yet an additvely separable representation since $h_i\neq r_i$. However, this representation guarantees the existence of an additively separable representation. To see this, suppose we have 
\begin{align*}
    (x_1^1,\dots, x_m^1) &\succeq (y_1^1,\dots, y_m^1) \\
    (x_1^2,\dots, x_m^2) &\succeq (y_1^2,\dots, y_m^2) \\
    \vdots&\hspace{1cm} \vdots \\
    (x_1^q,\dots, x_m^q) &\succeq(y_1^q,\dots, y_m^q) \\
\end{align*}
Where for each $l$ the multiset $\{x_l^1,\dots, x_l^q\}$ is the same as the multiset $\{y_l^1,\dots, y_l^q\}$. The existence of the $h_i$ and $r_i$ above guarantee that all $\succeq$ are in fact $\sim$ since for any $(z_1,\dots, z_m)$ we have $(z_1,\dots, z_m)\sim (z_1,\dots, z_m)$ so that $\sum_{i=1}^m h_i(z_i) = \sum_{i=1}^m r_i(z_i)$. This implies that $\sum_l\sum_i h_i(x_i^l)=\sum_l\sum_i r_i(y_i^l)$ giving the result. 

\section{Mixed Strategies} \label{appendix: mixed strategies}

In this section, we assume that each $S_i$ is separable metric space and that $\vert N \vert= n$. Let $\Delta(S)$ be the set of Borel probability measures on $S=\Pi_{i\in N} S_i$. 
It is well known (see e.g. \citet{grandmont1972continuity} and \citet{ok2023lipschitz}) that if a binary relation $\succeq_i$ on $\Delta(S)$ is complete, transitive, continuous and satisfies independence ($p\succeq_i q$ if and only if $\lambda p + (1-\lambda)r \succeq_i \lambda q + (1-\lambda)r$ for any $p,q,r$ and $\lambda\in [0,1]$) then there is a continuous function $u_i:S\rightarrow \mathbb{R}$ such that for any $p,q\in \Delta(S)$ we have $$p\succeq_i q \iff \int u_i dp \geq \int u_i dq.$$ $u_i$ is called a Bernoulli index representing $\succeq_i$. 
Furthermore, $u_i$ is unique up to positive affine transformations. That is, if $u_i'$ is another Bernoulli index that represents $\succeq_i$ then one can find $a\in \mathbb{R}$ and $b\in \mathbb{R}_{++}$ such that $u_i'=a+b u_i$. In this section, we seek further conditions on $\succeq_i$ so that we can find functions $u_{ij}:S_i\times S_j\rightarrow \mathbb{R}$ so that $\sum_{j\neq i} u_{ij}(s_i,s_j)$ is a Bernoulli index for $\succeq_i$.

For any $p\in \Delta(S)$, and $J\subset N$ we write $p_J$ for the marginal distribution of $p$ on $\Pi_{j\in J}S_j$. 

\begin{Prop}
    Suppose that $\succeq_i$ is complete, transitive, continuous and satisfies independence. The following two conditions are equivalent:
    \begin{enumerate}
        \item There are continuous functions $u_{ij}:S_i\times S_j\rightarrow \mathbb{R}$ so that $\sum_{j\neq i} u_{ij}(s_i,s_j)$ is a Bernoulli index for $\succeq_i$
        \item for any $p,q\in \Delta(S)$ such that $p_{\{i,j\}}=q_{\{i,j\}}$ for all $j$ then $p\sim_i q$.
    \end{enumerate}
\end{Prop}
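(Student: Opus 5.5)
The direction (1) $\Rightarrow$ (2) is immediate. If $U(s)=\sum_{j\neq i}u_{ij}(s_i,s_j)$ is a Bernoulli index, then for any $p$,
$$\int U\,dp=\sum_{j\neq i}\int u_{ij}\,dp_{\{i,j\}},$$
so the expected utility of $p$ depends only on its pairwise marginals. Hence $p_{\{i,j\}}=q_{\{i,j\}}$ for all $j$ gives $\int U\,dp=\int U\,dq$, i.e. $p\sim_i q$. Implicitly we need each $u_{ij}$ to be integrable. Since the $u_{ij}$ are continuous, I would handle this either by restricting to measures under which $U$ is integrable, which the expected utility representation already requires, or by noting it holds automatically for compactly supported measures.

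For (2) $\Rightarrow$ (1), start from the continuous Bernoulli index $u_i$ given by the cited representation result, and show it decomposes additively. Fix a reference profile $\bar s$. The key step is a Hoeffding / inclusion--exclusion style decomposition tested with simple lotteries.

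\emph{Two opponents.} For any $s_i$, opponent actions $a_j,b_j$ for $j$ and $a_k,b_k$ for $k$, and the rest fixed at $\bar s_{-ijk}$, compare the two lotteries
$$p=\tfrac12\delta_{(s_i,a_j,a_k)}+\tfrac12\delta_{(s_i,b_j,b_k)},\qquad q=\tfrac12\delta_{(s_i,a_j,b_k)}+\tfrac12\delta_{(s_i,b_j,a_k)}.$$
These have identical $\{i,\ell\}$ marginals for every $\ell$. Condition (2) then yields
$$u_i(s_i,a_j,a_k,\cdot)+u_i(s_i,b_j,b_k,\cdot)=u_i(s_i,a_j,b_k,\cdot)+u_i(s_i,b_j,a_k,\cdot).$$
This says the mixed second difference of $u_i$ in $(s_j,s_k)$ vanishes for each fixed $s_i$ and any fixed values of the remaining coordinates.

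\emph{Telescoping.} With that identity, telescope coordinate by coordinate. Write
$$u_i(s)=u_i(s_i,\bar s_{-i})+\sum_{j}\big[u_i(s_i,s_1,\dots,s_j,\bar s_{j+1},\dots)-u_i(s_i,s_1,\dots,s_{j-1},\bar s_j,\dots)\big].$$
Repeatedly applying the vanishing-interaction identity, each bracket equals
$$u_i(s_i,s_j,\bar s_{-ij})-u_i(s_i,\bar s_{-i}).$$
Therefore
$$u_i(s)=\sum_{j\neq i}u_i(s_i,s_j,\bar s_{-ij})-(n-2)\,u_i(s_i,\bar s_{-i}).$$
Define $u_{ij}(s_i,s_j)=u_i(s_i,s_j,\bar s_{-ij})$ and absorb the last term into one of them. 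These functions are continuous because $u_i$ is. Their sum equals $u_i$ exactly, so it is a Bernoulli index.

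\emph{Main obstacle.} The obstacle is verifying that the chosen two-point lotteries really have matching pairwise marginals for all $\ell$, including the coordinates held fixed, and that the argument does not need more than degenerate-plus-two-point measures. Both look fine. A subtlety is the case $n=2$ or only one opponent, where the claim is trivial. Another is making sure the interaction identity is applied with the other coordinates at arbitrary, not just reference, values during the telescoping. The identity above already allows arbitrary fixed values of the other coordinates, so this should go through.
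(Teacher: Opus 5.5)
Your argument is correct, but it reaches the decomposition by a different route than the paper. The paper obtains $u_i(s)=\sum_{j\neq i}u_i(s_i,s_j,s^*_{-ij})-(n-2)\,u_i(s_i,s^*_{-i})$ in one step. It applies (2) once, to $p=\frac{1}{n-1}\delta_s+\frac{n-2}{n-1}\delta_{(s_i,s^*_{-i})}$ and $q=\frac{1}{n-1}\sum_{j\neq i}\delta_{(s_i,s_j,s^*_{-ij})}$, whose pairwise marginals coincide. There is no telescoping, but (2) is used for an $(n-1)$-point lottery against a two-point lottery with unequal weights.

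You use (2) only for equal-weight two-point lotteries on a $2\times 2$ rectangle in two opponents' coordinates, and conclude that all mixed second differences vanish. You also note that the remaining coordinates may sit at arbitrary common values; the same marginal check covers this, and it is exactly what your telescoping needs. You then telescope to the same formula. The price is an extra induction. The payoff is a sharper equivalence: (1) and (2) are also equivalent to the restriction of (2) to these four-point comparisons. That restricted condition is the cardinal analogue of ruling out length-$4$ balanced sequences. Your route therefore makes explicit that once expected utility pins down a cardinal index, short cycles suffice, unlike the ordinal setting of Proposition~\ref{prop:no_finite_ax_short}.

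One small repair is needed in the easy direction. Integrability of $U=\sum_{j\neq i}u_{ij}$ does not make each $u_{ij}$ integrable against $p_{\{i,j\}}$; take $u_{ij}=h(s_i)=-u_{ik}$ for two opponents $j,k$ with $h$ unbounded. So restricting to measures under which $U$ is integrable is not quite enough. Instead, note that additivity alone yields $U(s)=\sum_{j\neq i}U(s_i,s_j,\bar s_{-ij})-(n-2)\,U(s_i,\bar s_{-i})$. You can then swap in summands built from $U$, which are bounded whenever $U$ is, as it must be for $\int U\,dp$ to be finite for every Borel $p$.
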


\begin{proof}
    That $(a)$ implies $(b)$ is straightforward. For the other direction, let $u_i$ be a continuous Bernoulli index for $\succeq_i$. Fix some arbitrary $s_{-i}^*$ and arbitrary continuous functions $\bar{u}_j:S_i\rightarrow \mathbb{R}$ for each $j$ such that $$\sum_{j\neq i}\bar{u}_j(s_i)=u_i(s_i,s_{-i}^*).$$ for all $s_i$. Define $$u_{ij}(s_i,s_j)=u_i(s_i,s_j,s_{-ij}^*)-(n-2)\bar{u}_j(s_i).$$ Note that each $u_{ij}$ is continuous. We claim that if $(b)$ holds, then $u(s)=\sum_{j\neq i} u_{ij}(s_i,s_j)$ for all $s\in S$. To see this, fix $s\in S$ and let $p$ be the lottery in $\Delta(S)$ that is $s$ with probability $\frac{1}{n-1}$ and is $(s_i,s_{-i}^*)$ with probability $\frac{n-2}{n-1}$. Let $q$ be the lottery that, for each $j\neq i$, is $(s_i,s_j,s_{-ij}^{*})$ with probability $\frac{1}{n-1}$. Notice that $p_{\{i,j\}}=q_{\{i,j\}}$ for each $j\neq i$ since in both $p$ and $q$, $s_i,s_j$ appear together with probability $\frac{1}{n-1}$ and $s_i,s_j^{*}$ appear together with probability $\frac{n-2}{n-1}$. Therefore, imposing $(b)$, we have $$\frac{1}{n-1}u_i(s)+\frac{n-2}{n-1}u_i(s_i,s_{-i}^*)=  \sum_{j\neq i} \frac{1}{n-1}u_i(s_i,s_j,s_{-ij}^*).$$ Multiplying both sides by $n-1$, substituting $\sum_{j\neq i}\bar{u}_j(s_i)$ for $u_i(s_i,s_{-i}^*)$  and rearranging yields, $$u_i(s)=\sum_{j\neq i}u_i(s_i,s_j,s_{-ij}^*)-(n-2)u_{i}(s_i,s_{-i}^*)=\sum_{j\neq i} u_i(s_i,s_j,s_{-ij}^{*})-(n-2)\bar{u}_j(s_i)=\sum_{j\neq i} u_{ij}(s_i,s_j)$$ as desired.
\end{proof}

\end{document}